\newtheorem*{rep@theorem}{\rep@title}
\newcommand{\newreptheorem}[2]{%
\newenvironment{rep#1}[1]{%
 \def\rep@title{#2 \textbf{\ref{##1}}}%
 \begin{rep@theorem}}%
 {\end{rep@theorem}}}
\newtheorem{lemma}{Lemma}
\newtheorem{thm}[lemma]{Theorem}
\newtheorem{defn}[lemma]{Definition}
\newtheorem{corollary}[lemma]{Corollary}
\newtheorem{remark}[lemma]{Remark}
\newcommand{\topic}[1]{\vspace{0.2cm}\noindent{\bf #1:}}
\newcommand{\e}{\epsilon}
\newcommand{\R}{\mathbb{R}}
\newcommand{\floor}[1]{\left\lfloor #1 \right\rfloor}
\newcommand{\ceil}[1]{\left\lceil #1\right\rceil}
\newcommand{\bigceil}[1]{\left\lceil #1\right\rceil}
\newcommand{\OPT}{\mathsf{OPT}}
\newcommand{\SOL}{\mathsf{SOL}}
\newcommand{\M}{\mathbb{M}}
\newcommand{\calM}{\mathcal{M}}
\newcommand{\calT}{\mathcal{T}}
\newcommand{\EMD}{\mathbf{EMD}}
\newcommand{\supp}{\mathbf{supp}}
\newcommand{\headtree}{\textsf{Tree-Sparsity-Head}}
\newcommand{\tailtree}{\textsf{Tree-Sparsity-Tail}}
\newcommand{\logapp}{ W }
\newcommand{\RS}{RS}
\newcommand{\layer}{part}
\newcommand{\layers}{parts}
\newcommand{\calC}{\mathcal{C}}
\newcommand{\bcalC}{\bar{\mathcal{C}}}
\newcommand{\emd}{\Delta}
\newcommand{\headvalue}{\Phi}
\definecolor{red}{rgb}{0.89, 0.0, 0.13}
\newcommand{\eat}[1]{}
\newcommand{\jian}[1]{{\bf\color{red} Jian: #1}}
\title{Improved Algorithms For Structured Sparse Recovery}
\author[1]{Lingxiao Huang}
\author[1]{Yifei Jin}
\author[1]{Jian Li\thanks{lijian83@mail.tsinghua.edu.cn}}
\author[2]{Haitao Wang\thanks{haitao.wang@usu.edu}}
\affil[1]{Tsinghua University, Beijing 100084, China}
\affil[2]{Utah State University, Logan, UT 84322, USA}
\begin{document}
\maketitle

\begin{abstract}
It is known that
	certain structures of the signal in addition to
	the standard notion of sparsity (called {\em
		structured sparsity}) can improve the sample
	complexity in several compressive sensing applications.
Recently, Hegde et al.~\cite{hegde2015approximation} proposed a framework, called \emph{approximation-tolerant model-based
compressive sensing}, for recovering signals with
structured sparsity.
Their framework requires two oracles, the \emph{head-} and the \emph{tail-approximation projection oracles}. The two
oracles should return approximate solutions in the model which is closest to the query signal. In this paper, we consider two structured
sparsity models and obtain improved projection algorithms.
The first one is the \emph{tree sparsity model}, which captures the
support structure in the wavelet decomposition of piecewise-smooth signals and images.
We propose a linear  time $(1-\e)$-approximation algorithm for head-approximation projection and a linear time $(1+\e)$-approximation algorithm for tail-approximation projection.
The best previous result is an $\tilde{O}(n\log n)$ time
{\em bicriterion} head-approximation (tail-approximation) algorithm
(meaning that their algorithm may return a solution of sparsity larger than $k$)
by Hegde et al~\cite{hegde2014nearly}.
Our result provides an affirmative answer to the open problem mentioned in the survey of Hegde and
Indyk~\cite{hegde2015fast}.
As a corollary, we can recover a constant approximate $k$-sparse signal.
The other is the \emph{Constrained Earth Mover Distance (CEMD) model}, which is useful to model the situation where
the positions of the nonzero coefficients of a signal do not change
significantly as a function of spatial (or temporal) locations. We obtain the first single criterion constant factor
approximation algorithm for the
head-approximation projection~\cite{hegde2015approximation}. The previous best known algorithm is a bicriterion approximation.
Using this result, we can get a faster constant approximation algorithm with fewer measurements
for the recovery problem in CEMD model.

\eat{
Recently, Hegde et al. propose a framework called \emph{approximation-tolerant model-based
compressive sensing}, which reduces the number of measurements through utilizing additional
structure in the signal by \emph{head-} and \emph{tail-approximation projection oracles}. The two
oracles can return approximate solutions for the signal in the model which is closest to the query
signal. In this paper, we consider how to solve the approximation projections in two structured
sparsity models. One is the \emph{Tree Sparsity Model}, which is useful  to depict  the
wavelet decomposition of piecewise-smooth signals and images. We propose an $O(n)$ time
single-criterion $(1+\e)$-approximation algorithm for both head- and tail-approximation projections.
The best prior result achieves an $O(n\log n+k \log^2 n)$ time bicriterion approximation algorithm.
Thus, we give the affirmative answer to the open problem mentioned in the survey of Hegde and
Indyk\cite{hegde2015fast}. As a corollary, we can recover a constant approximate $k$-sparse signal
in nearly linear time.  The other is \emph{Constrained Earth Mover Distance (CEMD) model}. The model
is useful for the signals and images where the positions of the nonzero coefficients do not change
significantly as a function of spatial (or temporal) location. We obtain a single criterion constant
approximation algorithm for head-approximation projection. This is the first single-criterion
algorithm we known. Based on the result, we can get a faster constant approximation algorithm for the
$k$-sparse recovery with fewer measurements.
}

\eat{
Compressive Sensing states that the sparsity of a signal can be exploited to recover it from far
fewer samples than required by the Shannon-Nyquist sampling theorem. It is known that the
asymptotically optimal bound on the number of measurements is $O(k\log(n/k))$. In many cases, data
has additional model structures. Recent research has shown that \emph{model-based compressive sensing} can overcome the lower
bounds by the additional structures. However it  depends on an \emph{exact model projection oracle}
which costs too much time in practice. Hegde et al. propose a framework called \emph{approximation-tolerant model-based
  compressive sensing}, which reduces the exact projection oracle to two oracles with complementary
approximation guarantees which called \emph{head-} and \emph{tail-approximation projection oracles}.
Obtaining the approximation projection oracles is much easier than the exact one. However, two
main barriers arise: 1) Existing model-projection algorithms often obtains a \emph{bicriterion}
approximation. 2) The running time of existing algorithms is not efficient enough. How to design linear time single-criterion
algorithms for model projection oracles is still open.

In this paper, we consider two model projection problems: Tree Sparsity Model and Constrained Earth
Mover Distance (CEMD) model. For tree sparsity model, we propose a single-criterion $(1+\e)$-approximation
algorithm in $O(n)$ time for both head- and tail-approximation projection oracles. The best prior results achieves a
bicriterion approximation algorithm in $O(n\log n+k \log^2 n)$ time. Based on the result,
we can recover a constant approximation $k$-sparse signal in the tree sparsity model with $O(k)$
measurements in $O(n\log n+ $ $k^2\log n $ $\log^2(k\log n)) $ $\log
\frac{|x\|_2}{\|e\|_2})$ time. For the maximization version of the CEMD model-projection problem,
we obtain the single-criterion constant approximation algorithm for head-approximation projection. This
is the first single-criterion algorithm we known. Based on the result, we can recover a constant
approximation $k$-sparse signal  with the Support-EMD at most $2B$ in $O(k\log(B/k))$ measurements.
}
\end{abstract}


\section{Introduction}
\label{sec:intro}

\eat{
Compressive Sensing states that the sparsity of a signal can be exploited to recover it from far
fewer samples than required by the Shannon-Nyquist sampling
theorem~\cite{candes2006robust,donoho2006compressed}. Specifically, \emph{linear measurements} for
the signals is well studied over the last decades~\cite{foucart2013mathematical}. In this setting,
the representation of a signal $x$ is given by $y = Ax$, where $A$ is a $m\times n$ measurement
matrix and $y \in \R^{m}$.  The signal in real world contains noise.
}

We consider the \emph{robust sparse recovery}, an important
problem in compressive sensing.
The goal of robust sparse recovery is to recover a signal from a small number of linear measurements. Specifically, we
call a vector $x\in \R^n$ $k$-sparse if it has at most $k$ non-zero entries.  The support of $x$, denoted by $\supp(x) \subseteq [n]$,
contains the indices corresponding to the nonzero entries in $x$. Given the measurement vector $y=Ax
+ e$, where $A$ is a measurement matrix, $x$ is a $k$-sparse signal  and $e$ is a
noise vector, the goal is to find a signal estimate $\hat{x}$ such that $ \Vert x - \hat{x}\Vert < C\| e\|$ for some constant approximation factor $C>0$.

\eat{
Prior results show that we can recover a $k$-sparse signal
from $m=O(k\log (n/k))$ measurements and it is tight~\cite{do2010lower}.
}

\topic{Structured Sparsity Model} It is well known that in general we need the number of measurements to be $\Omega(k\log (n/k))$ for robust sparse recovery, see \cite{do2010lower,foucart2010gelfand}. In practice, the support of $x$ usually has some
structured constraints, such as tree sparsity and block sparsity, which can help reduce the bound of the number of measurements.

\begin{defn}[Structured Sparsity Model \cite{baraniuk2010model}]
  \label{def:model}
  Let $\mathbb{M}$ be a
  family of supports, i.e., $\mathbb{M} = \{\Omega_1, \Omega_2, $ $\ldots
  ,\Omega_L\}$ where each $\Omega_i \subseteq [n]$. Then the corresponding structured sparsity model
  $\mathcal{M}$ is the set of vectors supported on one of the $\Omega_i$:
  \begin{equation*}
    \mathcal{M} = \{x \in \mathbb{R}^d \mid \mathbf{supp}(x) \subseteq \Omega \text{ for
      some } \Omega  \in \mathbb{M}\}.
  \end{equation*}

\end{defn}

To recover such a structured signal $x$ is called structured sparse recovery. Baraniuk et
al.~\cite{baraniuk2010model} provided a general framework called model-based
compressive sensing. Their
framework depends on a \emph{model projection oracle} which is defined as follows.
\begin{defn}[Model Projection \cite{baraniuk2010model}]
  \label{def:modelproj}
  Let $\M$ be a structured sparsity model. A model projection oracle for $\M$ is an
  algorithm $P(x): \mathbb{R}^n \rightarrow \M$  such that $\Omega^* = P(x) $ and
  \begin{equation*}
    \| x - x_{\Omega^*} \|_p = \min_{\Omega \in \M } \| x - x_{\Omega}\|_p
  \end{equation*}
  where $x_{\Omega}\in \R^n $ is the same as $ x$ on the support $\Omega$, and is zero otherwise.
\end{defn}

Unfortunately, the best known algorithms for many exact model projection oracles are too slow to be used
in practice. Some of the exact model projection oracles are even NP-hard.
Recently, Hegde et al.~\cite{hegde2015approximation} provided a
principled method AM-IHT for recovering
structured sparse signals. Their framework only requires two approximation oracles called the
head- and tail-approximation projection oracles, defined as follows. Let $\Omega^*\in \M$ be the optimal support of the model projection oracle as defined in Definition~\ref{def:modelproj}.

\begin{defn}[Head-Approximation Projection]
  \label{def:head}
  Let $\M$ be a structured sparsity model. A head-approximation oracle for $\M$ is
  an algorithm $H(x):\R^n \rightarrow \M $ such that $H(x) = \Omega$ and
  \begin{equation*}
    \| x_{\Omega} \|_p \geq c_{H} \cdot \| x_{\Omega^*} \|_p
  \end{equation*}
  where $c_H \in (0,1]$ is a  fixed constant.
\end{defn}

\begin{defn}[Tail-Approximation Projection]
  \label{def:tail}
  Let $\M$ be a structured sparsity model. A tail-approximation oracle for $\M$ is
  an algorithm $T(x): \R^n \rightarrow \M$  such that $T(x) = \Omega $ and
  \begin{equation*}
    \|x - x_{\Omega} \|_p \leq c_T \cdot \| x - x_{\Omega^*} \|_p,
  \end{equation*}
  where $ c_T \in [1,\infty) $ is a fixed constant.
\end{defn}

\subsection{Tree Sparsity Model and CEMD Model}
\label{sec:model}

\topic{Tree Sparsity Model}
The tree sparsity model can be used for capturing the support structure of the wavelet decomposition of piecewise-smooth signals and images~\cite{bohanec1994trading, chen2012compressive, hegde2014fast}.
In this model, the coefficients of the signal $x$
are arranged as the nodes of a complete $b$-ary tree $T$ rooted at node $N$, and any feasible solution is a subtree which includes the root of $T$ and is of size $k$.

\begin{defn}[Tree Sparsity Model]
  \label{defn:treesparse}
  Let $T$ be a complete $b$-ary tree with $n$ nodes rooted at node $N$. $\mathbb{T}_k (T) = \{
  \Omega_1, \Omega_2, \ldots,\Omega_L \}$ is the family of supports where each  $\Omega_i$ is a
  subtree of $T$ rooted at  $N$ with  the number of nodes  no more than $k$. We use $\mathbb{T}_k$ instead of $\mathbb{T}_k (T)$ for short. The
  tree-structured sparsity model $\calT_k$ is the set of signals supported on some $\Omega\in \mathbb{T}_k$:
  \begin{equation*}
    \calT_k = \{ x \in \R^{d} \mid \supp(x) \subseteq \Omega \text{ for some } \Omega \in \mathbb{T}_k  \}
  \end{equation*}
\end{defn}

For the tree sparsity model,
the \emph{head-} and
\emph{tail-approximation projection} problems reduces
to the following simple-to-state combinatorial problems:
for the head-approximation,
we want to find a subtree of size $k$
rooted at $N_1$ such that the total weight of the subtree is maximized;
for the tail-approximation,
we want the total weight of the complement of the subtree is minimized.
For convenience, we abbreviate
them as \headtree\ and \tailtree\ respectively. If our solution is a subtree of size at most $k$, we call it a single-criterion solution. Otherwise if our solution is a subtree of size larger than $k$, we call it a bicriterion solution.

\topic{Constrained EMD Model}
The CEMD model, introduced by Schmidt et al~\cite{schmidt2013constrained}, is particularly useful in 2D image compression and denoising~\cite{vaswani2010modified,duarte2005distributed}.
We first introduce the definition of Earth Mover's Distance
(EMD), also known as the Wasserstein metric or Mallows distance~\cite{levina2001earth}.

\begin{defn} [EMD]
  \label{def:emd}
  The EMD of two finite sets $A,B\subset \mathbb{N}$ with $|A|=|B|$ is defined as
  $$
  \EMD(A,B)=\min_{\pi:A\rightarrow B} \sum_{a\in A} |a-\pi(a)|,
  $$
  where $\pi$ ranges over all one-to-one mappings from $A$ to $B$.
\end{defn}

In the CEMD model,  the signal $x\in \R^{n}$ can be interpreted as a matrix $X \in \R^{h\times w}$
with $n=hw$. By this interpretation, the support of $x\in \R^{h\times w}$, denoted by $\supp(x) \subseteq [h]\times [w]$,
contains the indices $(i,j)$ ($i\in [h],j\in [w]$) corresponding to the nonzero entries in $x$.

\eat{
  Note that $\EMD(A,B)$ is the cost of a minimum matching between $A$ and $B$. We consider the case
  where the sets $A$ and $B$ are the \emph{supports} of two exactly $s$-sparse signals. Thus, we have
  that $|A|=|B|=s$. In this case, the EMD measures how far the supported indices move. This notation
  can be generalized to an \emph{ensemble} of sparse signals~\cite{hegde2015approximation}.
}

\begin{defn} [Support-EMD]
  \label{def:semd}
  Consider an $ h \times w$ matrix $X$.
  Let $\Omega\subseteq [h]\times [w]$ be the support of a matrix $X$.  Denote $\Omega_i$ to be the
  support of the column $i$ of $X$. Suppose $ | \Omega_i |= s$ for $i \in [w]$.  Then the EMD of the
  support $\Omega$ (or the support-EMD of $X$) is defined as
  $$
  \EMD[\Omega]=\sum_{i=1}^{w-1} \EMD(\Omega_i, \Omega_{i+1}).
  $$
\end{defn}

\eat{
By the definition of Support-EMD, we naturally consider the following structured sparsity
model. Assume that the signal $x\in \R^{n}$ can be interpreted as a matrix $X \in \R^{h\times w}$
with $n=hw$. The total sparsity of $x$ is $k$ and the sparsity of each column is $s=k/w$.  The support-EMD of $x$ is at most $B$.
\jian{last half sentence not good}
}

Naturally, we have the following structured sparsity model which contains two constraints: 1) each
column is $s(=k/w)$-sparse, 2) the support-EMD is at most $B$.

\begin{defn} [Constrained EMD Model~\cite{schmidt2013constrained}]
  \label{def:cemd}
  Let $\M_{k,B}$ be the family of supports $\{\Omega\subseteq [h]\times [w] \mid \EMD(\Omega)\leq B, \text{ and }
  |\Omega_i|=k/w, \text{ for } i \in[w] \}$. The Constrained EMD (CEMD) model $\calM_{k,B}$ is the set of signals supported on some $\Omega\in \M_{k,B}$:
  \begin{equation*}
    \calM_{k,B} = \{ x \in \R^{d} \mid \supp(x) \subseteq \Omega \text{ for some } \Omega \in \M_{k,B}  \}
  \end{equation*}
\end{defn}

Consider the CEMD model projection problem. If our solution belongs to $\calM_{k,B}$, we call it a single-criterion solution. Otherwise if our solution does not belong to $\calM_{k,B}$, i.e., there exists a column of sparsity larger than $s(=k/w)$ or the support-EMD is larger than $B$, we call it a bicriterion solution.

\subsection{Our Contributions and Techniques}
For both the tree sparsity and CEMD models, we consider the corresponding model-projection problems. We obtain improved approximation algorithms,
which have faster running time, and return single-criterion solutions (rather than bicriterion solutions).
Consequently, combining with the AM-IHT framework \cite{hegde2015approximation},
our results implies better structured sparse recovery algorithms,
in terms of the number of measurements, the sparsity
of the solution, and the running time. We summarize our contributions and main techniques in the following.

\eat{
\begin{defn}[Approximation Criterion]
  We say an approximation model projection algorithm (or sparse recovery algorithm) for $k$-sparse signal  is
  bicriterion approximation if it returns a solution whose size of support  is twice over the  input
  parameter $k$. Moreover, we say an algorithm is  single-criterion if it returns a solution whose size of support is no more than $k$
  strictly.
\end{defn}
\jian{remove this defintion. don't make trivial def}

\eat{
  Note that in model-based compressive sensing, an efficient single-criterion algorithm for the model
  projection problem can reduce the total number of measurements, and decrease the running time of
  structured sparse recovery. Thus, our results are useful for structured sparse recovery and
  model-based compressive sensing.
  \jian{
	a very unprofessional sentence.
	what is the difference between the two things???}
}
}

\topic{Tree Sparsity Model}
\eat{
 We first consider the tree sparsity model.
Our main technical results are summarized as follows and the details are in Section~\ref{sec:tree}.
}
Cartis et
al.~\cite{cartis2013exact} gave an exact tree-sparsity projection algorithm with running time $O(nk)$. For the approximation version, Hegde et al.~\cite{hegde2014fast,hegde2014nearly} proposed bicriterion approximation schemes for both head- and tail-approximation tree-sparsity projection problems with running time $\tilde{O}(n\log n)$. Both algorithms achieve constant approximation ratio and output a tree of size at most $2k$.
In this paper, we provide the first linear time algorithms for
both head- and tail-approximation tree-sparsity projection problems and remove the bicriterion
relaxation.
This provides an affirmative answer to the open problem
in Hegde and Indyk~\cite{hegde2015fast}, which asks
whether there is a nearly-linear time single-criterion approximation algorithm for tree sparsity.

\topic{Main Techniques for Tree Sparsity Model} The bottleneck of previous algorithms is computing exact $(\min, +)$-convolutions. Our main technique is to improve the running time of $(\min, +)$-convolutions. In Section \ref{sec:conv}, we introduce an approach of computing an approximate $(\min, +)$-convolution, called $(\alpha,\beta)$-\RS\ $(\min,+)$-convolution.
Instead of maintaining the whole $(\min, +)$-convolution array, we only compute a sparse sequence to approximately represent the whole array. Taking \tailtree\ as an example, we only need to maintain $\tilde{O}(\log n)$ elements in a single node, instead of $k$ elements for the exact $(\min, +)$-convolution. For the computation time, we show that the running time of computing each convolution element can be reduced to $\tilde{O}(1)$, instead of $O(k)$ for the exact $(\min, +)$-convolution. Thus, we only cost $\tilde{O}(\log n)$ to compute our approximate $(\min, +)$-convolution. For \headtree, we apply a similar approximate $(\max, +)$-convolution technique, called $(\alpha,\beta)$-\RS\ $(\max,+)$-convolution. Our approximate convolution technique may have independent interest.

In Section \ref{sec:tree}, we combine the approximate $(\min, +)$-convolution technique and other approaches such as weight discretization, pruning and the lookup table method. Our results can be summarized by the following theorem.

\eat{Recall that the  \emph{fully polynomial-time
    approximation scheme or FPTAS} for a optimization problem is an algorithm which  takes an instance
  of the problem and a parameter $\e$, in polynomial time in both problem size $n$ and approximation
  ratio $1/\e$, to produce a $(1+\e)$-approximation solution.
  \jian{you already said linear time (1+e) approximation. no need to say FPTAS then.}
}
\eat{
Then, we state  our result formally. W.l.o.g., we assume that $k\geq \log n$.
\footnote{Otherwise, we could reduce the problem to this case by removing the nodes of
  the tree with depth larger than k.}
}

\begin{thm}[Linear time head- and tail-approximation tree-sparsity projection]
  \label{thm:linear}
  There are linear   time algorithms  for both head- and tail-approximation tree-sparsity projection problems.
  Specifically, for any constant
  $\e_1 \in (0,1)$,
  there is an $O(\e_1^{-1} n)$ time approximation algorithm  that returns a support $\hat{\Omega} \in
  \mathbb{T}_k$ satisfying
  \begin{equation*}
    \| x_{\hat{\Omega}}\|_p \geq (1- \e_1) \max_{\Omega \in \mathbb{T}_k} \| x_{\Omega}\|_p.
  \end{equation*}
  For any constant $\e_2 \in (0,\infty)$, there is an $O(n+ \e_2^{-2}n /\log n)$
  time approximation  algorithm that  returns a support $\hat{\Omega} \in \mathbb{T}_k$ satisfying
  \begin{equation*}
    \| x - x_{\hat{\Omega}} \|_{p} \leq (1+\e_2) \min_{\Omega \in \mathbb{T}_k} \| x
    - x_{\Omega} \|_p ,
  \end{equation*}
  if $k \leq n^{1-\delta}$ ($\delta \in(0,1)$ is any fixed constant), and there is an $O( \e_2^{-1}n(\log\log\log n)^2 )$ time algorithm for general $k$.
\end{thm}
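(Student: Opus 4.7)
The plan is to solve both the head- and tail-tree-sparsity projection problems via a bottom-up dynamic program on the input tree $T$, and to bring its running time down to linear by replacing both the dynamic programming arrays and the $(\min,+)/(\max,+)$ convolutions used to combine children with the approximate, sparse versions developed in Section~\ref{sec:conv}. In the exact DP, each node $v$ stores an array $A_v[1\mathinner{\ldotp\ldotp}k]$ where $A_v[j]$ is the best objective (maximum weight for head, minimum complement weight for tail) over subtrees of $v$'s subtree that contain $v$ and have exactly $j$ nodes; combining two children requires an exact $(\min,+)$- or $(\max,+)$-convolution of two length-$k$ arrays, which is the source of the $O(nk)$ running time in Cartis et al.

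The first step is to discretize the input weights so that every DP array takes only $\tilde{O}(\log n)$ distinct values: after scaling the largest weight to $1$, I round all weights below a threshold of order $\epsilon / n$ down to $0$ (their total contribution affects the objective by at most a $(1 \pm \epsilon)$ factor), and round the surviving weights to the nearest power of $1 + \epsilon/\log n$. Combined with the monotonicity of $A_v[\cdot]$, this lets me represent each $A_v$ by only $\tilde{O}(\log n)$ ``breakpoints,'' namely the pairs $(j, A_v[j])$ at which the array jumps between successive discretized levels; this is exactly the $(\alpha,\beta)$-\RS\ representation of Section~\ref{sec:conv}.

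Second, I would execute the DP bottom-up using the $(\alpha,\beta)$-\RS\ $(\max,+)$-convolution for \headtree\ and the $(\alpha,\beta)$-\RS\ $(\min,+)$-convolution for \tailtree. Because each convolution of two $\tilde{O}(\log n)$-sized sparse arrays costs only $\tilde{O}(\log n)$ rather than $\Omega(k)$, the total DP work is already $\tilde{O}(n)$. To match the bounds claimed in the theorem I would then (i) prune every subtree whose total weight falls below a threshold of order $\epsilon \|x\|_p / n$, keeping only $O(k/\epsilon)$ heavy leaves; (ii) precompute, via a lookup table keyed by the shape and discretized weight pattern of small subtrees of size roughly $\log n / \log\log n$, the entire \RS\ arrays of all such subtrees at once, thereby amortizing away the polylogarithmic factor on the lightest levels of the tree; and (iii) pick the per-convolution error parameter so that compounding errors along any root-to-leaf path of the DP still yield a final $(1\pm\epsilon)$ ratio. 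For the head problem this yields the clean $O(\epsilon_1^{-1} n)$ bound; for the tail problem the two-sided error analysis forces an $\epsilon_2^{-2}$ dependence, giving $O(n + \epsilon_2^{-2} n/\log n)$ when $k \le n^{1-\delta}$ (where a sufficiently small lookup-table block size exists). For general $k$, close to $n$, no such block size is guaranteed, and I would fall back on a slightly slower convolution primitive incurring an $(\log\log\log n)^2$ overhead per breakpoint, yielding the $O(\epsilon_2^{-1} n (\log\log\log n)^2)$ bound.

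The main obstacle I expect is the error analysis under composition: each approximate convolution introduces a $(1\pm\epsilon')$ multiplicative error on the value axis and a small additive error on the size axis, and the DP composes up to $\log_b n$ convolutions along each root-to-leaf path. I will need to calibrate $\epsilon'$ as a function of the depth so that the telescoped error is still $1\pm\epsilon$, while simultaneously proving that the \RS\ representations remain of size $\tilde{O}(\log n)$ after this many compositions (otherwise the total work would no longer be linear). A secondary technical point is the tail objective's dependence on mass \emph{outside} the chosen subtree, which forces me to track both the in-subtree weight and its complement in the DP state; showing that the \RS\ representation and its convolution still apply to this two-coordinate formulation is the crux of the tail case.
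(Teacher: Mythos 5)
Your overall architecture (weight discretization, \RS-compressed DP arrays, approximate $(\min,+)/(\max,+)$-convolutions, a lookup table at the lowest levels, pruning, and depth-dependent error parameters) is the same as the paper's, but two of the steps as you describe them do not work. First, your discretization is broken for \tailtree: you normalize by the largest weight and round everything below $\e/n$ (in that scale) down to zero, claiming the objective changes by at most a $(1\pm\e)$ factor. For the tail objective this is false --- $\min_{\Omega}\|x-x_\Omega\|_p$ can be zero or arbitrarily small compared to $\|x\|_\infty$, so discarding total mass of order $\e\|x\|_\infty$ can change the answer by an unbounded multiplicative factor. The paper first computes a crude $O(\log n)$-factor approximation $W$ of the tail optimum (Lemma~\ref{lm:taillog}) and then discretizes by \emph{rounding up} to multiples of roughly $\e W/(n\log n)$, so zeros stay zero and the total additive error is at most $\e\cdot\OPT$; an analogous coarse $O(\log n)$-approximation is used for the head. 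Some such preliminary estimate of the optimum is needed before you may threshold.

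Second, you misattribute the source of each of the three running-time bounds, and the mechanisms you name would not produce them. The $O(\e_1^{-1}n)$ head bound is not what the generic framework yields: the middle levels (between the lookup-table level $\xi$ and the level $\eta$ where the \RS\ arrays become short) cost $\Theta(\e^{-1}n(\log\log\log n)^2)$ generically, and the paper removes this for \headtree\ only by a second, head-specific re-discretization inside each level-$\eta$ subtree, normalizing by that subtree's maximum weight and exploiting the fact that the optimal head value at sparsity at least $2\log\log n$ is at least that maximum --- a property with no tail analogue. The restriction $k\le n^{1-\delta}$ for the linear-time tail bound has nothing to do with lookup-table block sizes (the table is keyed on discretized weight patterns of small subtrees and is oblivious to $k$); it is what makes the pruning argument valid: the optimal subtree can intersect at most $n^{1-\delta}$ of the level-$\Theta(\log\log n)$ subtrees, so all but the $O(n^{1-\delta}/\e)$ heaviest may be deleted at a cost of $\e\cdot\OPT$, after which exact convolutions are affordable below that level. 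And the $(\log\log\log n)^2$ factor for general $k$ is the cost of the $\Theta(\log\log\log n)$ middle levels each run with error parameter $\e/\Theta(\log\log\log n)$, not a per-breakpoint overhead of a slower convolution primitive. (Your worry about needing a two-coordinate state for the tail is a non-issue: indexing the array by the number of \emph{excluded} nodes and storing the excluded weight composes under $(\min,+)$-convolution directly, which is exactly what the paper does.)
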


Then combining with prior  results~\cite{baraniuk2010model,hegde2014nearly,hegde2015approximation}, we  provide
a more efficient robust sparse recovery algorithm in tree sparsity model as follows. The best prior
result can recover an approximate signal $\hat{x} \in \calT_{ck}$ for some constant
$c >1$~\cite{hegde2014nearly}
(i.e.,
the sparsity of their solution is $ck$). In this paper, we improve the constant $c$ to $1$.

\begin{corollary}
  \label{thm:tree}
   Assume that $k \leq n^{1-\delta}$ ($\delta \in(0,1)$ is any fixed constant). Let $A\in \R^{m\times n}$ be a measurement matrix. Let $x \in \calT_{k}$ be an arbitrary signal in
  the tree sparsity model with dimension $n$, and let $y=Ax+e\in \R^m$ be a noisy measurement
  vector. Here $e\in \R^m$ is a noise vector. Then there exists an algorithm to recover a signal
  approximation $\hat{x}\in \calT_{k}$ satisfying $\|x-\hat{x}\|\leq C\|e\|_2$ for some constant $C$ from
  $m=O(k)$ measurements. Moreover, the algorithm runs in $O( (n\log n+k^2\log n \log^2(k\log n))\log
  \frac{\|x\|_2}{\|e\|_2})$ time.
\end{corollary}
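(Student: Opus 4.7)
The plan is to combine three existing ingredients: the head- and tail-approximation oracles of Theorem~\ref{thm:linear}, the AM-IHT recovery framework of \cite{hegde2015approximation}, and an existing construction of a measurement matrix satisfying the model-RIP for $\calT_k$ with $m=O(k)$ rows (from \cite{baraniuk2010model, hegde2014nearly}). Plugging the new oracles into AM-IHT immediately yields a structured sparse recovery procedure; the whole argument is then a careful accounting of approximation constants and per-iteration cost.

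First I would pick absolute constants $\e_1, \e_2 \in (0,1)$ small enough that the approximation ratios $c_H = 1-\e_1$ and $c_T = 1+\e_2$ delivered by Theorem~\ref{thm:linear} satisfy the quantitative hypothesis in the AM-IHT convergence theorem of \cite{hegde2015approximation} (which requires $c_H$ close enough to $1$, or equivalently $c_T$ close enough to $1$, relative to the model-RIP constant of $A$). This guarantees that AM-IHT contracts the error by a constant factor per iteration, so $T=O(\log(\|x\|_2/\|e\|_2))$ iterations suffice to reach $\|x-\hat x\|_2 \leq C\|e\|_2$. Crucially, because the tail oracle from Theorem~\ref{thm:linear} returns a support of size \emph{at most} $k$ (single-criterion), the final estimate $\hat x$ lies in $\calT_k$ rather than in $\calT_{ck}$ as in the previous bicriterion algorithm of \cite{hegde2014nearly}; this is what drops the sparsity constant from $c$ down to $1$.

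Next I would use a model-RIP matrix $A$ for $\calT_k$ with $m=O(k)$ rows, picked so that $Av$ and $A^T u$ can each be computed in $O(n\log n)$ time, which is available from \cite{baraniuk2010model, hegde2014nearly}. Per iteration, AM-IHT performs $O(1)$ matrix-vector products with $A$ or $A^T$ (contributing $O(n\log n)$), one head and one tail projection (each linear in $n$ by Theorem~\ref{thm:linear}, using the assumption $k \leq n^{1-\delta}$), and a least-squares refinement on the current tree support of size $k$ (contributing the $k^2\log n\,\log^2(k\log n)$ term via an iterative solver such as Richardson/CG restricted to the support, which inherits the model-RIP condition number). Multiplying by the iteration count $T$ yields the stated total running time.

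The only delicate point is the first step: verifying that the approximation ratios of Theorem~\ref{thm:linear} can be tuned to lie strictly inside the window demanded by AM-IHT's convergence condition once the model-RIP constant of the chosen measurement matrix is fixed. Because both ratios in Theorem~\ref{thm:linear} can be driven to $1$ at only constant-factor cost in running time, this is straightforward; everything after it is routine bookkeeping, and the corollary then follows by direct composition of the three ingredients.
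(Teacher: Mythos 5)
Your proposal takes essentially the same route as the paper: the paper's proof simply invokes Theorem~3 of \cite{hegde2014nearly} (AM-IHT run with a model-RIP matrix of $m=O(k)$ rows) and observes that replacing the bicriterion tail oracle by the single-criterion one from Theorem~\ref{thm:linear} is exactly what upgrades the output from $\calT_{ck}$ to $\calT_{k}$. Your additional bookkeeping of the AM-IHT convergence condition and the per-iteration cost is correct in outline and merely fills in details the paper leaves implicit by citation.
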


\eat{

\topic{Outline of Our Techniques}
We achieve the above results by several techniques, such as dynamic programming,
\RS\ $(\min,+)$-convolution and lookup table method. The exact tree projection algorithm
in~\cite{cartis2013exact} is based on $(\min,+)$-convolution.
However, it takes $O(n^2)$ time for each node.
In the paper, we deal with this problem by maintaining  a new data
structure called \RS\ $(\min,+)$-convolution. We prove that the \RS\ $(\min,+)$-convolution is a good
approximation for $(\min, +)$-convolution.
In the beginning, we discretize the weight of nodes such that the range of
the new weight is  bounded. We prove that the optimal solution of the new weight is a good
approximation for the original problem. Then, we partition the tree into different \layers. In the lower \layer, we use a classical technique called
lookup table method. For each node in the upper \layer, we compute  \RS\
$(\min,+)$-convolution.  Benefiting from the new weight, we can compute all of them in linear time.
\jian{we can probably remove the paragraph as there is not much new here...unfortunately.}
}

\topic{CEMD Model} In Section \ref{sec:emd}, we consider the CEMD model $\calM_{k,B}$ and propose the first single-criterion
constant factor approximation algorithm for
the head-approximation oracle.

\eat{
At first, we use a flow technique in
\cite{hegde2015approximation}. Then we obtain a single-criterion solution via some nice structured
properties. The details can be found in Section~\ref{sec:emd}. We summarize our results as follows.
}

\begin{thm}
  \label{thm:emdcon}
  Consider the CEMD model $\calM_{k, B}$ with $s=k/w$ sparse for each column and support-EMD $B$.
  Let $\delta \in (0,1/4)$, $x_{\min}=\min_{|X_{i,j}|>0} |X_{i,j}|^{p}$, and $x_{\max}=\max
  |X_{i,j}|^{p}$. Let $c=1/4 -\delta$. There exists an algorithm running in
  $O(shn \log\frac{n}{\delta}+\log \frac{x_{\max}}{x_{\min}})$ time, which returns a single-criterion
  $c^{1/p}$ approximation for the  head-approximation projection problem. 
\end{thm}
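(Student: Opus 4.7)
The plan is to build on the min-cost-flow head-projection of Hegde et al.~\cite{hegde2015approximation}, which only yields a bicriterion output, and upgrade it to a single-criterion one via Lagrangian relaxation plus a cut-and-duplicate rounding. First, recast the head-projection as a max-reward min-cost integer flow on the standard layered DAG: a super-source $u$, super-sink $t$, a vertex $(i,j)$ per matrix cell with reward $|X_{i,j}|^p$, capacity-$s$ arcs from $u$ into every vertex of column~$1$ and from every vertex of column~$w$ into $t$, and unit-capacity arcs $(i,a)\to(i{+}1,b)$ of cost $|a-b|$. Any integer $ut$-flow of value $k$ is in bijection with a support whose per-column sparsity is exactly $s=k/w$ and whose total arc-cost equals its support-EMD, so the head-projection becomes: maximize the vertex-reward subject to total arc-cost~$\le B$.

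Relaxing the budget with a multiplier $\lambda\ge 0$ turns each fixed-$\lambda$ instance into a pure min-cost max-reward flow, solvable in $O(shn)$ time by $s$ successive-shortest-paths augmentations through the layered DAG. A binary search on $\lambda$ with $O(\log(n/\delta))$ iterations, after an $O(\log(x_{\max}/x_{\min}))$-step range reduction, produces a bracketing pair $(f_-,f_+)$ with $\mathrm{cost}(f_-)\le B<\mathrm{cost}(f_+)\le 2B$; the upper bound on $\mathrm{cost}(f_+)$ is enforced by terminating one augmentation short of the crossing, since each augmenting path contributes at most $O(h)$ to the cost and a suitably fine $\lambda$-grid makes the last path contribute at most $B$. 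Standard Lagrangian weak duality for the budgeted flow LP then gives $\max\{\mathrm{rew}(f_-),\mathrm{rew}(f_+)\}\ge(1-O(\delta))\,\OPT$.

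The main new step is to turn $f_+$ into a feasible support in $\M_{k,B}$. Let $\Omega^+=(\Omega^+_1,\ldots,\Omega^+_w)$ be its column-supports and $d_i=\EMD(\Omega^+_i,\Omega^+_{i+1})$. Because $\sum_i d_i\le 2B$, there is a pivot column $c$ with both $\sum_{i<c}d_i\le B$ and $\sum_{i\ge c}d_i\le B$. Define a \emph{left} candidate by keeping $\Omega^+_i$ on columns $[1,c]$ and setting $\Omega_i=\Omega^+_c$ on columns $(c,w]$, and a \emph{right} candidate symmetrically. Both candidates have per-column sparsity exactly $s$ (since every $\Omega^+_i$ does) and EMD at most $B$ (duplicated columns add $0$ EMD), so both lie in $\M_{k,B}$. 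A pigeonhole on the reward on the two sides of column~$c$ gives that one of them inherits at least $\mathrm{rew}(f_+)/2$; returning the best of $f_-$, the left, and the right candidate therefore yields a support in $\M_{k,B}$ with reward at least $(1/4-\delta)\cdot\OPT$. Here the two factors of $1/2$ arise separately, from the integrality gap of the budgeted min-cost flow LP (handled by the Lagrangian bracketing) and from the duplicate-column rounding. Taking the $1/p$-th root yields $c_H=(1/4-\delta)^{1/p}$.

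The running time is $O(shn)$ per Lagrangian solve, multiplied by the $O(\log(n/\delta))$ binary-search iterations, plus the initial $O(\log(x_{\max}/x_{\min}))$ range reduction, matching the stated bound. The main obstacle I expect is Step~3: ensuring $\mathrm{cost}(f_+)\le 2B$ truly by tracking how successive-shortest-paths evolves as $\lambda$ decreases (so that the crossing of the budget happens in one bounded-cost jump rather than a large jump), and verifying that replacing columns $(c,w]$ by copies of $\Omega^+_c$ adds exactly $0$ EMD and leaves per-column sparsity intact. These combinatorial details drive the final $1/4-\delta$ constant; without the $2B$ cost bound on $f_+$, the chopping would need more than two contiguous pieces and would lose a super-constant factor.
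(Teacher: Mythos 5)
Your overall skeleton (Lagrangian flow bracketing followed by a combinatorial rounding of the infeasible bracket) matches the paper, but there is a genuine gap at the step you yourself flag as the main obstacle: the bound $\mathrm{cost}(f_+)\le 2B$. The binary search on $\lambda$ makes the \emph{multiplier} gap $l-r$ tiny, but it gives no control over how much the EMD of the optimal flow jumps as $\lambda$ crosses a breakpoint: $\emd[\Omega_\lambda]$ is a step function of $\lambda$, and at a single breakpoint the optimal support can reroute all $s$ paths at once, jumping from EMD $\le B$ to EMD $\Theta(swh)\gg B$. Your proposed fix --- ``terminating one augmentation short of the crossing'' --- conflates the successive-shortest-path augmentations used to solve one fixed-$\lambda$ instance with the transition between the optima at $\lambda=l$ and $\lambda=r$; the former are not a monotone refinement of the latter, and in any case a single source-to-sink path already incurs EMD up to $(w-1)(h-1)$, not $O(h)$, so even ``one path'' can blow past $B$. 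This is exactly why the paper's Theorem~\ref{thm:flownet} asserts only $\emd[\Omega_r]\ge B$ with \emph{no} upper bound, and why its Lemma~\ref{lm:pathdecom} is stated for an arbitrary integer $d=\lfloor\emd[\Omega_r]/B\rfloor$: the rounding must work when $\emd[\Omega_r]$ is many multiples of $B$. The lost factor $2(d+1)$ for large $d$ is then rescued not by a cost bound but by the case analysis of Lemma~\ref{lm:emdcon}: since $\headvalue[\Omega_r]-r\cdot\emd[\Omega_r]\ge 0$, a large $d$ forces $\headvalue[\Omega_r]\ge rdB$, so either the rounded $\Omega_r'$ already captures $\OPT/4$, or $\OPT>4rB/3$ and then $\headvalue[\Omega_l]\ge\OPT-lB\ge(1/4-\delta)\OPT$. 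Without either the $2B$ bound or this case analysis, your argument does not close.

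Your rounding itself is a genuinely different (and attractive) alternative to the paper's. The paper cuts each of the $s$ paths \emph{horizontally} at a row $L_t$ and recurses, losing $2d$; you cut \emph{vertically} at a pivot column and duplicate the boundary column's support, which preserves per-column sparsity, adds zero EMD on the duplicated side, and loses only a factor equal to the number of pieces. (Minor fix needed: as stated, a single crossing edge with $d_{i}=2B$ admits no pivot satisfying both of your prefix conditions; you must cut \emph{between} columns and drop the crossing edge $d_{c}$, choosing $c$ as the first index with $\sum_{i\le c}d_i>B$, so that both blocks have internal EMD $\le B$.) If you generalize this to a greedy partition into at most $d+1$ contiguous column blocks each of internal EMD at most $B$ --- which follows since each dropped edge plus its preceding block exceeds $B$ --- and then run the paper's case analysis on $d$, you recover the theorem and in fact improve the decomposition loss from $2(d+1)$ to $d+1$. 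As written, however, the proof rests on an unobtainable cost bound and is incomplete.
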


Combining with AM-IHT framework~\cite{hegde2015approximation},
	we obtain the following corollary which improves the prior
result~\cite{hegde2015approximation} in two aspects:
1) We decrease
the total number of measurements from $m=O(k\log (\frac{B}{k}\log \frac{k}{w}))$ to $m=O(k\log
(B/k))$. 2) We decrease the running time of the robust sparse recovery from $O(n\log
  \frac{\|x\|_2}{\|e\|_2}(k\log n+\frac{kh}{w}(B+\log n+\log \frac{x_{\max}}{x_{\min}})))$ to $O(n\log
  \frac{\|x\|_2}{\|e\|_2}(k\log n+\frac{kh}{w}(\log n+\log \frac{x_{\max}}{x_{\min}})))$.

\begin{corollary}
  \label{thm:CEMD}
  Let $A\in \R^{m\times n}$ be a measurement matrix. Let $x\in \calM_{k,B}$ be an arbitrary signal in
  the CEMD model with dimension $n=wh$, and let $y=Ax+e\in \R^m$ be a noisy measurement vector. Here
  $e\in \R^m$ is a noise vector. Then there exists an algorithm to recover a signal approximation
  $\hat{x}\in \calM_{k, 2B}$ satisfying $\|x-\hat{x}\|\leq C\|e\|_2$ for some constant $C$ from
  $m=O(k\log (B/k))$ measurements. Moreover, the algorithm runs in $O(n\log
  \frac{\|x\|_2}{\|e\|_2}(k\log n+\frac{kh}{w}(\log n+\log \frac{x_{\max}}{x_{\min}})))$ time, where
  $x_{\max}=\max |x_i|$ and $x_{\min}=\min_{|x_i|>0} |x_i|$.
\end{corollary}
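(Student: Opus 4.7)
The plan is to instantiate the AM-IHT framework of Hegde et al.~\cite{hegde2015approximation} with two ingredients: (i) the single-criterion head-approximation oracle for $\calM_{k,B}$ supplied by Theorem~\ref{thm:emdcon}, and (ii) an existing tail-approximation oracle for the CEMD model from prior work. Since AM-IHT is generic in the choice of oracles, correctness and the error guarantee $\|x-\hat{x}\|\le C\|e\|_2$ follow once one verifies that the constants $c_H = c^{1/p}$ (with $c=1/4-\delta$) and the tail constant $c_T$ fall within the regime required by the framework's convergence analysis; this is just checking the explicit inequality relating $c_H$ and $c_T$ from \cite{hegde2015approximation} for a suitable small $\delta$. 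The output lies in $\calM_{k,2B}$ because the tail-approximation oracle available in prior work returns supports in $\calM_{k,2B}$ rather than $\calM_{k,B}$.

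The next step is to account for the number of measurements. AM-IHT requires that $A$ satisfy model-RIP with respect to the \emph{sum} of the models produced by the head and tail oracles (roughly $\calM_{k,B}+\calM_{k,2B} \subseteq \calM_{O(k),O(B)}$). The crucial gain over \cite{hegde2015approximation} is that our head oracle is single-criterion: its output already lies in $\calM_{k,B}$, so no extra blow-up in sparsity or in the EMD budget is incurred on the head side. Applying the counting bound for CEMD models from \cite{schmidt2013constrained} to this smaller composite model gives $m = O(k\log(B/k))$ measurements, improving the prior $O(k\log(\tfrac{B}{k}\log\tfrac{k}{w}))$ bound, which carried an extra $\log(k/w)$ factor coming from the bicriterion head oracle.

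For the running time, AM-IHT performs $O(\log(\|x\|_2/\|e\|_2))$ iterations to drive the residual below $C\|e\|_2$. Each iteration does a matrix--vector multiplication with $A$ and $A^\top$ in $O(nk\log n)$ time (using the sparse measurement matrices that achieve the model-RIP bound above), one tail-approximation call, and one head-approximation call. Plugging in the $O(shn\log(n/\delta)+\log(x_{\max}/x_{\min}))=O(\tfrac{kh}{w}(\log n + \log(x_{\max}/x_{\min})))$ bound of Theorem~\ref{thm:emdcon} for the head oracle, and the comparable running time of the tail oracle from \cite{hegde2015approximation}, yields the stated per-iteration cost of $O(k\log n + \tfrac{kh}{w}(\log n + \log(x_{\max}/x_{\min})))$ per coordinate of $x$; multiplying by $n$ for the matrix--vector step and by the iteration count gives the claimed total.

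The main obstacle I expect is bookkeeping around the head--tail constant compatibility and the exact composite model whose RIP is needed. AM-IHT's convergence lemma requires a quantitative inequality between $c_H$ and $c_T$; since our $c_H = (1/4-\delta)^{1/p}$ is a fixed constant strictly less than $1$, one must choose $\delta$ sufficiently small and, if necessary, boost the head constant by a standard iteration trick (running the head oracle a constant number of times on residuals) as used in \cite{hegde2015approximation}. Once the constants fit, the two improvements advertised in the statement, namely the $\log(k/w)$ saving in measurements and the removal of the additive $B$ term in the per-iteration running time, follow immediately from the single-criterion head-approximation property and its fast running time.
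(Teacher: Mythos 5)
Your proposal is correct and follows essentially the same route as the paper: both plug the single-criterion head oracle of Theorem~\ref{thm:emdcon} into the AM-IHT framework (Theorem 37 of \cite{hegde2015approximation}) with the existing tail oracle, obtaining the measurement saving because the head oracle's EMD blow-up factor $\gamma$ drops from $O(\log(k/w))$ to $1$ in the bound $m=O(k\log(\gamma B/k))$, and the running-time saving because the head oracle's per-iteration cost now matches the tail oracle's instead of carrying the additive $B$ term. The paper's proof is terser and simply cites the prior theorem with these two deltas, whereas you additionally spell out the head/tail constant compatibility and composite-model RIP bookkeeping, but this is elaboration rather than a different argument.
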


\subsection{Related Work}

In the tree sparsity model, there is an algorithm with running time  $O(nk\log n)$ for the exact model projection  by dynamic programming.
By using a more careful analysis, Cartis et
al.~\cite{cartis2013exact} improved exact model projection to $O(nk)$.
Actually, their dynamic program was
based on computing $(\min, +)$-convolutions. The naive algorithm for computing the $(\min, +)$-convolution of arbitrary two
length-$n$ arrays requires $O(n^2)$ time. Williams
proposed an improvement algorithm for computing $(\min, +)$-convolutions, and reduced the running time to
$O(n^2/2^{\Omega(\sqrt{\log n})})$ \cite{williams2014faster}.

For the approximation
projection problem, several heuristic algorithms had been proposed, such as CSSA~\cite{baraniuk1993signal},
CPRSS~\cite{donoho1997cart}, optimal-pruning\cite{bohanec1994trading}.
Hegde et al.~\cite{hegde2014fast,hegde2014nearly} improved the running time of tree sparse recovery to
$\tilde{O}(n\log n)$.

Schmidt et al.~\cite{schmidt2013constrained} introduced the Constrained Earth Mover's Distance
(CEMD) model. Hegde, Indyk and Schmidt~\cite{hegde2015approximation} proposed \emph{bicriterion} approximation algorithms
for both head- and tail-approximation projections. How to find a single-criterion approximation algorithm is
an open problem mentioned in the survey~\cite{hegde2015fast}.

Other structured sparsity models also have been studied by researchers. Huang et al. \cite{huang2011learning} first considered the graph sparsity model, and provided a head-approximation algorithm with a time complexity of $O(n^c)$, where $c>1$ is a trade-off constant between time and sample complexity. For the tail-approximation projection problem, Hedge et al. \cite{hegde2015nearly} proposed a nearly-linear time bicriterion algorithm with the tail-approximation guarantee by modifying the GW scheme \cite{goemans1995general}. Hedge et al. \cite{hegde2009compressive} also studied the $\triangle$-separated model and provided an exact model projection algorithm.

Very recently during SODA17 conference, we knew that, in parallel to our work, Backurs et al. \cite{backurs2017better} also provided 
single criteria algorithms for the tree sparsity problem. Their algorithms can handle for more general trees and run in time $n(\log n)^{O(1)}$. 
Our algorithms only work for b-ary trees, but our running times are much better.  



\section{Approximate $(\min, +)$-Convolution}
\label{sec:conv}

In this section, we introduce an approach of computing an approximate $(\min, +)$-convolution, which is useful for the tree sparsity model.

\subsection{$(\alpha,\beta)$-\RS\ $(\min,+)$-Convolution}
\label{sec:appconv}

We first introduce a concept called ($\min, +$)-convolution.

\begin{defn}[$(\min, +)$-convolution (see e.g.~\cite{bremner2014necklaces,chan2015clustered})]
  \label{def:min+}
  Given two  arrays $A = (a[0], a[1], a[2], $ $\ldots, $$ a[m_1])$ and $B = (b[0], b[1], b[2], \ldots,
  b[m_2])$, their  $(\min,
  +)$-convolution  is the array $S = (s[0], s[1], s[2], \ldots , $ $s[m_1+m_2])$ where $s[t] =
  \min_{i=0}^{t} \{a[i] + b[t-i]\} , t \in  [0, m_1+m_2]$.
\end{defn}

\eat{A similar problem is called $(\max, +)$-convolution. The only difference is that $s_k = \max_i \{a_i + b_{k-i}\}$ in $(\max,
  +)$-convolution. \footnote{Note that these two convolutions are equivalent. Since we can use $A' = (-a_1, -a_2,
    \ldots, -a_m)$ and $B' = (-b_1, -b_2, \ldots, -b_n)$ as the input of $(\max, +)$-convolution, and
    the output $s'_k$ is equal to $-s_k$ as the output of the $(\min, +)$-convolution.}}

We sketch how to use $(\min, +)$-convolutions in the tree sparsity model. Recall that $T_{ij}$ is
the subtree rooted at
$N_{ij}$. We maintain an array $S_{ij}=(s[0], s[1],\ldots, s[|T_{ij}|])$ for each node $N_{ij}$. The element
$s[l]$ represents the optimal tail value for \tailtree\ on $T_{ij}$, i.e., $s[l]=\min_{\Omega\in
  \mathbb{T}_{|T_{ij}|-l} (T_{ij})} $ $ \sum_{N_{i'j'} \in T_{ij} \setminus \Omega} x_{i'j'}$ where
$\mathbb{T}_{k} (T_{ij})$  is the tree sparsity model defined at the tree $T_{ij}$ (see
Definition~\ref{defn:treesparse}). In fact, the array $S_{ij}$
can be achieved through computing the $(\min, +)$-convolution from the arrays of its two
children.
  \footnote{Note that the value of $s[|T_{ij}|]$ can not be directly obtained by the $(\min,
    +)$-convolution of the arrays of its two
children. In fact, $s[|T_{ij}|]=\sum_{N_{i'j'} \in T_{ij}}x_{i'j'}$.}
Finally, we output the element $s[n-k]$ in the array of the root node $N_{\log(n+1), 1}$, which is the optimal
tail value of \tailtree\ on $T$.

However, the running time for computing the exact $(\min, +)$-convolution is too long. Instead, we
compute an approximate $(\min, +)$-convolution for each node. We first introduce some
concepts.

\eat{
  Note that
  we can consider the two arrays $A$ and $B$ as function curves with support $[n]$ and images $A$ and
  $B$ respectively. See Figure~\ref{fig:min+}.  Actually, we need not maintain the curve exactly.
  Instead, maintaining some of the step points of the curve is enough since tolerating small
  inexactness.
}

\begin{defn}[$\alpha$-\RS]
Given a sequence $\hat{A} =
  (\hat{a}[i_1], \hat{a}[i_2], \ldots , \hat{a}[i_m])$ and a fixed constant $\alpha  \in [0,\infty)$.
  Each element $\hat{a}[i_v]\in \hat{A} $ is a real number with an associated
  index $i_v $. If for any $ v \in [1, m-1]$,  $i_{v+1} > i_v, \hat{a}[i_{v+1}] \geq
  (1+\alpha)\hat{a}[ i_v]\geq 0$,  we call the sequence $\hat{A}$ an $\alpha$-representative
  sequence ($\alpha$-\RS).
\end{defn}

\begin{defn}[Completion of $\alpha$-\RS]
  \label{def:extended}
  Consider an $\alpha$-\RS\ $\hat{A} =(\hat{a}[i_1], \hat{a}[i_2] \ldots,  \hat{a}[i_m])$. Define its
  completion by an array $ A' =(a'[0], a'[1], \ldots, a'[i_m])  $ satisfying that: 1) If $0\leq t \leq i_1, a'[t] = a'[i_1]$; 2)  If $i_v+1\leq t \leq i_{v+1} \ (1\leq v\leq m-1), a'[t] = \hat{a}[i_{v+1}]$.
\end{defn}

By the following definition, we show how to use an $\alpha$-\RS\ to approximately represent an array.

\begin{defn}[Sequence  Approximation]
\label{def:seqapp}
  Given two non-decreasing arrays $A' = (a'[0], a'[1], \ldots, $ $ a'[n])$ and $A = (a[0], a[1], \ldots,
  a[n])$, we say $A'$ is an $\alpha$-approximation of  $A$  if for any $i$, $a[i] \leq a'[i]  \leq
  (1+\alpha)a[i] $. We say an $\alpha$-\RS\ $\hat{A}$ approximates an array $A$ if its completion $A'$ is an $\alpha$-approximation of $A$.
\end{defn}

A special case is that we say $A'$ is a $0$-approximation of $A$ if $A'=A$. Figure~\ref{fig:min+} illustrates these concepts. Now, we are ready to introduce the formal definition of the approximate
$(\min,+)$-convolution, called $(\alpha, \beta)$-\RS\ $(\min, +)$-convolution.

\eat{
\begin{figure}[t]
  \centering
  \begin{tikzpicture}
    \draw[<->] (0,4.5) node[left]{value} -- (0,0) -- (9,0) node[below]{$k$} ;
    \draw[dashed] (0,0) -- (1,0) -- (1,0.3) -- (2,0.3) -- (2,0.7) --  (3,0.7) -- (3, 1) -- (4,1) --
    (4, 1.5) -- (5,1.5) -- (5, 2) -- (6, 2) -- (6, 3.2) --(7,3.2) --(7,4) --(8,4)  -- (8,0) ;
    \draw[thick] (0,0) --(1,0) -- (1,0.3 )-- (3 ,0.3) --(3, 1) -- (6 ,1) -- (6 , 3.2) -- (8,3.2);

    \draw[fill] (0,0) circle (0.1)  ;        \node[gray, below ] at (0,0) {$\hat{a}(0)$} ;
    \draw[fill] (1,0.3) circle (0.1)  ;     \node[gray, below ] at (1,0.3) {$\hat{a}(1)$} ;
    \draw[fill] (3,1) circle (0.1)  ;        \node[gray, below ] at (3,1) {$\hat{a}(3)$} ;
    \draw[fill] (6,3.2) circle (0.1) ;      \node[gray, below  ] at (6, 3.2) {$\hat{a}(6)$} ;

    \draw[fill] (2, 0.3) circle (0.05) ;    \node[ gray, below] at (2,0.3) {$a'(2)$} ;
    \draw[fill] (4,1) circle (0.05) ;        \node[gray, below ] at (4,1) {$a'(4)$} ;
    \draw[fill] (5,1) circle (0.05) ;        \node[gray, below ] at (5,1) {$a'(5)$} ;
    \draw[fill] (7,3.2) circle (0.05) ;     \node[gray, below  ] at (7,3.2) {$a'(7)$} ;
    \draw[fill] (8,3.2) circle (0.05) ;     \node[gray, below  ] at (8,3.2) {$a'(8)$} ;

    \draw[fill] (2, 0.7) circle (0.05) ;
    \draw[fill] (4, 1.5) circle (0.05) ;
    \draw[fill] (5, 2) circle (0.05) ;
    \draw[fill] (7, 4) circle (0.05) ;
    \draw[fill] (8, 4) circle (0.05) ;

    \node[gray, above ] at (0,0) {$a(0)$} ;
    \node[gray, above ] at (1 ,0.3) {$a(1)$} ;
    \node[gray, above ] at (2,0.7) {$a(2)$} ;
    \node[gray, above ] at (3,1) {$a(3)$} ;
    \node[gray, above ] at (4,1.5) {$a(4)$} ;
    \node[gray, above ] at (5, 2) {$a(5)$} ;
    \node[gray, above ] at (6,3.2) {$a(6)$} ;
    \node[gray, above ] at (7,4) {$a(7)$} ;
    \node[gray, above ] at (8,4) {$a(8)$} ;

    \node [below] at (8,0) {$n = 8$} ;

  \end{tikzpicture}

  \caption{ The figure illustrates the concepts $\alpha$-\RS\ and its completion. Here, $\hat{A} = (a[0], a[1], a[3],
    a[6])$ is an  $\alpha$-\RS.  The array $(a'[0]=\hat{a}[0], a'[1]=\hat{a}[1], \ldots, a'[8])$  is the completion of $\hat{A}$. By this figure, we can see that the $\alpha$-\RS\ $\hat{A}$ approximates the array $A =  (a[0], a[1], \ldots, a[8])$. }

  \label{fig:min+}
\end{figure}
}

\begin{figure}[t]
  \centering
  \begin{tikzpicture}
    \draw[<->] (0,4.5) node[left]{value} -- (0,0) -- (9,0) node[below]{$k$} ;
    \draw[dashed] (0,0) -- (1,0) -- (1,0.3) -- (2,0.3) -- (2,0.7) --  (3,0.7) -- (3, 1) -- (4,1) --
    (4, 2.2) -- (5,2.2) -- (5, 2.7) -- (6, 2.7) -- (6, 3.2) --(7,3.2) --(8,3.2) --(8,4)  -- (8,0) ;
    \draw[thick] (0,0) --(1,0) -- (1,0.3 )-- (2,0.3)--(2, 1) -- (4 ,1) -- (4 , 4) -- (8,4);

    \draw[fill] (0,0) circle (0.1)  ;        \node[gray, above ] at (0,0) {$a'(0)$} ;
    \draw[fill] (1,0.3) circle (0.1)  ;     \node[gray, above ] at (1,0.3) {$a'(1)$} ;
    \draw[fill] (3,1) circle (0.1)  ;        \node[gray, above ] at (3,1) {$a'(3)$} ;
    \draw[fill] (8,4) circle (0.1) ;     \node[gray, above  ] at (8,4) {$a'(8)$} ;

    \draw[fill] (0,0) circle (0.1)  ;        \node[red, above ] at (0,0.4) {$\hat{a}(0)$} ;
    \draw[fill] (1,0.3) circle (0.1)  ;     \node[red, above ] at (1,0.7) {$\hat{a}(1)$} ;
    \draw[fill] (3,1) circle (0.1)  ;        \node[red, above ] at (3,1.4) {$\hat{a}(3)$} ;
    \draw[fill] (8,4) circle (0.1) ;     \node[red, above  ] at (8,4.4) {$\hat{a}(8)$} ;

    \draw[fill] (2, 1) circle (0.05) ;    \node[ gray, above] at (2,1) {$a'(2)$} ;
    \draw[fill] (4,4) circle (0.05) ;        \node[gray, above ] at (4,4) {$a'(4)$} ;
    \draw[fill] (5,4) circle (0.05) ;        \node[gray, above ] at (5,4) {$a'(5)$} ;
    \draw[fill] (6,4) circle (0.05) ;      \node[gray, above  ] at (6, 4) {$a'(6)$} ;
    \draw[fill] (7,4) circle (0.05) ;     \node[gray, above  ] at (7,4) {$a'(7)$} ;

    \draw[fill] (2, 0.7) circle (0.05) ;
    \draw[fill] (4, 2.2) circle (0.05) ;
    \draw[fill] (5, 2.7) circle (0.05) ;
    \draw[fill] (6, 3.2) circle (0.05) ;
    \draw[fill] (7, 3.2) circle (0.05) ;
    \draw[fill] (8, 4) circle (0.05) ;

    \node[blue, below ] at (0,0) {$a(0)$} ;
    \node[blue, below ] at (1 ,0.3) {$a(1)$} ;
    \node[blue, below ] at (2,0.7) {$a(2)$} ;
    \node[blue, below ] at (3,1) {$a(3)$} ;
    \node[blue, below ] at (4,2.2) {$a(4)$} ;
    \node[blue, below ] at (5, 2.7) {$a(5)$} ;
    \node[blue, below ] at (6,3.2) {$a(6)$} ;
    \node[blue, below ] at (7,3.2) {$a(7)$} ;
    \node[blue, below ] at (8,4) {$a(8)$} ;

    \node [below] at (8,0) {$n = 8$} ;

  \end{tikzpicture}

  \caption{ The figure illustrates the concepts $\alpha$-\RS\ and its completion. Here, $\hat{A} = (\hat{a}[0], \hat{a}[1], \hat{a}[3],\hat{a}[8])$ is an  $\alpha$-\RS.  The array $(a'[0]=\hat{a}[0], a'[1],\ldots, a'[8])$  is the completion of $\hat{A}$. By this figure, we can see that the $\alpha$-\RS\ $\hat{A}$ approximates the array $A =  (a[0], a[1], \ldots, a[8])$. }

  \label{fig:min+}
\end{figure}

\begin{defn}[$(\alpha,\beta)$-\RS\ $(\min,+)$-convolution]
  \label{def:sparsemin} Given two $\alpha$-\RS s   $\hat{A} $  and $ \hat{B} $, suppose $A'$ and $B'$ are
  their completions respectively.  Suppose the array $ S$ is the $(\min,+)$-convolution of $A'$ and $B' $.  We call a sequence $\hat{S}$ an $(\alpha, \beta)$-\RS\ $(\min,
  +)$-convolution of $\hat{A}$ and $\hat{B}$ if $\hat{S}$ is a $\beta$-\RS\ which approximates the array $S$.
\end{defn}

By preserving an $(\alpha,\beta)$-\RS\ $(\min,+)$-convolution instead of an exact $(\min,+)$-convolution, we can reduce the storage space and the computation time.

\subsection{A fast algorithm for $(\alpha,\beta)$-\RS\ $(\min,+)$-convolution}
\label{sec:convalg}

Next, we give a simple algorithm \textsf{RSMinPlus} to compute an $(\alpha,\beta)$-\RS\ $(\min,+)$-convolution of two given $\alpha$-\RS s $\hat{A}$ and $\hat{B}$.

\topic{\textsf{RSMinPlus}$(\alpha, \beta, \hat{A}, \hat{B})$} We first compute the sum of every pair
$(\hat{a}[i_v], \hat{b}[j_t])$ where $\hat{a}[i_v] \in \hat{A}, \hat{b}[j_t] \in \hat{B}$. Define
$
\tilde{s}[l_r] =  \min_{ (i_v, j_t) \in \Phi_{l_r}} \hat{a}[i_v] + \hat{b}[j_t],
$
where  $\Phi_{l_r} = \{(i_v, j_t) \mid i_v+ j_t =  l_r ,  \hat{a}[i_v] \in \hat{A}, \hat{b}[j_t] \in \hat{B}   \} $. Suppose that there are $m$ different elements $\tilde{s}[l_r]$.
Then we sort $\tilde{s}[l_r]$ in the increasing order of the index number $l_r$. After sorting, we obtain a monotone increasing array $\tilde{S} = (\tilde{s}[l_1], \tilde{s}[l_2], \ldots, \tilde{s}[l_m]), l_r
< l_{r+1} \text{ for } r \in [m-1]$. Finally, we construct a $\beta$-\RS\ $\hat{S}$ from $\tilde{S}$
as our solution. Our construction is as follows.
\begin{enumerate}
\item Initially append $\hat{s}[l_m]=\tilde{s}[l_m]$ to $\hat{S}$. Let $\theta = \hat{s}[l_m]/(1+\beta)$.
\item Sequentially consider all elements in $\tilde{S}$ in decreasing order of the index number. If
$\tilde{s}[l_r] \leq \theta$, append $\hat{s}[l_r]=\tilde{s}[l_r]$ to $\hat{S}$. Let $\theta =
\hat{s}[l_r]/(1+\beta)$. Otherwise, ignore
$\tilde{s}[l_r]$ and consider the next element $\tilde{s}[l_{r-1}]\in \tilde{S}$.
\item Return the final sequence $\hat{S}$.
\end{enumerate}

\begin{lemma}
  \label{lm:min+}
  Suppose $\hat{A} = (\hat{a}[i_1], \hat{a}[i_2], \ldots, \hat{a}[i_{m_1}])$ and $\hat{B} =
  (\hat{b}[j_1], \hat{b}[j_2], \ldots,
  \hat{b}[j_{m_2}])$.   Let $m=\max\{m_1, m_2\}$. \textsf{RSMinPlus}$(\alpha, \beta, \hat{A},
  \hat{B})$ computes an $(\alpha,\beta)$-\RS\
  $(\min, +)$-convolution $\hat{S}$ of $\hat{A}$ and $\hat{B}$ in $O(m^2 \log m)$ time.
\end{lemma}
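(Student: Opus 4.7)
The plan is to verify three assertions: (i) $\hat{S}$ is a $\beta$-\RS, (ii) the completion $S''$ of $\hat{S}$ satisfies $S[l] \leq S''[l] \leq (1+\beta) S[l]$ for every $l$, and (iii) the algorithm runs in $O(m^2 \log m)$ time. Parts (i) and (iii) are routine: for (i), the greedy resets $\theta = \hat{s}[l_r]/(1+\beta)$ after each append, so consecutive appended values, read in increasing index order, grow by a factor of at least $1+\beta$; for (iii), enumerating all sums $\tilde{s}[l_r]$ costs $O(m_1 m_2) = O(m^2)$, sorting by index costs $O(m^2 \log m)$, and the final greedy sweep is $O(m^2)$.

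The substance is (ii). The natural first guess---that $\tilde{s}[l_r] = S[l_r]$ at every indexed $l_r$---is false in general, since small three-level examples of $\hat{A}, \hat{B}$ produce a $\tilde{S}$ whose values are not even monotone in the index. The correct auxiliary quantity is
\[
g(l) \;:=\; \min\{\tilde{s}[l_r] : l_r \text{ is an indexed sum and } l_r \geq l\}.
\]
I would first prove $g(l) = S[l]$ by matching both inequalities. For $g(l) \leq S[l]$, take a minimizer $(i^*, j^*)$ of $S[l]$ and let $i_v, j_t$ be the smallest indexed positions in $\hat{A}, \hat{B}$ with $i_v \geq i^*$ and $j_t \geq j^*$; by Definition~\ref{def:extended}, $a'[i^*] = \hat{a}[i_v]$ and $b'[j^*] = \hat{b}[j_t]$, so the indexed pair $(i_v, j_t)$ has sum $\geq l$ and value exactly $S[l]$. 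For $g(l) \geq S[l]$, pick any indexed pair $(i_v, j_t)$ with sum $l_r \geq l$ and let $\Delta = l_r - l$; since $\Delta \leq i_v + j_t$, I can split $\Delta = \Delta_1 + \Delta_2$ with $\Delta_1 \leq i_v$ and $\Delta_2 \leq j_t$, and then $(i_v - \Delta_1, j_t - \Delta_2)$ sums to $l$ with $a' + b'$ value at most $\hat{a}[i_v] + \hat{b}[j_t]$ by the monotonicity of $\hat{A}, \hat{B}$.

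With $g = S$ in hand, the last step is to sandwich $S''[l]$ between $g(l)$ and $(1+\beta) g(l)$. Let $l_{r_i}$ be the smallest picked index $\geq l$, so by Definition~\ref{def:extended}, $S''[l] = \tilde{s}[l_{r_i}]$. The lower bound $S''[l] \geq g(l)$ is automatic since $\tilde{s}[l_{r_i}]$ is one of the terms in the minimum defining $g$. For the upper bound, let $l_{r_*}$ realize $g(l)$. If $l_{r_*}$ was picked, the $\beta$-\RS\ monotonicity of $\hat{S}$ yields $\tilde{s}[l_{r_i}] \leq \tilde{s}[l_{r_*}] = g(l)$. Otherwise, at the moment $l_{r_*}$ was processed the threshold equalled $\theta = \tilde{s}[l_{r_j}]/(1+\beta)$ for the most recently picked $l_{r_j} > l_{r_*}$, and the skip condition $\tilde{s}[l_{r_*}] > \theta$ forces $\tilde{s}[l_{r_j}] < (1+\beta) g(l)$; since $l_{r_i} \leq l_{r_j}$, we also get $\tilde{s}[l_{r_i}] \leq \tilde{s}[l_{r_j}] < (1+\beta) g(l)$. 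The main obstacle is pinning down the right auxiliary quantity $g$---once one realizes that $\tilde{s}$ at a single indexed position does not in general equal $S$ there, and that only the running minimum $g$ does, the rest is straightforward bookkeeping.
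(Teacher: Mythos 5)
Your proof is correct, and it is in fact more careful than the one in the paper. The paper's correctness argument is essentially a single assertion: letting $S$ be the exact $(\min,+)$-convolution of the completions $A'$ and $B'$, it claims that $S$ \emph{is} the completion of $\tilde{S}$ (equivalently, that $\tilde{S}$, sorted by index, is monotone in value). As you observe, this is false in general: take $\hat{A}=(\hat{a}[0]=0,\ \hat{a}[10]=1)$ and $\hat{B}=(\hat{b}[0]=0,\ \hat{b}[1]=100)$; then $\tilde{s}[1]=100$ while $\tilde{s}[10]=1$, whereas $S[1]=a'[1]+b'[0]=1$, so the completion of $\tilde{S}$ at index $1$ equals $100\neq S[1]$. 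The underlying issue is that $S[l]$ may be realized by a pair in which one coordinate is a non-indexed position whose completion value is inherited from a much larger indexed position; $\tilde{S}$, which only records indexed pairs, attributes that value to the larger sum. Your running minimum $g(l)=\min\{\tilde{s}[l_r]:l_r\ge l\}$ is exactly the right repair: both directions of $g=S$ check out (the inequality $g(l)\le S[l]$ by rounding a minimizer of $S[l]$ up to the nearest indexed positions, and $g(l)\ge S[l]$ by sliding an indexed pair down to total index $l$ using monotonicity of the completions), and the final sandwich $g(l)\le S''[l]\le(1+\beta)g(l)$ correctly treats both the picked and the skipped cases of the greedy sweep, using that the largest index $i_{m_1}+j_{m_2}$ is always picked so that the relevant picked indices $l_{r_i}$ and $l_{r_j}$ exist. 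The $\beta$-\RS\ property and the $O(m^2\log m)$ running time are handled the same way as in the paper. So your argument does not merely take a different route from the paper's; it supplies the justification that the paper's one-line correctness claim is missing.
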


\begin{proof}
The correctness is not hard. Let $A'$ and $B'$ be the completions of $\hat{A}$ and $\hat{B}$
respectively. Let $S'$ be the exact $(\min, +)$-convolution of two arrays $A'$ and $B'$. By the
construction of $\tilde{S}$, we have that $S'$ is the completion of $\tilde{S}$.
Moreover, the $\beta$-\RS\ $\hat{S}$ approximates the array $S'$, which proves the correctness.
It remains to prove the running time.

By the algorithm \textsf{RSMinPlus}, it takes $m_1\cdot m_2=O(m^2)$ time to compute all
$\tilde{s}[l_r]$. Thus, there are at most $m^2$ different elements $ \tilde{s}[l_r] $ in $\tilde{S}$.
\eat{Observe that each index $l_r$ is an integer and its maximum value is $i_{m_1}+i_{m_2}=O(m)$. Therefore,
we can do the bucket sorting to obtain the array $\tilde{S}$ in $O(m^2)$ time.}Then we need
$O(m^2 \log m)$ time to sort all $\tilde{s}[l_r]$ and obtain the array $\tilde{S}$.
Finally, scanning all elements in $\tilde{S}$ to construct the $\beta$-\RS\ $\hat{S}$ needs $O(m^2)$ time.
Overall, the runtime of the algorithm is $O(m^2\log m)$.
\end{proof}

\topic{\textsf{FastRSMinPlus}. A more careful $(\alpha,\beta)$-\RS\ $(\min,+)$-convolution algorithm} In the tree sparsity model, we have some additional conditions which can improve
the running time of the algorithm \textsf{RSMinPlus}$(\alpha, \beta, \hat{A}, \hat{B})$. We consider the case that $\hat{A}$
are nonnegative sequences with each element $\hat{a}\in \hat{A}$ satisfying that either $\hat{a}=0$ or $\hat{a}\geq 1$. We have the same assumption on $\hat{B}$.
Moreover, $\alpha$ and $\beta$ are two constants such
that $0 < \beta \leq \alpha \leq 1$. The
intuition is as follows. Suppose
we have just appended some element $ \hat{s}[l] $  to the array $ \hat{S} $. By the property of
$(\alpha,\beta)$-\RS\ $(\min,+)$-convolution, we can safely ignore all $\tilde{s}[l_r]$ with
$\hat{s}[l]/(1+\beta)<\tilde{s}[l_r]\leq \hat{s}[l]$, and focus on finding the largest $
\tilde{s}[l_r]\in \tilde{S}$  such that $\tilde{s}[l_r] \leq
\hat{s}[l]/(1+\beta) $ (see the definition of $\tilde{S}$ in
\textsf{RSMinPlus}$(\alpha, \beta, \hat{A}, \hat{B})$).

We first build a hash table $\mathsf{HashB}$. The construction is as follows.
Each element in $\mathsf{HashB}$ is a pair $( key , value)$
where the $key $ term is an integer satisfying that $-1\leq key \leq \ceil{ \log_{1+\beta}
  \hat{b}[j_{m_2}] }$, and the $value$ term is the largest index $w$ satisfying that $\hat{b}[w]\in
\hat{B}$ and $\hat{b}[w]\leq (1+\beta)^{ key }$.
\footnote{If $\hat{b}[j_1]=0$, we define $\mathsf{HashB}(-1)=\hat{b}[j_1]$. Otherwise if $\hat{b}[j_1]\geq 1$, for a $key$ term, if we have
  $\hat{b}[j_1]>(1+\beta)^{key}$, we ignore this $key$ term when constructing $\mathsf{HashB}$.}
Symmetrically, we also build a hash table $\mathsf{HashA}$ for the array $\hat{A}$. Let
$U=\max\{\hat{a}[i_{m_1}],\hat{b}[j_{m_2}]\}$. Since both $\hat{A}$ and $\hat{B}$ are increasing
sequences, we can construct hash tables $\mathsf{HashA}$ and $\mathsf{HashB}$ in $O(\log_{1+\beta}
U)$ time by considering all $key$ terms in increasing order.

Given an element $\hat{s}[l]\in \hat{S}$, we show how to find the largest $\tilde{s}[l_r]\in \tilde{S}$  such that $\tilde{s}[l_r] \leq \hat{s}[l]/(1+\beta)$ by hash tables. We first reduce this problem to finding the element $\hat{b}[j_t]\in \hat{B}$ of the largest index $j_t$ for each $\hat{a}[i_v]\in \hat{A}$, such that $\hat{a}[i_v]+\hat{b}[j_t] \leq \hat{s}[l]/(1+\beta)$.
A simple scheme is to enumerate all
$\hat{a}[i_v] \in \hat{A}$, query the hash table $\mathsf{HashB}$, and find the largest $
\hat{b}[j_t]\in \hat{B}$ such that $\hat{b}[j_t] \leq \hat{s}[l]/(1+\beta)-\hat{a}[i_v]$. Then among
all such $(i_v,j_t)$ index pairs, we choose the pair $(i_{v^*},j_{t^*})$ with the largest sum
$i_{v^*}+j_{t^*}$. We append
$\hat{s}[i_{v^*}+j_{t^*}]=\tilde{s}[i_{v^*}+j_{t^*}]=\hat{a}[i_{v^*}]+\hat{b}[j_{t^*}]$ to
$\hat{S}$.  However, enumerating all elements is not necessary, since we have the following lemma.

\begin{lemma}
  \label{lm:half}
  Let $ \tau = \ceil{1/\alpha}$.
  For any $\hat{a}[i_{v-\tau}],\hat{a}[i_v]\in \hat{A}$, we have $ \hat{a}[i_{v - \tau }] \leq \hat{a}[i_v ]/2 $. Similarly, for any $\hat{b}[j_{t-\tau}],\hat{b}[j_t]\in \hat{B}$,  $ \hat{b}[ j_{t - \tau}] \leq \hat{b}[j_t ]/2 $.
\end{lemma}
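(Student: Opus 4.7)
The plan is to unfold the definition of $\alpha$-\RS\ and then reduce the claim to the elementary inequality $(1+\alpha)^{\lceil 1/\alpha\rceil}\geq 2$ for $\alpha\in(0,1]$.

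First, I would apply the defining inequality of an $\alpha$-\RS\ repeatedly. By definition, $\hat{a}[i_{v+1}]\geq (1+\alpha)\hat{a}[i_v]\geq 0$ for every $v$, so iterating $\tau$ steps gives
\begin{equation*}
\hat{a}[i_v]\;\geq\;(1+\alpha)^{\tau}\,\hat{a}[i_{v-\tau}].
\end{equation*}
(If $\hat{a}[i_{v-\tau}]=0$ the target bound $\hat{a}[i_{v-\tau}]\leq \hat{a}[i_v]/2$ is immediate, so I can assume positivity in what follows.)

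Next I would verify that $(1+\alpha)^{\tau}\geq 2$ whenever $\tau=\lceil 1/\alpha\rceil$ and $\alpha\in(0,1]$. Since $\tau\geq 1/\alpha$ and $(1+\alpha)\geq 1$, it suffices to show $(1+\alpha)^{1/\alpha}\geq 2$. The function $f(\alpha)=(1+\alpha)^{1/\alpha}$ is monotone decreasing on $(0,1]$ (a standard fact: $\tfrac{d}{d\alpha}\left[\tfrac{1}{\alpha}\ln(1+\alpha)\right]\leq 0$), with $\lim_{\alpha\to 0^+}f(\alpha)=e$ and $f(1)=2$. Hence $f(\alpha)\geq 2$ throughout $(0,1]$, and the hypothesis $\alpha\leq 1$ (from $0<\beta\leq\alpha\leq 1$) is what I need. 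Plugging this back yields
\begin{equation*}
\hat{a}[i_v]\;\geq\;(1+\alpha)^{\tau}\,\hat{a}[i_{v-\tau}]\;\geq\;2\,\hat{a}[i_{v-\tau}],
\end{equation*}
which is the desired inequality $\hat{a}[i_{v-\tau}]\leq \hat{a}[i_v]/2$. The identical argument applied to the $\alpha$-\RS\ $\hat{B}$ gives the bound on the $\hat{b}[j_t]$'s.

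There is essentially no obstacle here: the lemma is a one-line unfolding of the definition combined with the elementary inequality $(1+\alpha)^{1/\alpha}\geq 2$ on $(0,1]$. The only thing to double-check is that the hypothesis $\alpha\leq 1$ is genuinely being used (it is — otherwise $f(\alpha)$ drops below $2$), and that the $\alpha$-\RS\ definition truly guarantees $\hat{a}[i_{v+1}]\geq (1+\alpha)\hat{a}[i_v]$ in the nonnegative regime assumed in Section~\ref{sec:convalg} so that iteration is valid.
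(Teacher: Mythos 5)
Your proof is correct and follows essentially the same route as the paper: both iterate the defining inequality of an $\alpha$-\RS\ $\tau$ times and then invoke $(1+\alpha)^{\lceil 1/\alpha\rceil}\geq(1+\alpha)^{1/\alpha}\geq 2$ for $\alpha\in(0,1]$ (the paper splits off the trivial case $\alpha\geq 1$, $\tau=1$ separately, but the content is identical). Your added justification of the monotonicity of $(1+\alpha)^{1/\alpha}$ and the remark about nonnegativity are fine and merely make explicit what the paper leaves implicit.
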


\begin{proof}
  W.l.o.g., we only consider the array $\hat{A}$.
  By Definition~\ref{def:sparsemin}, we know $(1+\alpha) \hat{a}[i_{v-1}] \leq \hat{a}[i_v]$. If $\alpha \geq
  1$, $ \tau= 1$, the lemma is trivially true. Otherwise if $\alpha < 1$, we have $\hat{a}[i_{v-
    \tau}] \leq  \hat{a}[i_v] / (1+\alpha)^{1/\alpha} \leq \hat{a}[i_v]/2$.
\end{proof}

Let $\theta=\hat{s}[l]/(1+\beta)$. Assume that $ \hat{a}[i_v]\in \hat{A}  $ (resp. $ \hat{b}[j_t]
\in \hat{B} $) is the
largest element such that $\hat{a}[i_v] \leq \theta$ (resp. $\hat{b}[j_t] \leq \theta$). Hence, both
$\hat{a}[i_{v+1}]$ and $\hat{b}[j_{t+1}]$ are at least $\theta  $.
Therefore, the pair $(i_{v^*},j_{t^*})$ must satisfy that either $ v-\tau \leq v^*\leq v $ or
$t-\tau\leq t^* \leq t$, since $\hat{a}[i_{v-\tau}] + \hat{b}[j_{t-\tau}] \leq \theta$ by Lemma
\ref{lm:half}. Thus, we only need to consider at most $\tau +1$ elements in $\hat{A}$ or
$\hat{B}$. Note that we can directly find such index $i_v$ (resp. $ j_t $) by the following lemma.

\begin{lemma}
  \label{lm:collision}
  Either $i_v=\mathsf{HashA} (\ceil{\log_{1+\beta}\theta})$ or $i_v=\mathsf{HashA}
  (\floor{\log_{1+\beta}\theta})$. Similarly, either $j_t =\mathsf{HashB}
  (\ceil{\log_{1+\beta}\theta}$ or $j_t =\mathsf{HashB} (\floor{\log_{1+\beta}\theta}$
\end{lemma}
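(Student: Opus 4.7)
The plan is to exploit the fact that, because $\hat{A}$ is an $\alpha$-\RS\ with $\beta \leq \alpha$, consecutive entries of $\hat{A}$ differ by a multiplicative factor of at least $1+\beta$. Concretely, I would set $k^{\star} = \lfloor \log_{1+\beta} \theta \rfloor$, so that
\[
(1+\beta)^{k^{\star}} \;\leq\; \theta \;<\; (1+\beta)^{k^{\star}+1}.
\]
The target index $i_v$ is, by definition, the largest index with $\hat{a}[i_v] \leq \theta$, and since $\theta < (1+\beta)^{k^{\star}+1}$, it certainly satisfies $i_v \leq \mathsf{HashA}(k^{\star}+1) = \mathsf{HashA}(\lceil \log_{1+\beta}\theta\rceil)$. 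This gives an upper bound for the answer in terms of the hash table; the job is to show the lower bound matches one of the two queried keys.

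Next, I would establish the key structural claim: the half-open interval $\bigl((1+\beta)^{k^{\star}},\,(1+\beta)^{k^{\star}+1}\bigr]$ contains at most one element of $\hat{A}$. This follows because any two consecutive entries $\hat{a}[i_u] < \hat{a}[i_{u+1}]$ of the $\alpha$-\RS\ obey $\hat{a}[i_{u+1}] \geq (1+\alpha)\hat{a}[i_u] \geq (1+\beta)\hat{a}[i_u]$, so two distinct entries cannot simultaneously lie in an interval whose endpoints have ratio exactly $1+\beta$.

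With this in hand the proof splits into two cases. If $\hat{a}[\mathsf{HashA}(k^{\star}+1)] \leq \theta$, then $\mathsf{HashA}(k^{\star}+1)$ is itself a valid candidate, and combined with the upper bound above, it must be the answer, giving $i_v = \mathsf{HashA}(\lceil \log_{1+\beta}\theta\rceil)$. Otherwise $\hat{a}[\mathsf{HashA}(k^{\star}+1)] > \theta$, meaning the unique possible entry of $\hat{A}$ in $\bigl((1+\beta)^{k^{\star}},(1+\beta)^{k^{\star}+1}\bigr]$ already exceeds $\theta$. By the uniqueness claim, the next smaller entry of $\hat{A}$ has value at most $(1+\beta)^{k^{\star}} \leq \theta$, and by the definition of the hash table this entry is precisely $\mathsf{HashA}(k^{\star}) = \mathsf{HashA}(\lfloor \log_{1+\beta}\theta\rfloor)$; it is the largest index with value $\leq \theta$, hence equals $i_v$. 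The argument for $\hat{B}$ and $\mathsf{HashB}$ is identical by symmetry.

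The only subtlety, and the main thing to be careful about, is handling the boundary conventions encoded in the footnote defining $\mathsf{HashA}$: when $\theta < 1$ or $\theta = 0$ one must fall back on the $key = -1$ slot for the zero entry (if present) or report that no valid index exists, and when $\theta$ is exactly a power of $1+\beta$ the floor and ceiling coincide and the claim is trivial. Beyond these routine boundary checks, the argument is essentially the two-line observation that an $\alpha$-\RS\ with $\alpha \geq \beta$ has at most one entry per $(1+\beta)$-geometric bucket, so a single hash query plus one neighbor suffices.
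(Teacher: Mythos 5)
Your proposal is correct and follows essentially the same route as the paper: both arguments rest on the fact that consecutive entries of an $\alpha$-\RS\ with $\alpha\geq\beta$ are separated by a factor of at least $1+\beta$, so the largest entry not exceeding $\theta$ is either the index returned for the key $\ceil{\log_{1+\beta}\theta}$ or its immediate predecessor, which is exactly the index stored at key $\floor{\log_{1+\beta}\theta}$. Your ``at most one entry per $(1+\beta)$-geometric bucket'' formulation is just a repackaging of the paper's direct chain $\hat{a}[i_{w-1}]\leq\hat{a}[i_w]/(1+\alpha)\leq(1+\beta)^{\floor{\log_{1+\beta}\theta}}\leq\theta$, and your boundary remarks (equal floor/ceiling, the $key=-1$ slot) are sound.
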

\begin{proof}
  W.l.o.g., we take $i_v$ as example. Let $i_w=\mathsf{HashA} (\ceil{\log_{1+\beta}\theta}) $. If
  $i_v\neq i_w $, then by the definition of $i_v$, we have that
  $ \theta< \hat{a}[i_w]\leq (1+\beta)^{\ceil{\log_{1+\beta}\theta}}$. Since $\hat{A}$ is an
  $\alpha$-\RS\ and $\alpha\geq \beta$, we have
  $$ \hat{a}[i_{w-1}]\leq \hat{a}[i_w]/(1+\alpha)\leq \hat{a}[i_w]/(1+\beta)\leq
  (1+\beta)^{\ceil{\log_{1+\beta}\theta}-1}\leq (1+\beta)^{\floor{\log_{1+\beta}\theta}}\leq \theta
  $$
  Thus, we have $i_v=i_{w-1}$ by the definition of $i_v$. Note that $i_{w-1}=\mathsf{HashA}
  (\floor{\log_{1+\beta}\theta})$. We finish the proof.
\end{proof}

\eat{
  Assume that $a[i_v] \in A $ and $ b[j_t] \in B $ are the
  smallest elements in two arrays with both $a[i_v], b[j_t] \geq (1+\beta)s[l]$.
  Therefore, the minimum $ s[l_r] \geq (1+\beta)s[l] $  must contain at least one of $
  a[i_{v-\Delta}]$ or $b[j_{t-\Delta}], \Delta
  \in [0, \tau]$. Otherwise,  $a[i_{v'}] + b[j_{t'}] \geq
  (1+\beta) s[l]$ is impossible.
}

\eat{
Considering the array $\hat{B}$, we first find  an element $\hat{b}[j_t]\in \hat{B}$ with $\hat{b}[j_t]\geq \theta$ by the hash table $\mathsf{HashB}$ in Line 5 of Algorithm~\ref{alg:fastmin+}.  According to Lemma~\ref{lm:collision} and~\ref{lm:half}, we know that  $\hat{b}[j_{t-\Delta}], \Delta
\in  [0, \min\{t, k, \tau+\sigma  \} ]$ are possible solutions. In Line 9, by the hash table $\mathsf{HashA}$, we  find $\hat{a}[i]$ with the minimum index $i$ such that
$a[i]+b[j_{t-\Delta}] \geq \theta$ in $O(1)$ time, where $\theta = (1+\beta)s[l_r]$. At Line 10, we preserve the $s[l]$ with
the minimum index and the minimum value such that $s[l] \geq \theta$. From Line 12
to 15, we process  $\hat{a}[i_{v-\Delta}], \Delta \in [0, \min\{t, k, \tau + \sigma\}]$ similarly.

  We summarize our approach in Algorithm~\ref{alg:fastmin+}. (i.e., Line
  5)  and  the minimum $a[i_v]$ (i.e., Line 12)  which are  larger than $(1+\beta)s[l_r]$.
  According to Lemma~\ref{lm:collision} and~\ref{lm:half}, we know that there are two feasible cases. One is
  $b[j_{t-\Delta}], \Delta \in  [0, \min\{ k, \tau+\sigma +1 \} ]$, the other is $a[i_{v-\Delta}],
  \Delta \in [0, \min\{ k, \tau + \sigma \}]$. We consider the first case from Line 7 to Line
  11. At Line 9, by the hash table $\mathsf{HashA}$, we  find $a[i]$ with the minimum index $i$ such that
  $a[i]+b[j_{t-\Delta}] \geq \theta$ in $O(1)$ time, where $\theta = (1+\beta)s[l_r]$. At Line 10, we preserve the $s[l]$ with
  the minimum index and the minimum value such that $s[l] \geq \theta$. From Line 13
  to 15, we consider the second case with the similar operations.
}

\begin{algorithm}[htp]
  \caption{FastRSMinPlus$(\alpha, \beta,\hat{A},\hat{B})$}
  \label{alg:fastmin+}
  \SetKwFunction{HashA}{HashA}
  \SetKwFunction{HashB}{HashB}
  \KwData{ $0\leq \beta\leq \alpha\leq 1, \hat{A}=(\hat{a}[i_1],\hat{a}[i_2],\ldots,\hat{a}[i_{m_1}]),  \hat{B}= (\hat{b}[j_1],\hat{b}[j_2],\ldots,\hat{b}[j_{m_2}])$ }
  \KwResult{$\hat{S}$}
  Initialize: $\tau=\ceil{1/\alpha}$,
  $ \hat{S} \leftarrow \{\hat{s}[i_{m_1}+j_{m_2}] = \hat{a}[i_{m_1}]+\hat{b}[j_{m_2}]\}$,  $ \theta = \hat{s}[i_{m_1}+j_{m_2}]/(1+\beta)$  \;

  For $-1\leq key \leq \ceil{ \log_{1+\beta} \hat{a}[i_{m_1}] }$, let $value$ be the largest index $w$ satisfying that $\hat{a}[w]\in \hat{A}$ and $\hat{a}[w]\leq (1+\beta)^{ key }$. Let \HashA\ be the collection of these $(key,value)$ pairs \;
  For $-1\leq key \leq \ceil{ \log_{1+\beta} \hat{b}[j_{m_2}] }$, let $value $ be the largest index $w$ satisfying that $\hat{b}[w]\in \hat{B}$ and $\hat{b}[w]\leq (1+\beta)^{ key }$. Let \HashB\  be the collection of these $(key,value)$ pairs \;

  \While{$ \theta > (\hat{a}[i_{1}] + \hat{b}[j_{1}]) $}{
  $\theta'\leftarrow \theta-\hat{a}[i_1]$,
    $ j_{t_1} \leftarrow $ \HashB{$\floor{\log_{1+\beta} \theta'}$},
    $ j_{t_2} \leftarrow $ \HashB{$\ceil{\log_{1+\beta} \theta'}$} \;
    If $\hat{b}[j_{t_2}]\leq \theta'$, let $j_t\leftarrow j_{t_2}$. Otherwise, let $j_t\leftarrow j_{t_1}$ \;
    $  l \leftarrow j_t+i_1 , \hat{s}[l] \leftarrow \hat{b}[j_t]+\hat{a}[i_1] $ \;
    \For { $ \Delta = 1$ \emph{\KwTo} $ \min\{ t, \tau \}$}{
      $ \delta \leftarrow \theta - \hat{b}[j_{t-\Delta}] $ \;
      Let $ w  \leftarrow $ \HashA {$\ceil{\log_{1+\beta}\delta}$ }. If $\hat{a}[w]>\delta$, let $ w  \leftarrow $ \HashA {$\floor{\log_{1+\beta}\delta}$ }\;
      \If {$ l<w+j_{t-\Delta}  $, \emph{or} $(l=w+j_{t-\Delta}  $
        \emph{and} $\hat{s}[l] > \hat{a}[w] +\hat{b}[j_{ t -\Delta}])$ }{
        $ l\leftarrow w+ j_{t -\Delta} , \hat{s}[l] \leftarrow \hat{a}[w] + \hat{b}[j_{t -\Delta}]   $ \;
      }
    }

    $\theta'\leftarrow \theta-\hat{b}[j_1]$, $ i_{v_1} \leftarrow $ \HashA{$\floor{\log_{1+\beta} \theta' }$},
    $ i_{v_2} \leftarrow $ \HashA{$\ceil{\log_{1+\beta} \theta'}$} \;
    If $\hat{a}[i_{v_2}]\leq \theta'$, let $i_v\leftarrow i_{v_2}$. Otherwise, let $i_v\leftarrow i_{v_1}$ \;
    \For { $ \Delta = 0$ \emph{\KwTo} $ \min\{ v, \tau \}$}{
      $ \delta \leftarrow \theta - \hat{a}[i_{v-\Delta}] $ \;
      Let $ w  \leftarrow $ \HashB {$\ceil{\log_{1+\beta}\delta}$ }. If $\hat{b}[w]>\delta$, let $ w  \leftarrow $ \HashB {$\floor{\log_{1+\beta}\delta}$ }\;
      \If {$ l<w+i_{v-\Delta}  $, \emph{or} $(l=w+i_{v-\Delta}  $
        \emph{and} $\hat{s}[l] > \hat{b}[w] +\hat{a}[i_{ v -\Delta}])$ }{
        $ l\leftarrow w+ i_{v -\Delta} , \hat{s}[l] \leftarrow \hat{b}[w] + \hat{a}[i_{v -\Delta}]   $ \;
      }
    }

    $ \hat{S}\leftarrow \hat{s}[l] \cup \hat{S} $, $\theta =\hat{s}[l]/(1+\beta) $\;

  }
  \KwRet{$\hat{S}$} \;
\end{algorithm}

We summarize our approaches in Algorithm~\ref{alg:fastmin+}. The analysis of the algorithm is as follows.

\begin{lemma}
  \label{lm:fastpre}
  Let $U=\max\{\hat{a}[i_{m_1}],\hat{b}[j_{m_2}]\}$. Let $0\leq \beta\leq \alpha \leq 1$ be two constants. The algorithm $\mathsf{FastRSMinPlus}(\alpha, \beta,\hat{A},\hat{B})$
  computes an $(\alpha,\beta)$-\RS\ $(\min, +)$-convolution $\hat{S}$ of $\hat{A}$ and $\hat{B}$ in
  $O \left( \frac{\log_{1+\beta} U }{\alpha}\right)  $ time.
\end{lemma}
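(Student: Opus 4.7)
The plan is to prove correctness and running time separately, both by leveraging Lemmas~\ref{lm:half} and~\ref{lm:collision} to reduce the work per iteration of the outer while loop from $O(m^2)$ (as in \textsf{RSMinPlus}) down to $O(1/\alpha)$.

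For correctness, I would argue that \textsf{FastRSMinPlus} produces exactly the same output $\hat{S}$ as \textsf{RSMinPlus} would. Recall that \textsf{RSMinPlus} maintains a threshold $\theta$ equal to the last appended value divided by $(1+\beta)$ and then scans $\tilde{S}$ to find the largest $\tilde{s}[l_r] \leq \theta$, breaking ties in favor of the largest index $l_r$. The key claim is that the pair $(i_{v^*}, j_{t^*})$ realizing this $\tilde{s}[l_r]$ must lie in one of the two ``boundary strips'': either $v^* \in \{v-\tau, \dots, v\}$ or $t^* \in \{t-\tau, \dots, t\}$, where $\hat{a}[i_v]$ and $\hat{b}[j_t]$ are the largest elements of $\hat{A}$ and $\hat{B}$ with values at most $\theta - \hat{a}[i_1]$ and $\theta - \hat{b}[j_1]$, respectively. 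Indeed, if $v^* < v - \tau$ and $t^* < t - \tau$, then Lemma~\ref{lm:half} gives $\hat{a}[i_{v^*}] + \hat{b}[j_{t^*}] \leq \hat{a}[i_v]/2 + \hat{b}[j_t]/2 \leq \theta$, but then one could shift $v^*$ (or $t^*$) upward and obtain a pair with a strictly larger index sum still summing to at most $\theta$, contradicting the maximality of $i_{v^*} + j_{t^*}$. Therefore, the two inner \textbf{for} loops, which enumerate $\Delta \in [0, \tau]$ from each boundary, correctly cover all candidate pairs; and for each such fixed boundary element we use Lemma~\ref{lm:collision} together with the hash tables \textsf{HashA}, \textsf{HashB} to find the matching index on the opposite side in $O(1)$ time. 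Combined with the argument already given for Lemma~\ref{lm:min+}, this shows $\hat{S}$ is an $(\alpha,\beta)$-\RS\ $(\min,+)$-convolution of $\hat{A}$ and $\hat{B}$.

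For the running time, the preprocessing of \textsf{HashA} and \textsf{HashB} requires a single sweep of $\hat{A}$ and $\hat{B}$ together with a sweep of the $key$ range $[-1, \lceil \log_{1+\beta} U \rceil]$, which is $O(\log_{1+\beta} U)$ time. Each iteration of the outer while loop appends one element to $\hat{S}$, performs $O(1)$ hash table lookups to find the boundary indices $i_v$ and $j_t$ (using Lemma~\ref{lm:collision} to correct off-by-one rounding between $\lfloor\cdot\rfloor$ and $\lceil\cdot\rceil$), and then executes two \textbf{for} loops of length at most $\tau + 1 = O(1/\alpha)$, each iteration of which is again $O(1)$. Hence the cost per iteration is $O(1/\alpha)$. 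Finally, the number of iterations is bounded by observing that $\theta$ shrinks by a factor of $(1+\beta)$ after each append, the loop terminates once $\theta \leq \hat{a}[i_1] + \hat{b}[j_1]$, and the initial value of $\theta$ is at most $2U/(1+\beta)$; by the assumption that every nonzero entry of $\hat{A}$ and $\hat{B}$ is at least $1$, this gives $O(\log_{1+\beta} U)$ iterations. Multiplying yields the claimed $O\!\left(\frac{\log_{1+\beta} U}{\alpha}\right)$ bound.

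The main obstacle I expect is the correctness argument, specifically ensuring that the two-sided enumeration (once with $j_1$ fixed and sweeping near the $\hat{B}$-boundary, then with $i_1$ fixed and sweeping near the $\hat{A}$-boundary) together with the tie-breaking ``take the pair with the largest index sum'' really does recover the same element as \textsf{RSMinPlus} would pick, and that the hash table rounding in Lemma~\ref{lm:collision} is handled correctly even when $\theta$ is not exactly a power of $(1+\beta)$. Handling the special cases where $\hat{a}[i_1] = 0$ or $\hat{b}[j_1] = 0$ (captured by the $key = -1$ entry in the hash tables) and where $\theta - \hat{a}[i_1]$ falls below the range of \textsf{HashB} needs a careful but routine case analysis.
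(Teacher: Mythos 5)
Your proposal is correct and follows essentially the same route as the paper's proof: it bounds the number of while-loop iterations by $O(\log_{1+\beta} U)$ via the geometric decrease of $\theta$ (using the ``nonzero entries are at least $1$'' assumption), and bounds the per-iteration cost by $O(1/\alpha)$ via Lemmas~\ref{lm:half} and~\ref{lm:collision}, deferring the remaining correctness bookkeeping to the argument of Lemma~\ref{lm:min+}. The only (cosmetic) difference is that you justify the boundary-strip claim by an exchange argument on the index sum, whereas the paper assumes w.l.o.g.\ that one of $\hat{a}[i_{v^*}],\hat{b}[j_{t^*}]$ is at least $\theta/2$ and applies Lemma~\ref{lm:half} to that coordinate; your variant is, if anything, slightly cleaner.
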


\begin{proof}

  We first prove the correctness. Let $A$ and $B$ be the completions of $\hat{A}$ and $\hat{B}$ respectively. Let $S$ be the $(\min, +)$-convolution of $A$ and $B$. Assume that we just append an element $\hat{s}[l]$ to $\hat{S}$ and let $\theta=\hat{s}[l']/(1+\beta)$. Consider the next recursion from Line 5 to Line 20, we append a new element $\hat{s}[l]$ to $\hat{S}$. Initially in Line 6, we find the largest index $j_t$ satisfying that $\hat{b}[j_t]+\hat{a}[i_1]\leq \theta$ by Lemma \ref{lm:collision}. We first analyse the first loop from Line 8 to Line 12. In Line 10, we find the largest element $\hat{a}[w]$ satisfying that $\hat{a}[w]\leq \theta-\hat{b}[j_{t-\delta}]$ by Lemma \ref{lm:collision}.
  Thus, we conclude that $\hat{a}[w]+\hat{b}[j_{t-\Delta}]\leq \theta$ for any $1\leq \Delta\leq
  \min\{t,\tau\}$. Similarly, we can prove that $\hat{b}[w]+\hat{a}[i_{v-\Delta}]\leq \theta$ for
  any $1\leq \Delta\leq \min\{v,\tau\}$ in Line 17. Thus, the element $\hat{s}[l]$ always satisfies
  that $\hat{s}[l]\leq \theta=\hat{s}[l']/(1+\beta)$ during the recursion. Thus, the output
  $\hat{S}$ is a $\beta$-\RS.

  On the other hand, let $S'$ be the completion of $\hat{S}$. Assume that
  $s[l]=\hat{a}[i_{v^*}]+\hat{b}[j_{t^*}]$ is the largest element in $S$ satisfying that $s[l]\leq
  \theta$. W.l.o.g. we assume that $\theta/2\leq \hat{b}[j_{t^*}]\leq \theta-\hat{a}[i_1]$
  (otherwise $\theta/2\leq \hat{a}[i_{v^*}]\leq \theta-\hat{b}[j_1]$). We conclude that $t-\tau \leq
  t^*\leq t$ by Lemma \ref{lm:half}. Then we must consider the element $\hat{b}[j_{t^*}]$ in the
  loop from Line 8 to Line 12. Note that in Line 10, we find an index $w= i_{v^*}$ by Lemma
  \ref{lm:collision}. By the updating rules in Line 11-12, we update
  $\hat{s}[l]=\hat{a}[i_{v^{*}}]+\hat{b}[j_{t^*}]$ in Line 12 and append $\hat{s}[l]$ to $\hat{S}$
  in Line 20.
  Considering any element $s'[l_0]$ with $l+1\leq l_0\leq l'$, we have that $s'[l_0]=\hat{s}[l']$ by
  Definition \ref{def:extended}. Moreover, we have the following inequality by the chosen of $l_r$,
  $$ s[l_0]\leq s'[l_0] =\hat{s}[l']=s[l'] =(1+\beta) \theta < (1+\beta) s[l+1]\leq (1+\beta) s[l_0].
  $$
  Overall, we prove that $S'$ is a $\beta$-approximation of $S$ by Definition \ref{def:seqapp}.

  Then we analyze the running time. By the definition of $U$, we always have $\theta \leq 2U$. After
  each iteration, the value $\theta$ decreases by a factor at least $1+\beta$ by the fact that
  $\hat{S}$ is a $\beta$-\RS. Thus, there are at most $\ceil{\log_{1+\beta} 2U }$ iterations. For
  each iteration, we first find $j_t$ in $O(1)$ time by Lemma \ref{lm:collision}. Then we consider
  at most $\tau+1=\ceil{1/\alpha}+1$ possible index pairs $(w,j_{t-\Delta})$. We only cost $O(1)$
  time for each index pair. For the loop from Line 13 to Line 19, we have the same analysis. Thus,
  the running time of each iteration is $O(\ceil{1/\alpha})$. Overall, the total running time is at
  most $O ( \frac{\log_{1+\beta} U }{\alpha}) $.
\end{proof}

By Lemma \ref{lm:fastpre}, the running time of Algorithm \ref{alg:fastmin+} is determined by the
term  $U=\max\{\hat{a}[i_{m_1}], $ $ \hat{b}[j_{m_2}]\}$. In fact, if $\log_{1+\beta} U$ is larger than
the largest index number $M=\max \{i_{m_1},j_{m_2}\}$ of arrays,  we can improve the running time of
Algorithm~\ref{alg:fastmin+} further. The main difference is that we do not use hash tables since it
takes $\log_{1+\beta} U$ time for construction. The details are as follows.

\begin{enumerate}
\item Compute the completion
$A'=(a'[0], a'[1], \ldots, a'[i_{m_1}])$ and $B'=(b'[0],b'[1],\ldots,b'[j_{m_2}])$ of $\hat{A}$ and $\hat{B}$ respectively.
\item Compute
the $(\min, +)$-convolution $S$ of $A'$ and $B'$ as follows. Let
$\tau=\ceil{1/\alpha}$. Sequentially consider each $L\in [0, i_{m_1}+j_{m_2}]$ in the increasing order. For a term $L$, find $\hat{a}[i_v]\in \hat{A}$ with the largest index satisfying that $i_v\leq L$. Similarly, find $\hat{b}[j_t]\in \hat{B}$ with the largest index satisfying that $j_t\leq L$.
\item  Compute
  $ s[L] = \min \{ \min_{ 0\leq \Delta \leq  \min\{v,\tau\}}(a'[i_{v-\Delta}]+b'[L-i_{v-\Delta}] ),
  \min_{0\leq \Delta \in \leq  \min\{t,\tau\}} (a'[L-j_{t-\Delta}]+b'[j_{t-\Delta}] )\} $.
\item Scan the array $S$ in decreasing order. Construct a $\beta$-\RS\ as in Algorithm \textsf{RSMinPlus}$(\alpha, \beta, \hat{A}, \hat{B})$
\end{enumerate}

Note that our approach is similar to the iteration of Algorithm \textsf{RSMinPlus}$(\alpha, \beta,
\hat{A}, \hat{B})$. While we use the properties of $\alpha$-\RS\ during computing $S$, similar
to Algorithm~\ref{alg:fastmin+}.
The running time of Step 1 is $O(M)$. For each $L$, since we consider $L$ sequentially, it costs
$O(1)$ time to find indexes $i_v$ and $j_t$. Moreover, we cost $O(\tau)$ time to compute
$s[L]$. Finally, the running time of Step 4 is $O(M)$. Thus, the total running time is $ O(M/\alpha)
$. Combining with Lemma \ref{lm:fastpre}, we have the following lemma.

\begin{lemma}
  \label{lm:fastmin+}
  Consider two $\alpha$-\RS s $\hat{A}=(\hat{a}[i_1], \hat{a}[i_2],\ldots, \hat{a}[i_{m_1}])$ where each $\hat{a}[i_w]$ ($1\leq w\leq m_1$) satisfies that either $\hat{a}[i_w]=0$ or $\hat{a}[i_w]\geq 1$, and $\hat{B}= (\hat{b}[j_1], \hat{b}[j_2],
  \ldots, \hat{b}[j_{m_2}])$ where each $\hat{b}[j_w]$ ($1\leq w\leq m_2$) satisfies that either $\hat{b}[j_w]=0$ or $\hat{b}[j_w]\geq 1$. Let $U = \max\{\hat{a}[i_{m_1}] , \hat{b}[j_{m_2}]\}$ and $M =\max\{ i_{m_1}, j_{m_2}\}$. Let $0\leq \beta\leq \alpha \leq 1$ be two constants. There exists an algorithm $\mathsf{FastRSMinPlus}(\alpha, \beta,\hat{A},\hat{B})$
  computing an $(\alpha,\beta)$-\RS\ $(\min, +)$-convolution $\hat{S}$ of $\hat{A}$ and $\hat{B}$ in
  $O \left(\min\left\{ \frac{\log_{1+\beta} U }{\alpha}, \frac{M}{\alpha}\right\}\right)  $ time.
\end{lemma}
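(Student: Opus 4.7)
The plan is to prove the two bounds in the minimum separately and dispatch to the faster variant. The first bound $O(\log_{1+\beta}U/\alpha)$ is already handed to us by Lemma \ref{lm:fastpre} applied to Algorithm \ref{alg:fastmin+}, so the substantive work is to prove the second bound $O(M/\alpha)$ for the four-step variant described just above the statement.

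The running-time analysis for the four-step variant is the easier half. Step 1 unrolls each $\alpha$-RS element onto its interval in $O(M)$ total, which produces the completions $A'$ and $B'$ of lengths $i_{m_1}+1$ and $j_{m_2}+1$ respectively. In Step 2, as $L$ increases by one, the breakpoints $i_v$ and $j_t$ can only advance forward, so maintaining them by two monotone pointers costs $O(1)$ amortized per $L$; this is the key optimization that removes the $\log_{1+\beta}U$ factor from Lemma \ref{lm:fastpre} by sidestepping any hash-table construction. Step 3 then examines $O(\tau)=O(1/\alpha)$ candidate pairs per $L$, giving $O(M/\alpha)$ across all $O(M)$ values of $L$, and Step 4's decreasing-order scan contributes another $O(M)$, for a total of $O(M/\alpha)$.

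The delicate half is showing correctness of Step 3: that the $O(\tau)$ candidates around $(i_v,L-i_v)$ and $(L-j_t,j_t)$ are enough so that, after Step 4's $\beta$-RS construction, the output $\hat{S}$ is a valid $(\alpha,\beta)$-RS $(\min,+)$-convolution of $\hat{A}$ and $\hat{B}$. My plan is to mirror the correctness argument from the proof of Lemma \ref{lm:fastpre}: for any $L$, fix an optimal pair $(i^*,j^*)$ realizing the true value $s[L]=a'[i^*]+b'[j^*]$; the larger of the two summands is at least $s[L]/2$, and without loss of generality suppose $b'[j^*]$ dominates. Because the completion $B'$ is piecewise constant, the value $b'[j^*]$ equals some $\hat{b}[j_{t^*}]$, and by Lemma \ref{lm:half} the index $t^*$ lies within $\tau$ steps of the breakpoint that Step 3 anchors on, so the sweep on the $B'$ side captures it. The main obstacle will be the interval structure of $A'$ and $B'$: the optimal pair can fall inside an interval whose upper breakpoint exceeds $L$, and handling this cleanly requires a careful symmetry argument between the two coordinates as well as dealing with the boundary cases where $v$ or $t$ is truncated below $\tau+1$ and where zero-valued elements of $\hat{A}$ or $\hat{B}$ sidestep the geometric-doubling invariant that Lemma \ref{lm:half} relies on.

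Once both bounds are in hand, a trivial dispatcher compares $\log_{1+\beta}U$ against $M$ in $O(1)$ time from the inputs and invokes whichever of Algorithm \ref{alg:fastmin+} or the four-step variant has the smaller predicted cost, yielding the claimed $O(\min\{\log_{1+\beta}U/\alpha,\,M/\alpha\})$ running time.
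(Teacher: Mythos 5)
Your proposal follows essentially the same route as the paper: the $O(\log_{1+\beta}U/\alpha)$ bound is delegated to Lemma~\ref{lm:fastpre}, the $O(M/\alpha)$ bound comes from the four-step completion-based variant with monotone pointers for the breakpoints and $O(\tau)$ candidates per index $L$ justified via Lemma~\ref{lm:half}, and a dispatcher picks the cheaper of the two. The correctness obligations you flag (interval structure of the completions, truncated $v,t$, zero-valued entries) are real but are exactly the ones the paper itself leaves implicit by saying the argument is ``similar to'' that of Lemma~\ref{lm:fastpre}, so your plan matches the paper's proof.
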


\section{Tree Sparsity Model}
\label{sec:tree}

In this section, we discuss the tree sparsity model. We first introduce some essential definitions
and techniques such as weight discretization and \RS\ $(\min, +)$-convolution. Using these new
techniques, we will give an $O(\e^{-1}n\log n)$ time algorithm. Then we speed up the algorithm
to $O(\e^{-1}n(\log\log\log n)^2)$ time through a faster algorithm for \RS\ $(\min, +)$-convolution
and the lookup table method. Our improved algorithm is appropriate for both \headtree\ and
\tailtree. Moreover, we show that we can obtain a linear time algorithm for \headtree\ by a more
careful weight discretization technique. For \tailtree, the new weight discretization technique is
not suitable.  Instead, we give a linear time algorithm for \tailtree\ under the assumption that $k
\leq n^{1-\delta}$ ($\delta \in (0, 1]$ is a fixed constant) by a pruning technique.

 For convenience, we only consider the perfect binary tree in this section. Our algorithm can be
naturally extended to the general complete $b$-ary tree sparsity model. We defer the details in
Appendix~\ref{sec:extend}.  In this section, we only consider the $l_1$-norm for both
\headtree\ and \tailtree. Hence, we only consider the case that each node weight $x_i\geq 0$. We
will see our algorithm can be easily  generalized to general $l_p$-norm. Again, We defer the details in
Appendix~\ref{sec:extend}.

We denote the given perfect binary tree by $T$. Consider a node in the tree $T$.  Suppose
the number of edges on the path between the node and the root is $t$. Define the level of the node by $(\log (n+1)
-t) $. For example, all leaves are at level  $1$ and the root node is at level $\log (n+1)$.
\footnote{Note that $\log (n+1)$ is an integer since $T$ is a perfect binary tree.}
For each level of $T$, we sort all nodes in the same level by a BFS. We denote by
$N_{ij}$ the $j$th node at level $i$. We call the subtree with root $N_{ij}$ and containing all
nodes rooted at $N_{ij}$ the \emph{largest subtree} of $N_{ij}$ and denote it by $T_{ij}$.
Note that the left child of $N_{ij}$ is $N_{i-1,2j-1}$
and the right child is $N_{i-1,2j}$.

Assume that each node $N_{ij}$ has a weight $x_{ij}$.   Recall that in the tree sparsity model,
each support $\Omega \in \mathbb{T}_k$ is a subtree of $T$ rooted at the root node $N_{\log (n+1), 1}$ with $k$ nodes.  For a node $N_{ij}$ and a subtree $\Omega\in \mathbb{T}_k$, we use $N_{ij} \notin \Omega$ to
denote $N_{ij} \in T\setminus \Omega$. In this section, we first consider the \tailtree\ version. The
\headtree\ version is similar to \tailtree, and we will show the differences later. We denote the
optimal solution of the \tailtree\ problem by $\Omega^*$ together with an optimal \emph{tail value}
$\OPT=\sum_{N_{ij} \notin \Omega^*} x_{ij}$. We also denote the solution of our algorithm by
$\hat{\Omega}$ together with a tail value $\SOL=\sum_{N_{ij} \notin \hat{\Omega}} x_{ij}$. W.l.o.g., we
assume that $k\geq \log n$. Otherwise we can safely ignore those nodes $N_i$ of depth larger than
$k$. We also consider the error parameter $\e>0$ as a constant.

\subsection{A Nearly Linear Time Algorithm for \tailtree}
\label{sec:nearlinear}

We first propose a scheme for the tail-approximation projection problem for the general case.
We first assume that each node weight $x_{ij}$ is an integer among  $[0, \frac{n \log
  n}{\e}+n]$. Thus
there are at most $O(n\log n / \epsilon)$ different weight values. We can remove this assumption by a weight discretization technique, see Appendix \ref{sec:weight} for details. We then introduce a look-up table
method, which is inspired by the well known  \emph{Four Russians Method}\cite{kronrod1970economic}.
Combining \textsf{FastRSMinPlus} and the look-up table method, we give a nearly linear time algorithm for \tailtree.

\eat{
  However, it just preserve approximate $(\min,
  +)$-convolution. Tolerating inexactness help us to reduce the data size. In the exact algorithm,
  each node should maintain $k$ states. Instead, we construct a data structure such that each node
  maintains only $O(\log n/ \e)$ states. The data structure can efficiently reduce the running time.
}

\topic{Encoding low levels by the look-up table method}
In fact, we can further discretize the
weight such that there are at most $O( \log n / \e)$ different discretized weight.
Define $\hat{x}_{ij} = (1+\e)^{\ceil{
    \log_{1+\e} x_{ij}}} $ as the discretized weight of node $N_{ij}$. Therefore, $ x_{ij} \leq
\hat{x}_{ij}  <  (1+\e)x_{ij}   $. Suppose that $ s[k] = \sum_{N_{ij} \notin \Omega^*} x_{ij}$ is the optimal tail value for \tailtree, where $\Omega^*$ is
the optimal support using node weights $\{x_{ij}\}$. Suppose that $\hat{\Omega}$ is the optimal support for \tailtree\ using discretized weights $\{\hat{x}_{ij}\}$. We have the following inequality
$$
  \sum_{N_{ij} \notin \hat{\Omega}} x_{ij} \leq \sum_{N_{ij} \notin \hat{\Omega}} \leq \sum_{N_{ij} \notin \Omega^*} \hat{x}_{ij} \leq (1+\e) \sum_{N_{ij}
    \notin \Omega^*} x_{ij} = s[k].
$$

Thus, we use the discretized weights in the following. Now we have at most $O(\log_{1+\e}( n \log n /\e) )= O( \log n /\e)$ different weights.
Consider any node $N_{\xi j}$ at level $  \xi =\ceil{\log\log n - \log(1/\e) - \log\log\log n}$. The
largest subtree $T_{\xi j}$ rooted at $N_{\xi j}$ has at most $m = \ceil{\log n/ (\e \log\log n)}$ nodes. We can compute its exact tail array with running time at most $\sum_{i \in [1,\xi]} (m+1)\cdot 2^{-i} \cdot 2^{2i} = O(\log^2 n/\e^2)$ by computing exact
$(\min,+)$-convolution level by level. Since we have at most $O( \log n /\e)$ different node weights after discretization, there are at most $ O( \log n / \e)^{m} = O( n^{O(\e)}) $
possible constructions for $T_{\xi j}$.
\footnote{Here, each construction is a weight assignment of all nodes in $T_{\xi j}$.}
By this observation, we can enumerate all possible constructions and compute the corresponding exact tail array using $ O( n^{O(\e)} \log^2 n/ \e^2) =o(n) $ time and $o(n)$ space. Thus, we encode all possible constructions of subtrees at level $\xi$
into a look-up table. When we need compute the exact tail array of any node at level $ \xi $, we search the
look-up table and return the array in $O(m)$ time.

\eat{

\begin{algorithm}[t]
  \caption{TailTree: An $O(\e^{-1}n\log n)$ time  $(1+\e)$-approximation algorithm for    \tailtree. }
  \label{alg:simtail}
  \SetKwFunction{RSMinPlus}{RSMinPlus}
  \SetKwFunction{MinPlus}{MinPlus}
  \SetKwFunction{FindTree}{FindTree}
  \KwData{ A tree $T$ together with node weights $\{x_{ij}\}_{ij}$, an integer $k\in [n]$}
  \KwResult{A subtree $\hat{\Omega}$}
  Initialize :
  $\xi \leftarrow \ceil{(\log\log n -
\log(1/\e) - \log\log\log n)}, \eta \leftarrow \ceil{(\log\log n +\log (1/\e))}$, $\e' \leftarrow \frac{\e}{\eta-\xi}$, \
  $\forall i \in [\eta, \log (n+1)] , \ \e_i  \leftarrow \frac{\e}{ 3^{(i- \eta )/4}} $ \;

  $ \hat{S}_{1j} \leftarrow   \{ s[0]=0, s[1] =x_{1j}\}  , j \in[1, (n+1)/2]$ \;

  \For{$ i= 2 $ \emph{\KwTo} $ \eta $ }{
    \For{$ j = 1$ \emph{\KwTo} $2^{\log (n+1) - i}$}{
      $ \hat{S}_{ij} \leftarrow$  \MinPlus{$\hat{S}_{i-1,2j-1}, \hat{S}_{i-1,2j}$} \;
      $\hat{s}[2^i-1]\leftarrow \sum_{N_{i'j'}\in T_{ij}}x_{i'j'}$. Let $\hat{S}_{ij}\leftarrow \hat{S}_{ij}\cup \{\hat{s}[2^i-1]\}$ \;
    }
  }

  \For{$ i= \eta +1$ \emph{\KwTo} $ \log (n+1)$ }{
    \For{$ j = 1$ \emph{\KwTo} $2^{\log (n+1) - i}$}{
      $ \hat{S}_{ij} \leftarrow$  \RSMinPlus {$\e_{i-1}, \e_{i},
        \hat{S}_{i-1,2j-1}, \hat{S}_{i-1,2j}$}  \;
        $\hat{s}[2^i-1]\leftarrow \prod_{l=\eta+1}^{i} (1+\e_{l}) \cdot \sum_{N_{i'j'}\in T_{ij}}x_{i'j'}$.
        Let $\hat{S}_{ij}\leftarrow \hat{S}_{ij}\cup \{\hat{s}[2^i-1]\}$ \footnotemark \;
    }
  }
  Let $\hat{s}[L]\in \hat{S}_{\log (n+1),1}$ be the smallest element of index $L$ satisfying that $L\geq n-k$.

  \KwRet $\hat{\Omega}\leftarrow$ \FindTree{$L, T$}.

\end{algorithm}

}

\vspace{0.1in}

Now we are ready to give our algorithm for \tailtree. For
each node $N_{ij}$, define an array $S_{ij}=(s[0],s[1],s[2],\ldots,s[2^i-1])$ to be the exact tail
array, where each element $s[l]$
represents the optimal tail value for \tailtree\ on $T_{ij}$, i.e., $s[l]=\min_{\Omega\in
  \mathbb{T}_{2^i-1-l} (T_{ij})}\sum_{N_{i'j'}\in T_{ij} \setminus \Omega} x_{i'j'}$. In the exact algorithm, we in fact
compute the exact tail array $S_{ij}$ for each node $N_{ij}$ through the
$(\min, +)$-convolution. Our main technique is to maintain an $\alpha$-\RS\  $\hat{S}_{ij}$ for
each $S_{ij} $. The value of $\alpha$ depends on the level $i$, which will be decided later.

\topic{\textsf{FastTailTree}} In our algorithm, we use $\mathsf{MinPlus}$ to
represent the $O(m^2/2^{c\sqrt{\log m}})$ algorithm for exact $(\min,+)$-convolutions mentioned in
\cite{williams2014faster} ($c>0$ is some fixed constant). We divide the whole tree $T$
into three \layers\ as follows.
\begin{description}
\item[Step 1:] Let $\xi = \ceil{(\log\log n -
\log(1/\e) - \log\log\log n)}, \eta = \ceil{(\log\log n +\log (1/\e))}$. For any node $N_{\xi j}$ at level $ \xi $, we use the look-up table method to
  obtain the exact tail array $S_{\xi j}$. Compute an $\e'$-\RS\ $\hat{S}_{\xi j}$ which approximates $S_{\xi j}$, where $\e' = \e/(\eta-\xi+1)$.

\item[Step 2:] From level $i=\xi+1$ to level $i=\eta$, we use $\mathsf{FastRSMinPlus}(\e',\e',\hat{S}_{i-1,2j-1},\hat{S}_{i-1,2j})$
  to compute a sequence $\hat{S}_{ij}$ for any node $N_{i j}$, and append $\hat{s}[2^i-1]= (1+\e')^{i-\xi+1} \cdot \sum_{N_{i'j'}\in T_{ij}}x_{i'j'}$ to $\hat{S}_{ij}$.
  \footnotetext{We set $\hat{s}[2^i-1]$ to be this value, because we want to guarantee that $\hat{S}_{ij}$ is still an $\e'$-\RS\ after appending $\hat{s}[2^i-1]$. This is convenient for analyzing the algorithm in Theorem \ref{thm:simtail}.}
  Compute an $\e$-\RS\ $\hat{S}_{\eta j}$ which approximates $S_{\eta j}$ for any node $N_{\eta j}$.

\item[Step 3:] From level $i=\eta+1$ to level $i=\log (n+1)$, we use $ \mathsf{FastRSMinPlus}(\e_{i-1},\e_i,\hat{S}_{i-1,2j-1},\hat{S}_{i-1,2j})$
  to compute a sequence $\hat{S}_{ij}$ for any node $N_{i j}$, where $\e_i = \frac{\e}{ 3^{(i- \eta)/4}}$ ($\eta \leq i\leq \log (n+1)$). We then append $\hat{s}[2^i-1]= (1+\e')^{\eta-\xi+1} \cdot \prod_{l=\eta}^{i} (1+\e_{l}) \cdot \sum_{N_{i'j'}\in T_{ij}}x_{i'j'}$ to $\hat{S}_{ij}$.

\item[Step 4:] Let $\hat{s}[L]\in \hat{S}_{\log (n+1),1}$ be the smallest element of index $L$ satisfying that $L\geq n-k$. Output $\hat{\Omega}\leftarrow$ \FindTree{$L, T$}. Here, $\mathsf{FindTree}$ is a
backtracking process with running time $O(n)$ which obtains a feasible solution $\hat{\Omega}$. We defer the details
in Algorithm~\ref{alg:findtree} in Appendix~\ref{sec:extend}.

\end{description}

\vspace{0.1in}

Before analyzing \textsf{FastTailTree}, we give some intuitions about why we compute
$(\alpha,\beta)$-\RS\ $(\min, +)$-convolutions.
Note that the weight $x_i$ of each node is an integer at most $ O(n\log n/\e)$.
Thus, the maximum value in each $S_{ij}$ is at most $O(n^2\log
n / \e)$. In our algorithm, we use a sequence $\hat{S}_{ij}$ to approximate $S_{ij}$.
By Definition~\ref{def:seqapp} and~\ref{def:sparsemin}, the number of elements in $\hat{S}_{ij}$ is
at most $ \log_{(1+\e_i)} (n^2\log n /\e) = O(\log n/ \e_i)$,
which means that the size of
$\hat{S}_{ij}$ is sublinear on $n$. Thus, if the level $i$ is high enough, the array $\hat{S}_{ij}$
maintains much fewer elements than $S_{ij}$, and can be constructed faster.
By Lemma \ref{lm:fastmin+}, we have the following corollary.

\begin{corollary}
  \label{cor:fastmin+} At Step 2, $ \mathsf{FastRSMinPlus}(\e',\e',\hat{S}_{i-1,2j-1},\hat{S}_{i-1,2j})$ can be computed in time
 $O(\frac{2^{i-1}}{\e'})$. At Step 3, $ \mathsf{FastRSMinPlus}(\e_{i-1},\e_i,\hat{S}_{i-1,2j-1},\hat{S}_{i-1,2j})$ can be computed in time
  $O(\frac{\log n}{\e_{i-1}\e_{i}})$.
\end{corollary}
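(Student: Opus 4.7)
The plan is to apply Lemma \ref{lm:fastmin+} directly in both regimes, carefully bounding the parameters $M$ and $U$ that appear in its running time $O(\min\{\log_{1+\beta} U /\alpha,\, M/\alpha\})$. First I should check that the hypotheses of Lemma \ref{lm:fastmin+} apply: at both steps the input sequences $\hat{S}_{i-1,2j-1}$ and $\hat{S}_{i-1,2j}$ are $\alpha$-\RS s by construction (they are maintained as $\e'$-\RS s in Step 2, and as $\e_{i-1}$-\RS s in Step 3), and their entries are either $0$ or at least $1$ because the discretized weights $\hat{x}_{ij}=(1+\e)^{\lceil \log_{1+\e} x_{ij}\rceil}$ are positive integers and the tail array values are sums of such integers. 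Also, in Step 3 the parameters satisfy $\beta=\e_i\le \e_{i-1}=\alpha\le \e\le 1$ because $\e_i=\e/3^{(i-\eta)/4}$ is monotonically decreasing in $i$, so the condition $0\le \beta\le \alpha\le 1$ required by Lemma \ref{lm:fastmin+} is met.

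For Step 2, I would use the $M/\alpha$ bound. The largest index appearing in $\hat{S}_{i-1,2j-1}$ or $\hat{S}_{i-1,2j}$ is at most $|T_{i-1,2j-1}|=2^{i-1}-1$ (the number of nodes in the subtree), hence $M\le 2^{i-1}$. With $\alpha=\e'$, Lemma \ref{lm:fastmin+} gives a running time of $O(2^{i-1}/\e')$, which is the desired bound. This branch of the bound is the relevant one at low levels because the subtrees are small enough that iterating over indices is cheap.

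For Step 3, I would instead use the $\log_{1+\beta} U/\alpha$ bound, since the subtrees near the root become large and we want running time independent of $2^i$. I need to bound $U=\max\{\hat{a}[i_{m_1}],\hat{b}[j_{m_2}]\}$, the maximum entry of the input $\alpha$-\RS s. Every entry of $\hat{S}_{i-1,j}$ corresponds to a partial sum of discretized weights in a subtree of size at most $2^{i-1}-1\le n$, scaled by the accumulated approximation factors $(1+\e')^{\eta-\xi+1}\prod_{l=\eta}^{i-1}(1+\e_l)$, all of which are bounded by a constant. Since each discretized weight is at most $O(n\log n/\e)$, we obtain $U=O(n^2\log n/\e)$, hence $\log_{1+\e_i} U = O(\log U/\e_i) = O(\log n/\e_i)$. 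Combined with $\alpha=\e_{i-1}$, Lemma \ref{lm:fastmin+} yields $O(\log n/(\e_{i-1}\e_i))$, as claimed.

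The only mildly delicate point is the bound on $U$ in Step 3: one must verify that the auxiliary entries $\hat{s}[2^i-1]$ appended at the end of each array do not blow $U$ up beyond $\mathrm{poly}(n)$. Because the accumulated inflation factor $(1+\e')^{\eta-\xi+1}\prod_{l=\eta}^{i}(1+\e_l)$ telescopes to $O(1)$ (as $\sum_l \e_l$ is a geometric series in $\e$), the bound $U=O(n^2\log n/\e)$ still holds, and the logarithm absorbs both $n$ and $1/\e$ into an $O(\log n)$ factor. Given the hypotheses and the weight discretization, the corollary then follows immediately from Lemma \ref{lm:fastmin+}.
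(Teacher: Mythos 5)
Your proposal is correct and follows essentially the same route as the paper: bound the largest index by $M\le 2^{i-1}$ for Step 2 and the largest value by $U=O(n^2\log n/\e)$ for Step 3, then invoke Lemma \ref{lm:fastmin+} with the appropriate branch of the $\min$. Your extra verification of the hypotheses (entries being $0$ or $\ge 1$, the condition $\beta\le\alpha\le 1$, and the $O(1)$ accumulated inflation factors) is more careful than the paper's one-line argument and is welcome, though note the discretized weights $(1+\e)^{\lceil\log_{1+\e}x_{ij}\rceil}$ need not be integers --- only the weaker property ``each entry is $0$ or at least $1$'' is needed, and that is what actually holds.
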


\begin{proof}
  For each node $N_{ij}$, the largest index of $\hat{S}_{i-1,2j-1}$ or $\hat{S}_{i-1,2j}$ is at most
  $M\leq 2^{i-1}$. On the other hand, the maximum value $U$ in $\hat{S}_{i-1,2j-1}$ or
  $\hat{S}_{i-1,2j}$ is at most $O(n^2\log n / \e)$. By Lemma \ref{lm:fastmin+}, we prove the
  corollary.
\end{proof}

Now we are ready to give the following main theorem.

\begin{thm}
  \label{thm:simtail}
  Algorithm \textsf{FastTailTree} is a $ (1+ \e)$-approximation algorithm with running time $ O( \e^{-1}n(\log\log\log n)^2 ) $
  for \tailtree.
\end{thm}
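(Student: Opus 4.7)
The plan is to handle the approximation ratio and the running time separately, both via a level-by-level induction up the tree.

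For the approximation ratio, I would track a multiplicative factor $\rho_i$ such that the completion of $\hat{S}_{ij}$ is a pointwise $\rho_i$-approximation of the exact tail array $S_{ij}$ in the sense of Definition~\ref{def:seqapp}. Step~1 provides the base case: the look-up table yields the exact array $S_{\xi j}$, and the conversion to an $\e'$-\RS\ loses at most a factor $1+\e'$, so $\rho_\xi=1+\e'$. For the inductive step I would combine two facts: (i) if the child completions are pointwise $\rho_{i-1}$-approximations of the corresponding exact arrays, then their exact $(\min,+)$-convolution is a pointwise $\rho_{i-1}$-approximation of the true convolution on indices $0,\ldots,2^{i}-2$, by monotonicity of $\min$ and $+$; and (ii) Lemma~\ref{lm:fastpre} guarantees that \textsf{FastRSMinPlus} with output parameter $\beta$ returns a $(1+\beta)$-\RS\ whose completion is a $(1+\beta)$-approximation of the convolution of the input completions. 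Composing gives $\rho_i=(1+\beta)\rho_{i-1}$, and the explicit choice $\hat{s}[2^{i}-1]=\rho_i\cdot s[2^{i}-1]$ extends the approximation to the top index while preserving the \RS\ property. Unfolding the recurrence, Step~2 yields $\rho_\eta=(1+\e')^{\eta-\xi+1}$ and Step~3 yields $\rho_{\log(n+1)}=(1+\e')^{\eta-\xi+1}\prod_{i=\eta+1}^{\log(n+1)}(1+\e_i)$.

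To convert this product into a $(1+\e)$ factor, I would use $(1+\e')^{\eta-\xi+1}\le e^{\e}$ from the choice $\e'=\e/(\eta-\xi+1)$, and $\prod_i(1+\e_i)\le\exp\bigl(\sum_i\e_i\bigr)=\exp\bigl(\e\sum_{j\ge 1}3^{-j/4}\bigr)$, where the series converges to a constant since $3^{1/4}>1$. Rescaling $\e$ by the resulting constant yields exactly $1+\e$, and since $\hat{s}[L]$ at the root with $L\ge n-k$ is then within a $(1+\e)$ factor of $\OPT$, \textsf{FindTree} recovers a witness support $\hat{\Omega}\in\mathbb{T}_k$ realizing this value.

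For the running time I would sum the per-node cost given by Corollary~\ref{cor:fastmin+} over each level. Step~1 spends $o(n)$ on building the look-up table and $O(n)$ in total on queries and the initial $\e'$-\RS\ constructions. In Step~2, level $i$ has $n/2^i$ nodes and the merge costs $O(2^{i-1}/\e')$, contributing $O(n/\e')$ per level; across the $\eta-\xi=O(\log\log\log n)$ levels this totals $O(n(\log\log\log n)/\e')=O(n(\log\log\log n)^2/\e)$. In Step~3 the merge costs $O(\log n/(\e_{i-1}\e_i))$ per node, so level $i$ contributes a term proportional to $(3^{1/2}/2)^{i-\eta}\cdot n\log n/(\e^2\cdot 2^\eta)$; summing the geometric series (ratio $3^{1/2}/2<1$) together with $2^\eta=(\log n)/\e$ collapses the total to $O(n/\e)$. \textsf{FindTree} is $O(n)$, so the Step~2 contribution dominates and yields the claimed $O(\e^{-1}n(\log\log\log n)^2)$ bound.

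The main obstacle I expect is careful bookkeeping at the level boundaries: checking that appending $\hat{s}[2^{i}-1]$ leaves $\hat{S}_{ij}$ a legitimate \RS\ (the gap to the previous element must be at least $1+\e'$, resp.\ $1+\e_i$), and that every call to \textsf{FastRSMinPlus} receives valid $\alpha$-\RS\ input in the regime $0<\beta\le\alpha\le 1$ assumed by Lemma~\ref{lm:fastpre} --- in particular at the transition between Steps~2 and~3 where the parameter switches from the constant $\e'$ to the shrinking $\e_i$. The two simultaneous design constraints in Step~3, namely that $\sum_i\e_i$ converges so the approximation stays $(1+\e)$ while $1/(\e_{i-1}\e_i)$ grows slowly enough that the total time remains $O(n/\e)$ after geometric telescoping, are exactly what pin down the particular choice $\e_i=\e/3^{(i-\eta)/4}$, and these two inequalities need to be cross-checked.
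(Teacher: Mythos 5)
Your proposal is correct and follows essentially the same route as the paper's proof: a level-by-level induction tracking the accumulated multiplicative approximation factor (with the appended top element handled separately so the \RS\ property survives), followed by the same per-level cost accounting in which Step~2 contributes the $O(\e^{-1}n(\log\log\log n)^2)$ term and Step~3 telescopes geometrically to $O(\e^{-1}n)$ because $\sqrt{3}<2$, and the same final rescaling of $\e$ by a constant. The boundary issues you flag (validity of the appended element as part of an \RS, and the parameter regime of Lemma~\ref{lm:fastpre} at the Step~2/Step~3 transition) are exactly the points the paper verifies explicitly, so nothing is missing.
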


\begin{proof}
  We first prove the running time. There are $(n+1)/2^{\xi}$ nodes at level $\xi$. We
need $O(2^{\xi})$ time to compute each exact tail array through searching the look-up table and need $O(2^{\xi})$ time to compute an $\e'$-\RS. Thus,
the  runtime of Step 1 is $O( 2^{-\xi}(n+1) \cdot 2^\xi) = O(n)$. Considering Step 2, the running time for each node $N_{ij}$ is at most $O(2^i/\e')$. Thus, the total time of Step 2 is
$$
  \sum_{i \in (\xi, \eta  ]} \frac{n+1}{2^i}\cdot \frac{ 2^{i}}{\e'} =O ( \e^{-1}n((\log\log\log n)^2 + \log^2(1/\e) )).
$$
For Step 3, the running time for each node $N_{ij}$ is at most $O(\log n/ (\e_{i}\e_{i-1}))  $ by Corollary~\ref{cor:fastmin+}.  Thus, the total time  of Step 3 is
$$
  \sum_{i\in (\eta ,\log n ]} \frac{n+1}{2^i}\cdot \frac{\log  n}{\e_{i-1}\e_{i}} = O(\e^{-1} n).
$$
The running time of Step 4 is $O(n)$. Overall, the total running time is  $O(n + \e^{-1}n(\log\log\log n)^2 +\e^{-1}n) = O( \e^{-1}n(\log\log\log n)^2 )  $.

  Then we prove the correctness by showing that
  our solution $\hat{\Omega}$ is a $(1+\e)$-approximation for the optimal solution $\Omega^*$.
  We first prove by induction that for each node $N_{ij}$ at level $\xi\leq i\leq \eta$, the array $\hat{S}_{ij}$ is a $\bigl((1+\e')^{i-\xi+1}-1\bigr)$-\RS\ which approximates the exact tail array $S_{ij}$.
  The base case at level $\xi$ is true since each sequence $\hat{S}_{\xi j}$ is an $\e'$-\RS\ of the exact tail array $S_{\xi j}$ at Step 1. 
   Then we suppose that for level $i-1$ ($\xi+1\leq i\leq \eta$), any sequence $\hat{S}_{i-1,j}$ is an $\e^*=\bigl((1+\e')^{i-\xi}-1\bigr)$-\RS\ which approximates the array $S_{i-1,j}$. We consider an arbitrary node $N$ and its sequence $\hat{S}$ at level $i$. Let $S'_1$ be the completion of the sequence $\hat{S}_{1}$ maintained by $N$'s left child. Let $S'_2$ be the completion of the sequence $\hat{S}_{2}$ maintained by $N$'s right child. Let $S_1$ and $S_2$ be the exact tail arrays of $N$'s left and right children respectively. Let $S$ be the exact $(\min,+)$-convolution of $S_1$ and $S_2$, i.e., $S$ is the exact tail array of $N$ without the last term $s[2^i-1]$.
  By induction, we know that the two arrays $S'_1$ and $S'_2$ are $\e^*$-approximations of $S_1$ and $S_2$ respectively.
  Let $\tilde{S}$ be the exact $(\min,+)$-convolution of $S'_1$ and
  $S'_2$. Let $S'$ be the completion of $\hat{S}$ (without the element $\hat{s}[2^i-1]$).
  By Definition \ref{def:seqapp} and~\ref{def:sparsemin}, we have that $S'$ is
  an $\e'$-approximation of $\tilde{S}$.

  Consider any element $\tilde{s}[l]
  \in \tilde{S}$ such that $\tilde{s}[l] = a'[l_1] + b'[l_2]$ for $l_1+l_2 =l, a'[l_1]\in S'_{1} ,
  b'[l_2] \in  S'_{2}$. By induction, we have that $ a[l_1] \leq a'[l_1] $ for $a[l_1]\in S_1$ and $
  b[l_2]\leq
  b'[l_2] $ for $b[l_2]\in S_2$. Therefore, we have that
  $$s[l]\leq a[l_1]+b[l_2] \leq a'[l_1] +  b'[l_2]=\tilde{s}[l] \leq s'[l].$$
  The last inequality follows from the fact that $S'$ is an $\e'$-approximation of $\tilde{S}$. On
  the other hand, consider any element $s[l]\in S$ such that $s[l]=a[l_1]+b[l_2]$ for $l_1+l_2 =l,
  a[l_1]\in S_{1} , b[l_2] \in  S_{2}$. By induction, we have that $ a'[l_1] \leq (1+\e') a[l_1] $
  for $a'[l_1]\in S'_1$ and $ b'[l_2]\leq (1+\e^*)b[l_2] $ for $b'[l_2]\in S'_2$. Thus, we conclude
  that
  $$ s'[l]\leq (1+\e')\tilde{s}[l] \leq (1+\e')(a'[l_1]+b'[l_2])\leq (1+\e')(1+\e^*)(a[l_1]+b[l_2]) =s[l].
  $$
  By the above argument, $S'$ is a $\bigl((1+\e')(1+\e^*)-1\bigr)$-approximation of $S$. More
  specifically, we have the following inequality
  $$s'[2^i-2]=a'[2^{i-1}-1]+b'[2^{i-1}-1]\leq (1+\e^*)(a[2^{i-1}]+b[2^{i-1}-1])=(1+\e^*)s[2^{i}-2].
  $$
  Now we consider the element $\hat{s}[2^i-1]$ appended to $\hat{S}$ at Step 2. On one hand, since the exact tail value $s[2^i-1]=\sum_{N_{i'j'}\in T_{ij}}x_{i'j'}$, we have that $s[2^i-1]\leq \hat{s}[2^i-1]\leq (1+\e')(1+\e^*)s[2^i-1]$. On the other hand, we have
  $$(1+\e')\hat{s}[2^i-2]=(1+\e')s'[2^i-2]\leq (1+\e')(1+\e^*)s[2^i-2]\leq (1+\e')(1+\e^*)s[2^i-1]\leq \hat{s}[2^i-1].$$
  The last inequality follows from the fact that $\hat{s}[2^i-1]=(1+\e')(1+\e^*)s[2^i-1]$. Thus, we conclude that $\hat{S_{ij}}$ is still an $\e'$-\RS\ approximating the exact tail array $S_{ij}$, which proves the induction. 
  
  By a similar reduction, we can prove that $\hat{S}_{\log (n+1),1}$ is a $\bigl((1+\e')^{\eta-\xi+1} \cdot \prod_{l=\eta}^{\log (n+1)} (1+\e_{l})-1\bigr)$-\RS\ which approximates $S_{\log (n+1),1}$. Overall, the approximation ratio for the root array $\hat{S}_{\log(n+1),1}$ is
 $1+ O(\e)$.
  Therefore, let $S'_{\log(n+1),1}$ be the completion of the sequence $\hat{S}_{\log(n+1),1}$
  maintained in the root node. Let $s'[n-k]\in S'_{\log(n+1),1}$, we have that $\hat{s}[L]\leq
  (1+\e_{\log (n+1)}) s'[n-k]\leq (1+O(\e))s[n-k]$ for $s[n-k]\in S_{\log (n+1),1}$. By using a small enough value $\theta(\e)$ to replace $\e$, we can
  guarantee that the value $\hat{s}[L]$ is a $(1+\e)$-approximation tail value for \tailtree.
\eat{
  By the backtracking process $\mathsf{FindTree}(L, T)$, we obtain a support $\hat{\Omega}$ with a
  tail value at most $\hat{s}[L]$ since each element $\hat{s}[L]$ is at least as large as the exact
  tail value $s[L]$ by the algorithm. On the other hand, $|\hat{\Omega}|\leq k$ since $L\geq n-k$.
}

\end{proof}

\subsection{A Linear Time Algorithm for \tailtree\ if $k \leq n^{1-\delta}$}

For a special case that  $k\leq n^{1-\delta}$ for some fixed constant  $\delta
\in(0, 1]$, we can further improve the running time to linear for \tailtree. Note that in practice,
this  is a reasonable assumption which generalizes the assumption $k\leq n^{1/2-\delta}$ in the previous
work~\cite{hegde2014nearly}. Our main approach is to show that we can safely ignore many nodes
at low levels.

We divide the tree into two \layers. Let $\eta = \ceil{2\log\log n}$.
The first \layer\ is from level $1$ to $\eta$ and the
second \layer\ is from level $(\eta +1)$ to $\log (n+1)$. For the second \layer, we still use
\textsf{FastRSMinPlus} algorithm to maintain an approximate tail array. The difference is that for the first \layer, we show that we only need to consider at most $O( n^{1-\delta}/\e)$ nodes. Recall that
$T_{ij}$ is the perfect binary subtree rooted at $N_{ij}$, and $u_{ij}=\sum_{N_{i'j'}\in T_{ij}}x_{i'j'}$ is the
total subtree weight of $N_{ij}$. Note that there are at most
$\bigl((n+1)/\log^2 n\bigr)$ nodes at level $\eta$. Let $u$ be the $\ceil{(1 +\e )n^{1-\delta}/\e }$-largest total subtree weight among these nodes $\{N_{\eta j}\}_j$. We argue that we can safely ignore all subtrees $T_{\eta j}$ if its corresponding total subtree weight $u_{\eta j}<u$.
The details can be found in Algorithm~\ref{alg:lisimtail}.

\begin{algorithm}[htb]
  \label{alg:lisimtail}
  \SetKwFunction{MinPlus}{MinPlus}
  \SetKwFunction{FastRSMinPlus}{FastRSMinPlus}
  \SetKwFunction{FindTree}{FindTree}
  \KwData{ A tree $T$ together with node weights $\{x_{ij}\}_{ij}$, an integer $k\in [n]$}
  \KwResult{A subtree $\hat{\Omega}$}
  Initialize  $\eta = \ceil{2\log\log n}$, $\e_i = \frac{\e}{3^{(i-\eta)/4}},  i \in [\eta, \log n] $ \;
  Compute  $u_{\eta j}=\sum_{N_{i'j'}\in T_{\eta j}}x_{i'j'}   ,  j \in [1, 2^{\log (n+1) -
    \eta}]$. Let $u$ be the $\ceil{\frac{(1+\e)n^{1-\delta}}{\e}}$-largest element among
  $\{u_{\eta j}\}_j$ (breaking ties arbitrarily).
  \textbf{Delete} all subtrees $T_{\eta j}$ from $T$ if $u_{\eta j}< u$ \;

  $ \hat{S}_{1j} \leftarrow   \{ s[0]=0, s[1] =x_{1j}\} $ , for each $ N_
  {1j}$ which is  not  deleted \;

  \For{$ i= 2 $ \emph{\KwTo} $ \eta $ }{
    \For{ each $ N_
  {ij}$ which is  not  deleted}{
      $ \hat{S}_{ij} \leftarrow$  \MinPlus{$\hat{S}_{i-1,2j-1}, \hat{S}_{i-1,2j}$} \;
      $\hat{s}[2^i-1]\leftarrow \sum_{N_{i'j'}\in T_{ij}}x_{i'j'}$. Let $\hat{S}_{ij}\leftarrow \hat{S}_{ij}\cup \{\hat{s}[2^i-1]\}$ \;
    }
  }

  For each $N_{\eta j}$ which is not deleted, $\hat{S}_{\eta j}\leftarrow$ an $\e_{\eta}$-\RS\ which approximates $\hat{S}_{\eta j}$ \;

  \For{$ i= \eta +1$ \emph{\KwTo} $ \log (n+1)$ }{
    \For{$ j = 1$ \emph{\KwTo} $2^{\log (n+1) - i}$}{
      $ \hat{S}_{ij} \leftarrow$  \FastRSMinPlus {$\e_{i-1}, \e_{i},
        \hat{S}_{i-1,2j-1}, \hat{S}_{i-1,2j}$}  \;
        $\hat{s}[2^i-1]\leftarrow \prod_{l=\eta}^{i} (1+\e_{l}) \cdot \sum_{N_{i'j'}\in T_{ij}}x_{i'j'}$.
        Let $\hat{S}_{ij}\leftarrow \hat{S}_{ij}\cup \{\hat{s}[2^i-1]\}$ \footnotemark \;
    }
  }
  Let $\hat{s}[L]\in \hat{S}_{\log (n+1),1}$ be the smallest element of index $L$ satisfying that $L\geq n-k$.

  \KwRet $\hat{\Omega}\leftarrow$ \FindTree{$L, T$}

  \caption{LinearTailTree: A linear time $(1+\e)$-approximation for \tailtree.}
\end{algorithm}

\begin{thm}
  \label{thm:lisimtail}
  Algorithm~\ref{alg:lisimtail} is a $(1+\e)$-approximation algorithm  with running time
  $O(n+ \e^{-2} n / \log n)$ for \tailtree\ if $k \leq n^{1-\delta}$.
\end{thm}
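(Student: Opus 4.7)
The plan is to prove Theorem~\ref{thm:lisimtail} in two parts: first, a structural lemma showing that the pruning step (Line~2 of Algorithm~\ref{alg:lisimtail}, discarding every level-$\eta$ subtree with $u_{\eta j}<u$) costs at most a $(1+\e)$ factor in the tail value; second, a running-time accounting that mirrors the proof of Theorem~\ref{thm:simtail} but now starts at the higher level~$\eta=\ceil{2\log\log n}$.

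For correctness, let $\Omega^*$ be the optimal support, let $D$ (resp.\ $S$) denote the index set of deleted (resp.\ kept) level-$\eta$ subtrees, and set $I=\{j:\Omega^*\cap T_{\eta j}\neq\emptyset\}$. Because $\Omega^*$ is a rooted subtree of size $k\leq n^{1-\delta}$, it can enter at most $k$ distinct level-$\eta$ subtrees, so $|I|\leq k$. Define $\Omega'=\Omega^*\setminus\bigcup_{j\in I\cap D} T_{\eta j}$; since the deletion excises complete subtrees hanging off level $\eta$, $\Omega'$ is still a rooted subtree of size $\leq k$ supported on kept nodes, hence a feasible candidate for the restricted instance the algorithm solves. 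The tail increase is bounded by $\sum_{j\in I\cap D} u_{\eta j}\leq k\cdot u$. To bound $u$, note that $S$ contains $K=\ceil{(1+\e)n^{1-\delta}/\e}$ subtrees, so at least $K-k\geq n^{1-\delta}/\e$ of them are untouched by $\Omega^*$; each such untouched kept subtree contributes its full weight $\geq u$ to $\OPT$, giving $\OPT\geq (K-k)u\geq n^{1-\delta}u/\e$, i.e., $u\leq \e\,\OPT/n^{1-\delta}$. Hence $\text{tail}(\Omega')\leq \OPT+ku\leq (1+\e)\OPT$. Since levels $\eta+1$ through $\log(n+1)$ invoke \textsf{FastRSMinPlus} exactly as in Algorithm~\textsf{FastTailTree}, the inductive $\alpha$-\RS\ argument from Theorem~\ref{thm:simtail} (accumulating distortion $\prod_{i>\eta}(1+\e_i)=1+O(\e)$ on the restricted tree) yields $\text{tail}(\hat\Omega)\leq (1+O(\e))\,\text{tail}(\Omega')\leq (1+O(\e))\,\OPT$; rescaling $\e$ by a suitable constant absorbs the $O(\cdot)$.

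For the running time, computing all $u_{\eta j}$ in a single bottom-up sweep and selecting the $K$-th largest takes $O(n)$ time. In the first \layer, only $K=O(n^{1-\delta}/\e)$ subtrees of size $O(\log^2 n)$ survive; using Williams' $(\min,+)$-convolution level-by-level inside each surviving subtree costs $O(\log^4 n/2^{c\sqrt{\log\log n}})$ per subtree, totalling $O(n^{1-\delta}\log^4 n/\e)=o(\e^{-2}n/\log n)$ for any fixed $\delta>0$; the subsequent $\e_\eta$-\RS\ compression is dominated by this. For the second \layer, Corollary~\ref{cor:fastmin+} gives per-node cost $O(\log n/(\e_{i-1}\e_i))$, and $\sum_{i=\eta+1}^{\log(n+1)}\frac{n+1}{2^i}\cdot\frac{\log n}{\e_{i-1}\e_i}$ is a geometric series (since $2>\sqrt{3}$) dominated by its first term $\Theta(n\log n/(2^\eta \e^2))=\Theta(\e^{-2}n/\log n)$ because $2^\eta=\Theta(\log^2 n)$. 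Adding the final $O(n)$ \FindTree\ call yields the claimed $O(n+\e^{-2}n/\log n)$ bound. The main obstacle will be establishing $\OPT\geq (K-k)u$: one must argue carefully that enough top-$K$ subtrees are disjoint from $\Omega^*$, which is where the hypothesis $k\leq n^{1-\delta}$ is used crucially through the slack $K-k\geq n^{1-\delta}/\e$. Once this slack argument is in place, the rest follows by plugging the feasible candidate $\Omega'$ into the Theorem~\ref{thm:simtail} framework.
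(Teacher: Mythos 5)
Your proposal is correct and follows essentially the same route as the paper's proof: the key step in both is the slack argument that at least $K-k\geq n^{1-\delta}/\e$ kept level-$\eta$ subtrees are disjoint from $\Omega^*$, giving $\OPT\geq n^{1-\delta}u/\e$ and hence a pruning loss of at most $\e\,\OPT$, after which the restricted instance is handled exactly by the Theorem~\ref{thm:simtail} machinery. Your running-time accounting (first \layer\ of $O(n^{1-\delta}/\e)$ surviving subtrees costing $o(n)$, second \layer\ summing to $O(\e^{-2}n/\log n)$) also matches the paper's.
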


\begin{proof}

  We first prove the correctness.
  Let $\calC=\{T_{\eta j}: \ u_{\eta j}\geq u\} $ be the collection of those subtrees with total
  subtree weight at least $u$. Let $\bcalC=\{T_{\eta j}\}_j \setminus \calC$ be the complement of
  $\calC$.
  We argue that the influence caused by deleting the subtrees in $ \bar{\mathcal{C}}$ in Step 2 is
  negligible. Let $\Omega^*$ be the optimal support with the optimal tail value $\OPT =
  \sum_{N_i \notin \Omega^*} x_i$. Let $\check{\Omega}$ be the optimal subtree of the case, in which  we delete
  all subtrees in $\bar{\mathcal{C}}$.
  Let $\check{\OPT} = \sum_{N_i \notin \check{\Omega}}x_i$. Note that our algorithm obtains a
  $(1+\epsilon)$-approximation $\hat{\Omega}$ of $\check{\Omega}$ following from the analysis in
  Theorem~\ref{thm:simtail} and Corollary \ref{cor:fastmin+}.

  Thus, we only need to prove that $\check{\OPT} \leq (1+\epsilon) \OPT$. By the assumption that $k \leq n^{1-\delta}$, $\Omega^*$ contains at most $n^{1-\delta}$ nodes in
  $\bcalC$, which have a total weight at most $n^{1-\delta} u$. It means that $\check{\OPT}  - \OPT \leq  n^{1-\delta} u$. On the other hand, there are at
  least $n^{1-\delta}/\e$ subtrees in $\calC$ that do not intersect $\Omega^*$, since $\Omega^*$ can contain at most $n^{1-\delta}$ nodes in
  $\calC$. Thus, we have that $\OPT \geq n^{1-\delta}
  u/\e$. Hence, $\check{\OPT}  - \OPT \leq \e \OPT$ which proves the correctness.

  Then we analyze the running time. It costs $O(n)$ time to compute all $u_{\eta j}$ and $u$ in Step 2. For each node $N_{ij}$ at level $1\leq i\leq \eta$, it costs $O( 2^{2i-c\sqrt{i}})$ time to compute $\hat{S_{ij}}$ using the procedure \textsf{MinPlus}. Among each subtree in $\calC$, the number of nodes at level $1\leq i\leq \eta$
  is $2^{\eta-i} =O (\log^2 n / 2^i)$. On the other hand, there are at most $O(n^{1-\delta}/\e)$ trees in
  $\calC$. Thus, the total running time from Step 3 to Step 7 is
  $$
  O\left(\frac{n^{1-\delta}}{\e}\right) \sum_{i= 1}^{\eta}O \left(\frac{\log^2 n}{2^i}\cdot 2^{2i-c\sqrt{i}}\right) =
  O(\e^{-1} n^{1-\delta} \log^4 n) = o(n).
  $$

  Considering Step 8, it costs $O(2^{\eta})$ time for each node $N_{\eta j}$. Thus, the total running time for Step 8 is $O(\frac{n^{1-\delta}}{\e} \cdot 2^{\eta})=o(n)$. By Corollary~\ref{cor:fastmin+},  the construction time of all $\hat{S}_{ij}$ at level $\eta +1 \leq i \leq \log (n+1)$ from Step 9-12 is
  $$
\sum_{i =\eta+1}^{\log n}O \left(\frac{n+1}{2^i} \cdot \frac{\log n}{\e_i^2} \right) =O(\e^{-2} n/\log n )
=o(n).
$$
  Finally, the backtracking process \textsf{FindTree}$(L,T)$ in Step 14 costs $O(n)$ time . Therefore, the total running time of Algorithm \ref{alg:lisimtail} is $O(n+ \e^{-2}n/\log n)$.
\end{proof}

\subsection{A Linear Time Algorithm for \headtree }
\label{sec:headlinear}

Now we consider the \headtree\ version. Recall that our goal is to find a subtree $\Omega\in
\mathbb{T}_k$ such that $\sum_{N_{ij} \in \Omega}x_{ij} \geq (1-\e)\sum_{N_{ij} \in \Omega^*}x_{ij}$, where
$\Omega^*$ is the optimal solution of the \headtree\ problem.
In this subsection, we denote $\OPT_H=\sum_{N_i\in \Omega^*}x_i$ to be the optimal head value for \headtree.
Our framework is similar to the framework for \tailtree. We again assume that each node weight $x_{ij}$ is an integer among  $[0, O(n\log n / \epsilon)]$. Thus
there are at most $O(n\log n / \epsilon)$ different weight values. Similar to \tailtree, we can remove this assumption by a weight discretization technique, see Appendix \ref{sec:weight} for details. By this assumption, we still construct a dynamic program for \headtree. However, our techniques and definitions have some differences. We then show the differences in details in the following.

\topic{Approximate $(\max, +)$-Convolution} At first, we introduce
another concept called $(\max, +)$-convolution which is similar to $(\min, +)$-convolution (see
Definition~\ref{def:min+}).

\begin{defn}[$(\max, +)$-convolution]
  \label{def:max+}
  Given two  arrays $A = (a[0], a[1], a[2], \ldots, a[m_1])$ and $B = (b[0], b[1], b[2], \ldots,
  b[m_2])$, their  $(\max,   +)$-convolution  is the array $S = (s[0], s[1], s[2], \ldots ,$ $
  s[m_1+m_2])$ where $s[t] =  \max_{i=0}^{t} \{a[i] + b[t-i]\} , t \in  [0, m_1+m_2]$.
\end{defn}

The only difference from $(\min, +)$-convolution is that $s_t = \max_i \{a_i + b_{t-i}\}$. In fact,
these two definitions are equivalent. Suppose that  $-S = (-s[0], -s[1],
\ldots , -s[m_1+m_2])$ is the $(\min, +)$-convolution of $-A= (-a[0],
-a[1],   \ldots, $ $-a[m_1])$ and $-B= (-b[0], -b[1], \ldots, -b[m_2])$. Then $S = (s[0], s[1],
\ldots , s[m_1+m_2])$ is exactly the $(\max,+)$-convolution of  two arrays $A$ and $B$.

For each node $N_{ij}$ on the tree $T$, we define $S_{ij}=(s[0], s[1],\ldots, s[2^i-1])$ to be the
head array of $N_{ij}$, where each element $s[l]$ represents the optimal head value for \headtree\
on $T_{ij}$, i.e., $s[l]=\max_{\Omega\in \mathbb{T}_{l} (T_{ij})}\sum_{N_{i'j'}\in T_{ij}}
x_{i'j'}$. In fact, the array $S_{ij}$ can be achieved through computing the $(\max, +)$-convolution
from the arrays $S_{i-1, 2j-1}$ and $S_{i-1, 2j}$ of its two children.\footnote{Note that for each
  element $s[l]\in S_{ij}$ $(1\leq l\leq 2^i-1)$, we have that
  $s[l]=x_{ij}+\max_{t}\{a[t]+b[l-1-t]\}$ where $a[t]\in S_{i-1,2j-1}$ and $b[l-1-t]\in
  S_{i-1,2j}$.}
Similar to \tailtree, our key approach is to maintain a head sequence $\hat{S}_{ij}$ as an
approximation of $S_{ij}$ which reduces the running time. We first introduce some concepts to
describe $\hat{S}_{ij}$.

\begin{defn}[Head-Completion of $\alpha$-\RS]
  \label{def:headextended}
  Consider an $\alpha$-\RS\ $\hat{A} =(\hat{a}[i_1], \hat{a}[i_2] \ldots,  \hat{a}[i_m])$. Define its
  head-completion of cardinality $M$ by an array $ A' =(a'[0], a'[1], \ldots, a'[M]) $ satisfying
  that: 1) If $0\leq t \leq i_1-1, a'[t] = 0$; 2)  If $i_v\leq t \leq i_{v+1}-1 \ (1\leq v\leq m-1),
  a'[t] = \hat{a}[i_{v}]$; 3) If $i_m\leq t \leq M, a'[t] = \hat{a}[i_m]$.
\end{defn}

\begin{defn}[Head-Sequence Approximation]
\label{def:headseqapp}

  Given two $n$-length non-decreasing arrays $A' = (a'[0], a'[1], \ldots, a'[n])$ and $A = (a[0], a[1], \ldots,
  a[n])$, we say $A'$ is an $\alpha$-head-approximation of  $A$  if for any $i$, $(1-\alpha)a[i]
  \leq a'[i]  \leq a[i] $. We say an $\alpha$-\RS\ $\hat{A}$ head-approximates an array $A$ if its
  head-completion $A'$ of cardinality $n$ is an $\alpha$-head-approximation of $A$.

\end{defn}

\begin{figure}[t]
  \centering
  \begin{tikzpicture}
    \draw[<->] (0,4.5) node[left]{value} -- (0,0) -- (9,0) node[below]{$k$} ;
    \draw[dashed] (0,0) -- (1,0) -- (1,0.3) -- (2,0.3) -- (2,0.7) --  (3,0.7) -- (3, 1) -- (4,1) --
    (4, 1.5) -- (5,1.5) -- (5, 2) -- (6, 2) -- (6, 3.2) --(7,3.2) --(7,4) --(8,4)  -- (8,0) ;
    \draw[thick] (0,0) --(1,0) -- (1,0.3 )-- (3 ,0.3) --(3, 1) -- (6 ,1) -- (6 , 3.2) -- (8,3.2);

    \draw[fill] (0,0) circle (0.1)  ;        \node[gray, below ] at (0,0) {$a'(0)$} ;
    \draw[fill] (1,0.3) circle (0.1)  ;     \node[gray, below ] at (1,0.3) {$a'(1)$} ;
    \draw[fill] (3,1) circle (0.1)  ;        \node[gray, below ] at (3,1) {$a'(3)$} ;
    \draw[fill] (6,3.2) circle (0.1) ;      \node[gray, below  ] at (6, 3.2) {$a'(6)$} ;

    \draw[fill] (0,0) circle (0.1)  ;        \node[red, below ] at (0,-0.4) {$\hat{a}(0)$} ;
    \draw[fill] (1,0.3) circle (0.1)  ;     \node[red, below ] at (1,-0.1) {$\hat{a}(1)$} ;
    \draw[fill] (3,1) circle (0.1)  ;        \node[red, below ] at (3,0.6) {$\hat{a}(3)$} ;
    \draw[fill] (6,3.2) circle (0.1) ;       \node[red, below] at (6,2.8) {$\hat{a}(6)$} ;

    \draw[fill] (2, 0.3) circle (0.05) ;    \node[ gray, below] at (2,0.3) {$a'(2)$} ;
    \draw[fill] (4,1) circle (0.05) ;        \node[gray, below ] at (4,1) {$a'(4)$} ;
    \draw[fill] (5,1) circle (0.05) ;        \node[gray, below ] at (5,1) {$a'(5)$} ;
    \draw[fill] (7,3.2) circle (0.05) ;     \node[gray, below  ] at (7,3.2) {$a'(7)$} ;
    \draw[fill] (8,3.2) circle (0.05) ;     \node[gray, below  ] at (8,3.2) {$a'(8)$} ;

    \draw[fill] (2, 0.7) circle (0.05) ;
    \draw[fill] (4, 1.5) circle (0.05) ;
    \draw[fill] (5, 2) circle (0.05) ;
    \draw[fill] (7, 4) circle (0.05) ;
    \draw[fill] (8, 4) circle (0.05) ;

    \node[blue, above ] at (0,0) {$a(0)$} ;
    \node[blue, above ] at (1 ,0.3) {$a(1)$} ;
    \node[blue, above ] at (2,0.7) {$a(2)$} ;
    \node[blue, above ] at (3,1) {$a(3)$} ;
    \node[blue, above ] at (4,1.5) {$a(4)$} ;
    \node[blue, above ] at (5, 2) {$a(5)$} ;
    \node[blue, above ] at (6,3.2) {$a(6)$} ;
    \node[blue, above ] at (7,4) {$a(7)$} ;
    \node[blue, above ] at (8,4) {$a(8)$} ;

    \node [below] at (8,0) {$n = 8$} ;

  \end{tikzpicture}

  \caption{ The figure illustrates the concepts $\alpha$-\RS\ and its head-completion. Here,
    $\hat{A} = (\hat{a}[0], \hat{a}[1],$ $ \hat{a}[3],\hat{a}[6])$ is an  $\alpha$-\RS.  The array
    $(a'[0]=\hat{a}[0], a'[1],\ldots, a'[8])$  is the head-completion of cardinality 8 of
    $\hat{A}$. By this figure, we can see that the $\alpha$-\RS\ $\hat{A}$ head-approximates the
    array $A =  (a[0], a[1], $ $\ldots, a[8])$. }

  \label{fig:max+}
\end{figure}
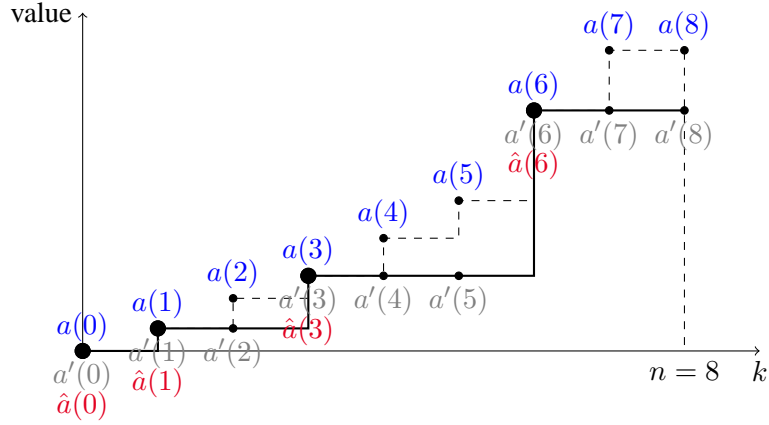

\eat{

The algorithm is similar to Algorithm~\ref{alg:fastmin+}. The input is two $\alpha$-\RS s $A$ and
$B$ and weight $x_o$ of $N$. For a threshold $\theta$, we find the
minimum $s[l] = a[i_k] + b[j_t]$ which is larger than $ \theta$. Lemma~\ref{lm:half} still holds
here. Therefore, we need only enumerate at most $O(\tau +\sigma)$ numbers of $a[i_k]$ or $b[j_t]$ for each
$ s[l_r]$ where $\sigma$ is the collision number of hash table, $\tau$ is $\ceil{1 / \alpha}$.
The slight different to Algorithm~\ref{alg:fastmin+} is that we should preserve the
$s[l_r]$ with maximum value of minimum index (i.e, Line 9 and Line 15), since we compute the $(\max,+)$-convolution.

\begin{algorithm}[t]
  \label{alg:fastmax+}
  \SetKwFunction{HashA}{HashA}
  \SetKwFunction{HashB}{HashB}
  \KwData{$\alpha, \beta, A, B, x_o$}
  \KwResult{$ \hat{S}$}
  Initialize: $ \hat{S}  \leftarrow  (s[0] = 0, s[1] = x)$,  $ \theta = 0$  \;

  For $1\leq k\leq m_1$, let \HashA   $ \leftarrow (\floor{ \log_{(1+\beta)} a[i_k] },  a[i_k] ) $ \;
  For $1\leq k\leq m_2$, let \HashB $ \leftarrow (\floor{ \log_{(1+\beta)} b[j_k] } ,  b[j_k]  )$ \;

  \While{$ \theta < (a[i_{m_1}] + b[j_{m_2}]) $}{
    $ b[j_t]  \leftarrow $ \HashB{$\ceil{\log_{1+\beta} \theta}$ },
    $ l\leftarrow j_t , s[l] \leftarrow b[j_t]  $ \;
    \For { $\Delta =0$ \emph{\KwTo} $ \min\{ k, \tau + \sigma \}$}{
      $ \delta = \max \{ 0,  \theta -  b[j_{t-\Delta}] \}$ \;
      $  a[i]  \leftarrow $ \HashA {$\ceil{\log_{1+\beta}\delta}$ }\;
      \If {$ l > i + j_{t-\Delta} $ \textbf{\emph{or}} $ ( l = i+j_{t-\Delta} $ \textbf{\emph{and}} $ s < a[i] + b[j_{t-\Delta}] ) $ }{
        $ l\leftarrow i+ j_{t-\Delta} , s[l] \leftarrow a[i] + b[j_{t-\Delta}]   $ \;
      }
    }

    $ a[i_t]  \leftarrow $ \HashA{$\ceil{\log_{1+\beta} \theta}$ }\;
    \For { $ \Delta=0$ \emph{\KwTo} $ \min\{ k, \tau + \sigma \}$}{
      $ \delta = \max \{ 0, \theta - a[j_{t-\Delta}] \}$ \;
      $ b[j] \leftarrow $ \HashB {$\ceil{\log_{1+\beta}\delta}$ }\;
      \If { $l > i_{t-\Delta} +j $ \textbf{\emph{or}} $(l = i_{t-\Delta} + j$ \textbf{\emph{and}} $s <  b[j] + a[i_{t-\Delta}]  )$ }{
        $ l\leftarrow j+ i_{t-\Delta}, s[l]\leftarrow b[j] + a[i_{t-\Delta}]   $ \;
      }
    }
    $\theta = \hat{s}[l]/(1+\beta') $ \;
    $s[l+1] \leftarrow s[l]+x_o$ \;
    $ \hat{S}  \leftarrow \hat{S} \cup s[l+1] $ \;

  }
  \KwRet{$ \hat{S}$} \;

  \caption{FastRSMaxPlus: $(\alpha, \beta)$-\RS\ $(\max, +)$-Convolution}
\end{algorithm}

}

Figure~\ref{fig:max+} illustrates these definitions. Note that the above definitions have some
differences from in \tailtree.  By comparing Figure \ref{fig:min+} and \ref{fig:max+}, we can see
the differences. Now we are ready to define the concept of $(\alpha,\beta)$-\RS\
$(\max,+)$-convolution.

\begin{defn}[$(\alpha,\beta)$-\RS\ $(\max,+)$-convolution]
  \label{def:sparsemax} Given two $\alpha$-\RS s   $\hat{A} $  and $ \hat{B} $, suppose $A'$ and $B'$ are
  their head-completions of cardinality $M_1$ and $M_2$ respectively.  Suppose the array $ S$ is the $(\max,+)$-convolution of $A'$ and $B' $.  We call $\hat{S}$ an $(\alpha, \beta)$-\RS\ $(\max,
  +)$-convolution of $\hat{A}$ and $\hat{B}$ if $\hat{S}$ is a $\beta$-\RS\ which head-approximates the array $S$.
\end{defn}

Similar to Lemma~\ref{lm:fastmin+}, we have the following lemma.

\begin{lemma}
  \label{lm:fastmax+}
 Consider two $\alpha$-\RS s $\hat{A}=(\hat{a}[i_1], \hat{a}[i_2],\ldots, \hat{a}[i_{m_1}])$ where each $\hat{a}[i_w]$ ($1\leq w\leq m_1$) satisfies that either $\hat{a}[i_w]=0$ or $\hat{a}[i_w]\geq 1$, and $\hat{B}= (\hat{b}[j_1], \hat{b}[j_2],
  \ldots, \hat{b}[j_{m_2}])$ where each $\hat{b}[j_w]$ ($1\leq w\leq m_2$) satisfies that either $\hat{b}[j_w]=0$ or $\hat{b}[j_w]\geq 1$. Let $U = \max\{\hat{a}[i_{m_1}] , \hat{b}[j_{m_2}]\}$ and $M =\max\{ i_{m_1}, j_{m_2}\}$. Let $0\leq \beta\leq \alpha \leq 1$ be two constants. There exists an algorithm $\mathsf{FastRSMaxPlus}(\alpha, \beta,\hat{A},\hat{B})$
  computing an $(\alpha,\beta)$-\RS\ $(\max, +)$-convolution $\hat{S}$ of $\hat{A}$ and $\hat{B}$ in  $O (\min\{ \frac{\log_{1+\beta} U }{\alpha}, \frac{M}{\alpha}\})  $ time.
\end{lemma}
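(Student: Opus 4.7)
The plan is to mirror the algorithm and analysis of Lemma \ref{lm:fastmin+}, with the direction of iteration reversed and the representative structure tailored to head-completion rather than completion. First, I would describe an algorithm $\mathsf{FastRSMaxPlus}(\alpha,\beta,\hat{A},\hat{B})$ that starts by appending $\hat{s}[0]=0$ (together with the smallest positive element) to $\hat{S}$, and then iteratively maintains a threshold $\theta$ that grows by a factor of $1+\beta$ each round. In each round, the goal is to find the smallest element $\tilde{s}[l_r]=\hat{a}[i_v]+\hat{b}[j_t]$ of the underlying $(\max,+)$-convolution that is at least $(1+\beta)\theta$, and then to break ties in favor of the largest index, so that appending it to $\hat{S}$ gives a valid $\beta$-\RS that head-approximates the true convolution of the head-completions.

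Next I would lift the two structural tools from the min case directly. The halving lemma (Lemma \ref{lm:half}) is a property only of the $\alpha$-\RS structure of $\hat{A}$ and $\hat{B}$, so it applies unchanged: skipping $\tau=\lceil 1/\alpha\rceil$ consecutive indices in either array at least halves the value. The hash-table lookup lemma (Lemma \ref{lm:collision}) also carries over verbatim, since it only uses the monotonicity and the $\alpha$-\RS spacing. Consequently, given a current threshold $\theta$, one can locate in $O(1)$ time the smallest $\hat{a}[i_v]\in\hat{A}$ (resp.\ smallest $\hat{b}[j_t]\in\hat{B}$) satisfying $\hat{a}[i_v]\geq\theta$ (resp.\ $\hat{b}[j_t]\geq\theta$) via $\mathsf{HashA}$ and $\mathsf{HashB}$, and then only $O(\tau)=O(1/\alpha)$ neighboring indices in one of the two arrays need to be examined to find the best pair with sum $\geq(1+\beta)\theta$ and largest combined index. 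This gives $O(1/\alpha)$ work per iteration, and at most $O(\log_{1+\beta} U)$ iterations, because the appended values form a $\beta$-\RS so they grow by a factor of at least $1+\beta$ each round and are bounded above by $2U$.

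For the correctness argument, I would verify two invariants by induction on iterations: (i) every element appended to $\hat{S}$ equals $\hat{a}[i_v]+\hat{b}[j_t]$ for some pair of valid $\alpha$-\RS entries and hence is a value actually attained in the $(\max,+)$-convolution of the head-completions $A'$ and $B'$; (ii) within each ``gap'' of $\hat{S}$, the suppressed true values lie in the interval $[\hat{s}[l_r],(1+\beta)\hat{s}[l_r])$, so the head-completion of $\hat{S}$ is bounded below by $\frac{1}{1+\beta}$ times the true convolution entry at every index. By shrinking $\beta$ by a constant factor we get the stated $\beta$-head-approximation guarantee, which in combination with the $\beta$-\RS property yields the $(\alpha,\beta)$-\RS $(\max,+)$-convolution.

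Finally, to obtain the alternative $O(M/\alpha)$ bound, I would give a variant that avoids the hash tables: compute the head-completions $A'$ and $B'$ in $O(M)$ time, then scan $L=0,1,\ldots,i_{m_1}+j_{m_2}$ in increasing order, maintaining the current largest indices $i_v\leq L$ and $j_t\leq L$ of $\hat{A}$ and $\hat{B}$, and compute $s[L]=\max$ over the $O(\tau)$ candidates $\hat{a}[i_{v-\Delta}]+b'[L-i_{v-\Delta}]$ and $a'[L-j_{t-\Delta}]+\hat{b}[j_{t-\Delta}]$ (by Lemma \ref{lm:half}, every maximizing pair is within $\tau$ of one of these boundary indices). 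A final backward scan produces the $\beta$-\RS $\hat{S}$. The total cost is $O(M/\alpha)$, and taking the better of the two strategies gives the claimed $O(\min\{\log_{1+\beta} U/\alpha, M/\alpha\})$ bound. The main delicate point I expect is the correctness of the head-completion boundaries (elements at indices below $i_1$ or between the $\alpha$-\RS entries), where the asymmetry between head-completion (pads the front with zeros) and completion (pads the front with $\hat{a}[i_1]$) matters; this must be handled carefully when arguing the lower bound in the head-approximation, but it only affects low-index entries where the true convolution is also small.
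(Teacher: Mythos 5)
Your overall plan is exactly the one the paper intends (the paper gives no explicit proof of Lemma \ref{lm:fastmax+}; it is asserted as ``similar to Lemma \ref{lm:fastmin+}''): reverse the direction of the threshold sweep, reuse Lemmas \ref{lm:half} and \ref{lm:collision}, account for the asymmetry between completion and head-completion, and supply the hash-free $O(M/\alpha)$ variant. The structural ingredients you cite — the halving property, the $O(1)$ hash lookups, the $O(\log_{1+\beta}U)$ bound on the number of rounds, and the boundary behaviour at indices below $i_1$ — are all sound and match what the min-plus proof would give under the obvious sign flip.

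However, the one algorithmic detail you commit to — the selection rule in each round — is backwards, and it contradicts the invariant (ii) you then claim to verify. Because the head-completion pads each gap with the \emph{left}-endpoint value, a representative $\hat{s}[l_r]$ covers all indices $t\in[l_r,\,l_{r+1}-1]$, and the head-approximation requires $s[t]\le(1+\beta)\hat{s}[l_r]$ for every covered $t$. Hence $l_{r+1}$ must be the \emph{smallest} index at which the convolution reaches $(1+\beta)\theta$: among candidate pairs with $\hat{a}[i_v]+\hat{b}[j_t]\ge(1+\beta)\theta$ you must take the pair minimizing the index sum $i_v+j_t$, and among pairs attaining that minimal index sum you must record the \emph{maximum} value (since $s[l]$ is a maximum over pairs at index $l$). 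Your rule — smallest value, ties broken toward the \emph{largest} index — skips earlier indices where the true convolution already exceeds $(1+\beta)\theta$; those skipped indices then receive head-completion value $\theta$ while the true value can be arbitrarily larger, destroying the $(1-\beta)$ lower bound. For instance, with $\hat{A}=(\hat{a}[0]=1,\hat{a}[1]=50)$, $\hat{B}=(\hat{b}[0]=1,\hat{b}[3]=50)$ and $\beta=1/2$: the candidate value $51$ arises at index sums $1$ and $3$, and choosing index $3$ leaves the head-completion equal to $2$ at index $1$ against the true $s[1]=51$. The correct rule is the exact mirror of the min-plus one (which takes the largest index sum among pairs with value below the threshold, then the minimum value at that index sum), so the repair is local, but as written this step of your argument fails.
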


\eat{
When $\log U$ is larger than the array length, we still can design an algorithm whose runtime  is
linear with the array length. Consider the input $\alpha$-\RS\ arrays $A=(a[i_1], a[i_2],\ldots, a[i_{m}])$ and $B=(b[j_1], b[j_2], \ldots,
b[j_{n}])$. First, we compute its completion $A'$ and $B'$ with cardinality $i_{m}+ j_{n}$ (See Definition~\ref{def:extended}). We compute
the exact $(\max, +)$-convolution $S$ as follows:
\begin{equation*}
  s[k] = \max \{ \max_{ (t, \Delta, j) \in \Phi_1}(a[i_{t-\Delta}]+b[j] ),
  \max_{(t, \Delta, i) \in \Phi_2} (a[i]+b[j_{t-\Delta}] )\}, \quad \forall k\in [0, i_{m}+j_{n}].
\end{equation*}
where $\Phi_1 = \{(t, \Delta, j) \mid  i_t = k ,  \Delta \in [0, \tau], j=k-i_{t-\Delta}
\}$ and $\Phi_2 = \{(t, \Delta, i) \mid j_t = k ,  \Delta \in [0, \tau ],  i=k-j_{t-\Delta}  \}$.

Then scan the $S$ in increasing order and obtain the $\beta$-\RS\ array $\hat{S}$. The correctness
follows from Lemma~\ref{lm:half}. The running time is $ (i_{m} + j_{n})/\alpha $
obliviously. Overall, we have the following lemma:

\begin{lemma}
  \label{lm:fastmax+}
  Consider two $\alpha$-\RS s $A=(a[i_1], a[i_2],\ldots, a[i_{m}])$ and $B=(b[j_1], b[j_2], \ldots,
  b[j_{n}])$. Let $U = a[i_{m}] + b[j_n]$, $M = i_{m} + j_{n}$.  There is an algorithm for $(\alpha,
  \beta)$-\RS\ $(\max, +)$-convolution with running time $O(\min\{ \frac{\log_{1+\beta} U
  }{\alpha}, \frac{M}{\alpha}\})$
\end{lemma}

With the same proof as Corollary~\ref{cor:fastmin+}, we have the followsing corollary.
\begin{corollary}
  \label{cor:fastmax+}
  For the \headtree, the \emph{\textsf{FastRSMaxPlus}} solve the $(\e_{i-1}, \e_i)$-\RS\ $(\max, +)$-convolution  at
  level $i$  with running time   $O(\min\{ \frac{\log n}{\e_{i-1}\e_{i}}, \frac{2^{i+1}}{\e_{i-1}} \})$.
\end{corollary}
}

Using the same scheme as Algorithm \textsf{FastTailTree} in
Section~\ref{sec:nearlinear}, we can design a $(1- \epsilon)$-approximation algorithm
for \headtree\ with running time $ O( \e^{-1}n(\log\log\log n)^2 ) $. One difference is that we
compute $\hat{S}_{ij}$ by an approximate $(\max, +)$-convolution scheme \textsf{FastRSMaxPlus} by
Lemma \ref{lm:fastmax+}. The other difference is that after we compute the sequence $\hat{S}_{\log
  (n+1),1}$ for the root node, we find the largest element $\hat{s}[L]\in \hat{S}_{\log (n+1),1}$ of
index $L$ satisfying that $L\leq k$ and return a solution $\hat{\Omega}$ by a backtracking process.

\topic{A linear time algorithm for \headtree} In fact, we can improve the running time to linear by
some additional properties of \headtree. Let
$\xi = \lceil \log\log n - \log(1/\e) $ $ -\log\log\log n \rceil$, $\eta =  \ceil{\log\log n +\log(1/\e)}$. As
in Algorithm \textsf{FastTailTree}, the time cost of the second \layer\ (i.e., from level $\xi +1$
to level $\eta$) is the bottleneck.
 Fortunately for \headtree, we can speed up the
second \layer\ by a new weight discretization technique. Recall that $\e $ is a constant number.

Then we show how to compute an array $\hat{S}_{\eta j}$ as an $\e$-head-approximation of $S_{\eta
  j}$ for all nodes $N_{\eta j}$ in linear time. We first divide the
array $\hat{S}_{\eta j}$ into two sub-arrays. One sub-array consists of the first $\ceil{2\log\log
  n}$ elements $\hat{s}[l]$ $(0\leq l \leq \ceil{2\log\log n}-1$. The other sub-array
consists of the remaining elements $\hat{s}[l]$ $(\ceil{ 2\log\log n}\leq l\leq 2^{\eta}-1)$.  In the following, we show how to compute these two sub-arrays
respectively.

\topic{Case 1, $\hat{s}[l] \in \hat{S}_{\eta j}, 0\leq l \leq \ceil{2\log\log n}-1$}
We still compute  $\hat{S}_{\xi j}$ through the look-up table method as in Step 1 of Algorithm \textsf{FastTailTree}.
Then for any node $N_{ij}$ at level $\xi+1\leq i \leq \eta$, we construct a sub-array $\hat{S}_{ij}$ by computing an exact $(\max,+)$-convolution from its two children, while we only compute the first $\ceil{2\log\log n}$ elements $\hat{s}[l]\in \hat{S}_{ij}$ $(0\leq l\leq \ceil{2\log\log n}-1)$.

\topic{Case 2, $\hat{s}[l] \in \hat{S}_{\eta j}, \ceil{ 2\log\log n}\leq l\leq 2^{\eta}-1$} We consider a more careful weight discretization. Consider the perfect binary subtree $T_{\eta j}$ rooted at some node $N_{\eta j}$. Let $N_{\max}\in T_{\eta j}$ be the node of the largest weight $x_{\max}$.
Consider an element $s[l]\in S_{\eta j}$ ($ l \geq \ceil{2\log\log n} \geq \eta $) representing the optimal head value of sparsity $l$ for \headtree\ on $T_{\eta j}$. Then we have that $s[l] \geq x_{\max}$, since there exists a subtree rooted at $N_{\eta j}$ with
$l$ nodes and containing node $N_{\max}$.
\footnote{Note that this property is only satisfied in \headtree.}
We define the new discretized weight for each node
$N_{i'j'}\in T_{\eta j}$ to be $\hat{x}_{i'j'} = \floor{\frac{x_{i' j'} \log^2 n }{\e x_{\max}}}$.
After weight discretizing,  each node weight in $T_{\eta j}$ is an integer among the range $\left[0, \floor{\log^2 n / \e}\right]$.

Based on these node weights $\{\hat{x}_{i'j'}\}$, we again use the look-up table method to compute
all arrays  $\hat{S}_{\xi j}$ at level $\xi$. Similar to Step 2 of Algorithm \textsf{FastTailTree},
we compute an $(\e', \e')$-\RS\ $(\max, +)$-convolution from level $\xi+1$ to level $\eta$,
where $\e' = \e / (\eta -\xi)$. Now for a node $N_{\eta j}$, we obtain an approximate sequence and
we compute its head-completion $\hat{S}_{\eta j}$ of cardinality $2^{\eta}-1$. Finally for each
element $\hat{s}[l] \in \hat{S}_{\eta j}, \ceil{ 2\log\log n}\leq l\leq 2^{\eta}-1$, we multiply it
by a normalization factor $\e x_{\max} / \log^2 n$.

\vspace{0.2in}

Overall, we combine the above two sub-arrays, and obtain an approximate array $\hat{S}_{\eta j}$. We
will prove that $\hat{S}_{\eta j}$ is an $\e$-head-approximation of $S_{\eta j}$. Then we compute an
$\e$-\RS\ which head-approximates $\hat{S}_{\eta j}$ for each node $N_{\eta j}$. Finally we use the
similar technique as in Step 3 of Algorithm \textsf{FastTailTree}. For any node $N_{i j}$ at level
$\eta+1 \leq i\leq \log (n+1)$, we use Algorithm $\mathsf{FastRSMaxPlus}(\e_{i-1},\e_i,\hat{S}_{i-1,2j-1},\hat{S}_{i-1,2j})$
  to compute a sequence $\hat{S}_{ij}$, where $\e_i =  3^{-(i- \eta)/4}\e$ ($\eta \leq i\leq \log (n+1)$).

\eat{
\begin{algorithm}[t]
  \label{alg:linearhead}
  \SetKwFunction{FastRSMaxPlus}{FastRSMaxPlus}
  \SetKwFunction{FindTree}{FindTree}
  \KwData{ Tree-sparsity model $\mathcal{T}$}
  \KwResult{Subtree $\Omega$}
  Initialize: $ \xi= (\log\log n - \log (1/\e) - \log\log\log n), \eta= (\log\log n +\log (1/\e))$ ;
  $\e_i = \frac{\e}{3^{(i-\eta)/4}}, \quad  \e' = \frac{\e}{\eta - \xi} $ \;
  Process case 1. Get $s[l_r] \in \hat{S}_{\eta j}, l_r  < 2\log\log n$ \\
  Process case 2. Get $s[l_r] \in \hat{S}_{\eta j}, l_r  \geq 2\log\log n$ \\
  Combine case 1 and case 2 to get  $\hat{S}_{\eta j}$  \\
  Process each level $i$ of the first \layer\ by \FastRSMaxPlus{$\e_{i-1}, \e_{i}, N_{ij}.lcd, N_{ij}.rcd $} \\
  $\Omega \leftarrow $ \FindTree{$k, N_1$} \;
  \KwRet $\Omega$ \;

  \caption{LinearHeadTree: A linear time $(1-\e)$-approximation algorithm for \headtree }

\end{algorithm}
}

\begin{thm}
  \label{thm:linearhead}
  There is a  $(1-\e)$-approximation algorithm with running time $O(\e^{-1}n)$ for \headtree.
\end{thm}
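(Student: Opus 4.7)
My plan is to follow the three-layer structure of the analysis of Algorithm \textsf{FastTailTree} (Theorem \ref{thm:simtail}), and to focus the new work on the two subtleties specific to \headtree: (i) the refined weight-discretization used in the middle layer (levels $\xi{+}1$ to $\eta$), and (ii) the two-case computation of $\hat{S}_{\eta j}$ that is designed to avoid the $(\log\log\log n)^2$ factor incurred in the tail version. Once the level-$\eta$ arrays $\hat{S}_{\eta j}$ are shown to $\e$-head-approximate the true head arrays $S_{\eta j}$, propagation through the upper layer is handled by the same induction as in Theorem \ref{thm:simtail}, with Lemma \ref{lm:fastmax+} replacing Lemma \ref{lm:fastmin+}; a final rescaling $\e \leftarrow \Theta(\e)$ absorbs the constant blow-up and yields $\hat{s}[L] \geq (1-\e)\OPT_H$ at the root.

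For correctness at level $\eta$, I will treat the two sub-arrays separately. Case 1 is immediate: the first $\lceil 2\log\log n\rceil$ entries of each $\hat{S}_{\eta j}$ are computed exactly via bottom-up $(\max,+)$-convolutions seeded by the look-up table at level $\xi$, so $\hat{s}[l]=s[l]$ there. Case 2 is the main analytical step. For $l \geq \lceil 2\log\log n\rceil \geq \eta$, any optimal subtree for sparsity $l$ rooted at $N_{\eta j}$ is free to include the largest-weight node $N_{\max} \in T_{\eta j}$ together with the path from $N_{\eta j}$ to $N_{\max}$, so $s[l] \geq x_{\max}$. The rescaled weights $\hat{x}_{i'j'}=\lfloor x_{i'j'}\log^2 n/(\e x_{\max})\rfloor$ differ from the true rescaled weights by at most $1$ per node, so across the $2^{\eta}=O(\log n/\e)$ nodes of $T_{\eta j}$ the total additive error after un-rescaling is at most $2^{\eta}\cdot \e x_{\max}/\log^2 n = O(x_{\max}/\log n) = O(s[l]/\log n)$, comfortably inside the $\e s[l]$ budget. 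On top of this, the $(\e',\e')$-\RS\ $(\max,+)$-convolutions from level $\xi{+}1$ to $\eta$ (with $\e'=\e/(\eta-\xi)$) accumulate at most a multiplicative $(1-\e'(\eta-\xi))=(1-\e)$ head-approximation factor, by the same inductive chain on $S'$ vs.\ $S$ used in the proof of Theorem \ref{thm:simtail} but with Definition \ref{def:headseqapp} in place of Definition \ref{def:seqapp}. Combining the two error sources gives an $\e$-head-approximation of $S_{\eta j}$.

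For the running time, the upper layer contributes $O(\e^{-1}n)$ by the same geometric sum $\sum_{i>\eta}(n/2^i)\cdot O(\log n/(\e_{i-1}\e_i))$ used in Theorem \ref{thm:simtail}, invoked via Lemma \ref{lm:fastmax+}; the level-$\xi$ look-up step is $O(n)$. The middle layer is the delicate part, and I must show that both cases are $o(n/\e)$. For Case 1, each of the first $O(\log\log n)$ entries at a node on level $i\in(\xi,\eta]$ costs $O(\log\log n)$ work, so each node costs $O(\log^2\log n)$; summed over $\sum_{i=\xi+1}^{\eta} n/2^i = O(n/2^{\xi}) = O(n\e\log\log n/\log n)$ nodes this is $o(n)$. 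For Case 2, after the new discretization every weight lies in $[0,\log^2 n/\e]$, so Lemma \ref{lm:fastmax+} bounds one convolution by $O(\log_{1+\e'}(\log^2 n/\e)/\e')=O(\log\log n/\e'^2)$, and summing over the same middle-layer node set is again $o(n/\e)$. Adding the $O(n)$ back-tracking gives the claimed $O(\e^{-1}n)$ total. The step I expect to require the most care is the error accounting of Case 2: I have to show that the integer rounding and the multiplicative $(\e',\e')$-\RS\ error compose into a single $(1-\e)$ bound on $\hat{S}_{\eta j}$, and the crucial leverage is the $\log n$ safety gap coming from $s[l]\geq x_{\max}$, which only holds because $l\geq \lceil 2\log\log n\rceil\geq\eta$; this is precisely why Case 1 and Case 2 must be split at this threshold.
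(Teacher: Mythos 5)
Your proposal is correct and follows essentially the same route as the paper: the same three-layer structure, the same split of $\hat{S}_{\eta j}$ at the threshold $\lceil 2\log\log n\rceil$ justified by $s[l]\geq x_{\max}$, the same $x_{\max}$-normalized rediscretization for Case 2, and the same propagation via Lemma~\ref{lm:fastmax+} and the induction of Theorem~\ref{thm:simtail}; your explicit additive-error accounting for Case 2 is in fact more detailed than the paper's appeal to Lemma~\ref{lm:headrounding}. The only blemish is a harmless slip in the node count for the middle layer ($n/2^{\xi}=O(n\log\log n/(\e\log n))$, with $\e$ in the denominator), which does not affect the $o(n)$ conclusion for constant $\e$.
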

\begin{proof}

  We first consider the running time. For Case 1, the running time for computing all $\hat{S}_{\eta j}$ is $o(n)$. For each node $N_{ij}$ at level $\xi+1\leq i \leq \eta$, since we only compute $\ceil{2\log\log n}$ elements, the running time for constructing the sub-array is $O(\log^2 \log n)$. Note that there are at most $O(n/2^{\xi})=O\bigl(n \log \log n/(\e \log n)\bigr)$ nodes. Thus, the total running time for Case 1 is $o(n)$. For Case 2, using the lookup table method costs $o(n)$ time. For each node $N_{ij}$ at level $\xi+1\leq i \leq \eta$, the running time for computing an $(\e', \e')$-\RS\ $(\min, +)$-convolution is
  $O(\log\log n / \e'^2)$.
  by Lemma~\ref{lm:fastmax+}.
  Thus, the total running time for this case is $\sum_{i \in (\xi, \eta]} 2^{-i}n \cdot  \log\log n / \e'^2 =o(n)$.
  For those nodes $N_{ij}$ at level $\eta+1\leq i \leq \log (n+1)$, by the same analysis in Theorem \ref{thm:simtail}, the total running time is $O(\e^{-1}n)$. Overall, the running time is $O(\e^{-1}n)$.

  Then we prove the approximation ratio. For Case 1, by Definition \ref{def:max+}, each element $\hat{s}[l]\in \hat{S}_{\xi j}$ $(0\leq l\leq \ceil{2\log\log n}-1)$ satisfies that $(1-\e)s[l]\leq \hat{s}[l]\leq s[l]$. For Case 2, we can show that the new weight discretization scheme leads to a $(1 - \e)$-approximation
  following from the same argument as in Lemma~\ref{lm:headrounding}. Then by the same argument as in Theorem \ref{thm:simtail}, we have that the processes $\mathsf{FastRSMaxPlus}$ from  level $\eta$ to $\log (n+1)$ compute an $(1-\e)$-head-approximation array for \headtree. Thus, the total approximation ratio is
  $(1-\e)$.
\end{proof}

Combining Theorem~\ref{thm:simtail}, ~\ref{thm:lisimtail} and \ref{thm:linearhead}, we obtain
Theorem~\ref{thm:linear}.
\eat{
\begin{remark}
  It is not different to extend our results to the $l_p$-norm for both \headtree\ and \tailtree. The only difference is that we compute the $l_p$-norm weight $|x_i|^p$ for each node $N_i\in T$ at the beginning. Then we run our algorithms for both \headtree\ and \tailtree\ using these $l_p$-norm weight $|x_i|^p$. We will obtain a $(1+\e)^{1/p}$-approximation for \headtree\ and a $(1-\e)^{1/p}$-approximation for \tailtree\ respectively. Thus, we only need to set the value of $\e$ to be $O(p\e)$ instead.

On the other hand, we can extend our algorithms to $b$-ary trees. Note that our algorithms are based on $(\min, +)$-convolutions
(or $(\max, +)$-convolutions). Consider any node $N$. We want to compute an approximate sequence $\hat{S}$. In a $b$-ary tree,  each node $N$ has $b$ children. Denote them by $N_1, N_2, \ldots, N_b$. We compute $\hat{S}$ by the following iterations.
\begin{flalign*}
  \begin{split}
    &\hat{S} \leftarrow \mathsf{MinPlus}(N_1, N_2) \\
    &\mathbf{For } \ i =3  \ \mathbf{to } \ d \  \mathbf{do: } \\
    &\qquad \hat{S} \leftarrow \mathsf{MinPlus}(\hat{S}, N_i) \\
    &\mathbf{Return } \quad \hat{S} \\
  \end{split}
\end{flalign*}

The iteration takes time $b$ times as much as before for a binary tree. Since we assume $b$ is a
constant integer, it does not affect the time complexity asymptotically.

\end{remark}

}

\subsection{Compressive Sensing Recovery}

By Theorem~\ref{thm:linear}, we can obtain a faster tree sparse recovery algorithm by the framework
AM-IHT in \cite{hegde2015approximation}. The framework AM-IHT is an iterative scheme. In each
iteration, we need to complete two matrix multiplications, a head-approximation, and a
tail-approximation projections.

\begin{reptheorem}{thm:tree} [Restated]
   Assume that $k \leq n^{1-\delta}$ ($\delta \in(0,1)$ is any fixed constant). Let $A\in
   \R^{m\times n}$ be a measurement matrix. Let $x\in \calM_{k}$ be an arbitrary signal in
  the tree sparsity model with dimension $n$, and let $y=Ax+e\in \R^m$ be a noisy measurement
  vector. Here $e\in \R^m$ is a noise vector. Then there exists an algorithm to recover a signal
  approximation $\hat{x}\in \calM_{k}$ satisfying $\|x-\hat{x}\|\leq C\|e\|_2$ for some constant $C$
  from $m=O(k)$ measurements. Moreover, the algorithm runs in $O( (n\log n+k^2\log n \log^2(k\log
  n))\log \frac{\|x\|_2}{\|e\|_2})$ time.
\end{reptheorem}

\begin{proof}
  Our theorem is very similar to Theorem 3 in \cite{hegde2014nearly}. The only difference is that we
  output a solution $\hat{x}\in \M_k$ instead of $\hat{x}\in \M_{ck}$ for some constant $c>1$. That is
  because the final solution is obtained from a tail oracle, and our tail oracle is a single-criterion
  oracle.
\end{proof}

\eat{

\begin{remark}
  Note that for head- and tail-approximations, we decrease the running time from $O(n\log n+ k\log^2
  n)$ to $O(n)$. Since a matrix multiplication can be highly parallelized. In fact, we make the
  running time faster for each iteration. On the other hand, the complexity of measurements is still
  $m=O(k)$. However, by the framework AM-IHT, our single-criterion algorithms can reduce the
  coefficient by a constant.
\end{remark}
}


\section{CEMD Model}
\label{sec:emd}

In this section, we discuss another structured sparsity model known as the \emph{Constrained EMD}
model~\cite{schmidt2013constrained}.

\subsection{A Single-Criterion Approximation Algorithm for Head-Approximation Projection}

We develop a single-criterion constant  approximation algorithm for the head approximation
projection in the CEMD model, improving the result in~\cite{hegde2015approximation} which relaxes the support space to
$\Omega\in \M_{k,B\log k}$.
We first use an EMD flow network~\cite{hegde2015approximation}, and similarly obtain two
supports $\Omega_l$ and $\Omega_r$. Then from these two supports, we  construct a
single-criterion constant factor approximate solution. Formally speaking, given
an arbitrary signal $x$, we want to find a support $\hat{\Omega}\in \M_{k,B}$ such that
$\sum_{x_{i,j} \in \hat{\Omega}} |x_{i,j}|^p \geq c\cdot \max_{\Omega \in \M_{k,B}} \sum_{x_{i,j} \in
  \Omega} |x_{i,j}|^p $ for some fixed
constant $c \in (0,1]$

\topic{Step 1: Constructing an EMD flow network}
We first recall the EMD flow network construction defined in~\cite{hegde2015approximation}. See
Figure~\ref{fig:emdflow} as an example.

\begin{defn} [EMD flow network]
  \label{def:emdflow}
  For a given signal $X$, sparsity $k$, and a parameter $\lambda>0$, the flow network
  $G_{X,k,\lambda}$ is defined as follows:
  \begin{enumerate}
  \item Each entry $x_{i,j}\in X$ corresponds to a node $v_{i,j}$ for $i\in [h],j\in
    [w]$. Additionally, add a source node $\mu$ and a sink node $\nu$.
  \item Add an edge from every $v_{i_1,j}$ to every $v_{i_2,j+1}$ for $i_1, i_2\in [h], j\in
    [w-1]$. Moreover, add an edge from the source to every $v_{i,1}$ and from every $v_{i,w}$ to the
    sink.
  \item The capacity on every edge and node (except source and sink) is 1.
  \item The cost of node $v_{i,j}$ is $-|x_{i,j}|^p$. The cost of an edge from $v_{i_1,j}$ to
    $v_{i_2,j+1}$ is $\lambda |i_1 - i_2|$. The cost
    of the source, the sink, and each edge incident to the source or sink is 0.
  \item Both the supply at the source and the demand at the sink are $s (=\frac{k}{w})$.
  \end{enumerate}
\end{defn}

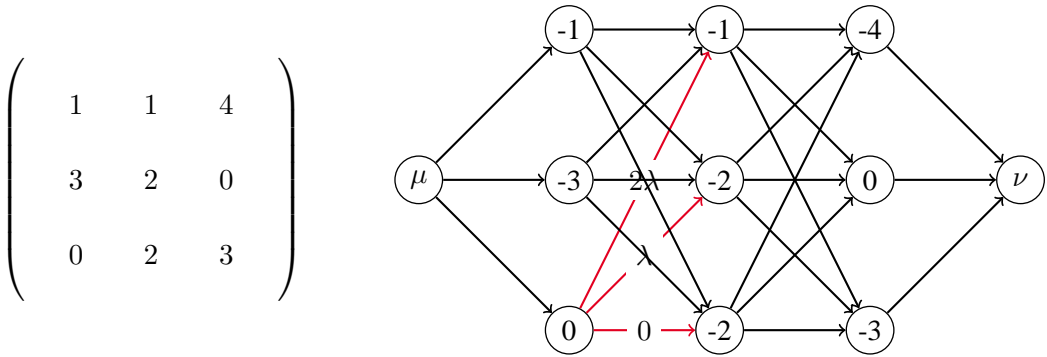
\begin{figure}[t]
  \centering

\begin{minipage}[c]{0.3\textwidth}
\centering
  \begin{tikzpicture}
    \newcommand{\myunit}{1 cm}
    \tikzset{
      node style sp/.style={draw,circle,minimum size=\myunit},
      node style ge/.style={circle,minimum size=\myunit},
      arrow style mul/.style={draw,sloped,midway,fill=white},
      arrow style plus/.style={midway,sloped,fill=white},
    }

    \matrix (A) [matrix of math nodes,%
    nodes = {node style ge},%
    left delimiter  = (,%
    right delimiter = )] at ( 0 ,0)
    {%
      1  & 1 & 4  \\
      3  &  2  & 0   \\
      0  & 2 & 3   \\
    };
  \end{tikzpicture}
\end{minipage}%
\begin{minipage}[c]{0.7\textwidth}
\centering
   \begin{tikzpicture}

    \SetGraphUnit{3}
    \GraphInit[vstyle = Dijkstra]
    \SetVertexMath

    \def \a {2}
    \def \b {2}
    \Vertex[Math, L =  \mu , x=0*\a, y= 1*\b]{s}
    \Vertex[Math, L = $0$, x = 1*\a,y=0*\b]{10}
    \Vertex[Math, L = $-3$, x=1*\a, y=1*\b]{11}
    \Vertex[Math, L = $-1$, x= 1*\a,y=2*\b]{12}
    \Vertex[Math, L = $-2$, x=2*\a, y=0*\b]{20}
    \Vertex[Math, L=  $-2$, x= 2*\a, y=1*\b]{21}
    \Vertex[Math, L = $-1$, x= 2*\a,y=2*\b]{22}
    \Vertex[Math, L = $-3$, x= 3*\a, y=0*\b]{30}
    \Vertex[Math, L = $0$, x= 3*\a, y=1*\b]{31}
    \Vertex[Math, L = $-4$, x=3*\a, y=2*\b]{32}
    \Vertex[Math, L = \nu, x= 4*\a, y=1*\b]{t}

    \tikzset{EdgeStyle/.style} = {->}
    \Edge[style = {->}](s)(10)
    \Edge[style = {->}](s)(11)
    \Edge[style = {->}](s)(12)
    \Edge[style = {->}](30)(t)
    \Edge[style = {->}](31)(t)
    \Edge[style = {->}](32)(t)
    \Edge[color = red, style = {->}, label = $0$](10)(20)      \Edge[color = red, style = {->},label
    = $\lambda$](10)(21)    \Edge[color = red, style = {->},label = $2\lambda$](10)(22)
    \Edge[style = {->}](11)(20)      \Edge[style = {->}](11)(21)    \Edge[style = {->}](11)(22)
    \Edge[style = {->}](12)(20)      \Edge[style = {->}](12)(21)    \Edge[style = {->}](12)(22)
    \Edge[style = {->}](20)(30)      \Edge[style = {->}](20)(31)    \Edge[style = {->}](20)(32)
    \Edge[style = {->}](21)(30)      \Edge[style = {->}](21)(31)    \Edge[style = {->}](21)(32)
    \Edge[style = {->}](22)(30)      \Edge[style = {->}](22)(31)    \Edge[style = {->}](22)(32)

  \end{tikzpicture}

\end{minipage}
  \caption{EMD flow network. The left matrix is the signal $X$. The right figure is its
    corresponding EMD flow network $G_{X, k, \lambda}$. }
  \label{fig:emdflow}

\end{figure}

Since all edge capacities, the source supply, and the sink demand are integers, by Theorem 9.10
in~\cite{ravindra1993network}, we know that $G_{X,k,\lambda}$ always has an integer min-cost
max-flow. Note that this integer min-cost max-flow must be a set of disjoint paths through the
network $G_{X,k,\lambda}$, and it corresponds to a support in $X$. For a flow network
$G_{X,k,\lambda}$, we denote the support of this integer min-cost max-flow by
$\Omega_{\lambda}=\mathsf{MinCostFlow}(G_{X,k,\lambda})$. Thus, for any
$\lambda$, a solution of the min-cost max-flow problem on $G_{X,k,\lambda}$ reveals a subset $S$ of
nodes that corresponds to a support $\Omega_{\lambda}$ satisfying the following two properties: 1) in
each column, $\Omega_{\lambda}$ has exactly $s$ indices; 2) $\Omega_{\lambda}$ is the support which minimizes
$-\sum_{x_{i,j} \in \Omega} |x_{i,j}|^p+\lambda \EMD[\Omega]$ (also equivalent to maximize
$\sum_{x_{i,j} \in \Omega} |x_{i,j}|^p -\lambda \EMD[\Omega]$).

For convenience, we define $ \headvalue[\Omega]$ to be the head value $\sum_{x_{i,j} \in \Omega}
|x_{i,j}|^p$ of support $\Omega$ and denote the $\EMD[\Omega]$ by
$\emd[\Omega]$. In~\cite{hegde2015approximation}, we can obtain the following theorem by this flow
network.

\begin{thm} [Theorem 34 and 36 in~\cite{hegde2015approximation}]
  \label{thm:flownet}
  Let $\delta>0$, $x_{\min}=\min_{|X_{i,j}|>0} |X_{i,j}|^p$, and $x_{\max}=\max_{|X_{i,j}|>0}
  |X_{i,j}|^p$. There exists an algorithm running in $O(snh(\log\frac{n}{\delta})+\log
  \frac{x_{\max}}{x_{\min}})$ time, which returns two solutions
  $\Omega_l=\mathsf{MinCostFlow}(G_{X,k,l})$, and $\Omega_r=\mathsf{MinCostFlow}(G_{X,k,r})$. We have
  that $l,r\geq 0$, $l-r\leq \frac{\delta x_{\min}}{wh^2}$, and    $ \emd[\Omega_l]\leq B\leq
  \emd[\Omega_r]$.
\end{thm}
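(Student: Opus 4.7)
The plan is to treat $\lambda$ as a Lagrangian parameter and do a bisection search on it, exploiting the monotonicity of $\emd[\Omega_\lambda]$ as a function of $\lambda$. First I would verify what the flow network encodes: an integer min-cost max-flow on $G_{X,k,\lambda}$ necessarily saturates $s$ node-disjoint paths from $\mu$ to $\nu$ (by integrality of the extreme flows, since all capacities are integer), and such a set of paths corresponds exactly to a support $\Omega$ with $|\Omega_i| = s$ for every column $i$. The total cost of the flow is then $-\headvalue[\Omega] + \lambda \,\emd[\Omega]$. Therefore $\Omega_\lambda = \mathsf{MinCostFlow}(G_{X,k,\lambda})$ is the maximizer of $\headvalue[\Omega] - \lambda\,\emd[\Omega]$ over all column-$s$-sparse supports.

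Next I would establish monotonicity: for $\lambda_1 < \lambda_2$, summing the two optimality inequalities $\headvalue[\Omega_{\lambda_1}] - \lambda_1 \emd[\Omega_{\lambda_1}] \ge \headvalue[\Omega_{\lambda_2}] - \lambda_1 \emd[\Omega_{\lambda_2}]$ and the symmetric one at $\lambda_2$ gives $(\lambda_2 - \lambda_1)(\emd[\Omega_{\lambda_1}] - \emd[\Omega_{\lambda_2}]) \ge 0$, so $\emd[\Omega_\lambda]$ is non-increasing in $\lambda$. At $\lambda = 0$ the optimum only maximizes the head value, so $\emd[\Omega_0]$ is large (assume $\ge B$, else the problem is trivial); conversely, once $\lambda$ exceeds $\lambda_{\max} = O(h x_{\max})$ the EMD term dominates and the optimal EMD is at its minimum (thus $\le B$). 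I would run a binary search on the interval $[0, \lambda_{\max}]$: at each step query $\lambda$, compute $\Omega_\lambda$, and shrink the interval depending on whether $\emd[\Omega_\lambda] \le B$ or not. The search terminates when the current interval $[r, l]$ has width at most $\delta x_{\min}/(wh^2)$, at which point $\Omega_l$ and $\Omega_r$ are the required pair. The number of iterations is $O(\log(\lambda_{\max} \cdot wh^2 / (\delta x_{\min}))) = O(\log(n/\delta) + \log(x_{\max}/x_{\min}))$.

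The per-iteration cost is the min-cost max-flow on $G_{X,k,\lambda}$. Since the network is a layered DAG with $O(n) = O(hw)$ nodes and $O(h^2 w) = O(nh)$ edges, I would use the successive shortest augmenting paths algorithm. After establishing a valid set of node potentials (edges are initially layered so a single Bellman–Ford pass suffices, or one can observe the graph is acyclic and compute distances in $O(V+E)$), each of the $s$ augmentations costs a single shortest-path computation in $O(nh)$ time. This gives $O(snh)$ per iteration, and multiplying by the number of binary-search steps yields the claimed $O(snh(\log(n/\delta) + \log(x_{\max}/x_{\min})))$ bound.

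The main obstacle is justifying that the precision $\delta x_{\min}/(wh^2)$ is sufficient for the claim $\emd[\Omega_l] \le B \le \emd[\Omega_r]$ to be meaningful, that is, that we have not stopped prematurely inside a single ``constant'' interval of $\lambda$ that straddles no breakpoint crossing $B$. The key quantitative lemma is a lower bound on the spacing of breakpoints of the piecewise-constant map $\lambda \mapsto \Omega_\lambda$: a breakpoint occurs exactly when two distinct supports $\Omega, \Omega'$ satisfy $\headvalue[\Omega] - \headvalue[\Omega'] = \lambda(\emd[\Omega] - \emd[\Omega'])$, and since each head value is a sum of at most $k \le n$ terms of form $|x_{i,j}|^p \in \{0\} \cup [x_{\min}, x_{\max}]$ and each $\emd$ is a nonnegative integer bounded by $wh$, the numerator is a multiple of $x_{\min}$ down to granularity $\Theta(x_{\min}/(wh^2))$ up to scaling, while the denominator is at most $wh^2$. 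Consequently, once $l - r \le \delta x_{\min}/(wh^2)$ with $\delta < 1$, the interval $[r, l]$ is finer than the intrinsic granularity and the two outputs $\Omega_l, \Omega_r$ (which may themselves coincide with a true optimal pair at a breakpoint) truly bracket the crossing at $B$. This combinatorial gap argument, together with the monotonicity, closes the proof.
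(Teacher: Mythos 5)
This theorem is not proved in the paper at all --- it is imported verbatim as Theorems~34 and~36 of Hegde, Indyk and Schmidt~\cite{hegde2015approximation}, so there is no in-paper proof to compare against. Your reconstruction (Lagrangian parameter $\lambda$, integrality of the min-cost max-flow giving $s$ node-disjoint paths, the two-inequality swap showing $\emd[\Omega_\lambda]$ is non-increasing in $\lambda$, and a bisection over $[0,\lambda_{\max}]$ with a successive-shortest-paths computation of cost $O(snh)$ per query) is essentially the argument used in that reference, and it is correct. One remark: your final paragraph is both unnecessary and shaky. The guarantee $\emd[\Omega_l]\leq B\leq \emd[\Omega_r]$ is simply the binary-search invariant --- it holds at initialization (handling the trivial case $\emd[\Omega_0]\leq B$ separately) and is preserved at every step, independently of how fine the final interval is --- so no breakpoint-spacing lemma is needed for the statement as given. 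Moreover, the granularity claim itself does not hold as written: the head values are sums of arbitrary reals $|x_{i,j}|^p$, not multiples of $x_{\min}$, so consecutive breakpoints of $\lambda\mapsto\Omega_\lambda$ need not be separated by $\Omega(x_{\min}/(wh^2))$. The precision $l-r\leq \delta x_{\min}/(wh^2)$ is chosen not to resolve breakpoints but so that $(l-r)\cdot B\leq \delta x_{\min}$, which is what the downstream approximation bound (Lemma~\ref{lm:emdcon}, case~3) actually consumes.
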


Then we show how to construct a single-criterion solution by $\Omega_l$ and $\Omega_r$.

\topic{Step 2: Constructing a single-criterion solution}
By Theorem~\ref{thm:flownet}, assume that we have two solutions $\Omega_l$ and $\Omega_r$ now. We
want to construct a single-criterion solution which is also a constant approximation. Note
that $\Omega_l\in \M_{k,B}$ and $\Omega_r$ may not be in $\M_{k,B}$. We first construct a
single-criterion solution $\Omega_r'$ based on $\Omega_r$ such that $\headvalue[\Omega_r'] \geq
\headvalue[\Omega_r] \cdot (2(\floor{\emd[\Omega_r]/B} +1))^{-1}$. We need the
following lemma for preparation.

\begin{lemma}
  \label{lm:pathdecom}
  Given any path $P$ on the flow network $G_{X,k,\lambda}$ from source to sink, let $\Omega_P$ be the
  support of $P$. Let $d\geq 1$ be  some positive integer.
There exists an $O(n)$ time algorithm which finds another path $P'$ with
  support $\Omega_{P'}$ satisfying that $\emd[\Omega_{P'}] \leq \emd[\Omega_{P}] / d$, and
  $\headvalue[\Omega_{P'}] \geq \headvalue[\Omega_{P}]/2d$.
\end{lemma}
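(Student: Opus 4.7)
The plan is to extract from $P$ a contiguous column sub-interval $[L,R]$ whose internal EMD is at most $\emd[\Omega_P]/d$ and whose head value is at least $\headvalue[\Omega_P]/d$, then extend it to a full source-to-sink path by padding with a constant row on each side. The key observation is that padding with a constant row adds zero EMD among the padded columns, and if the left padding row is chosen to equal $i_L$ and the right padding row to equal $i_R$, then the two transitions joining the padding to the sub-interval also contribute zero. Consequently the EMD of the extended path equals the internal EMD of $[L,R]$, and its head value is at least the head value of $[L,R]$.

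To set up the partition, let the rows of $P$ be $i_1,\ldots,i_w$ and define the cumulative EMD $E_j=\sum_{j'=1}^{j-1}|i_{j'}-i_{j'+1}|$, so $E_1=0$ and $E_w=\emd[\Omega_P]$. For $k=0,1,\ldots,d$, let $c_k$ be the smallest column index with $E_{c_k}\geq k\cdot \emd[\Omega_P]/d$, giving $c_0=1$ and $c_d\leq w$. Define intervals $I_k=[c_{k-1},c_k-1]$ for $1\leq k\leq d-1$ and $I_d=[c_{d-1},w]$; these partition $[1,w]$. For $k<d$, the minimality of $c_k$ gives $E_{c_k-1}<k\cdot \emd[\Omega_P]/d$, so the internal EMD of $I_k$ is $E_{c_k-1}-E_{c_{k-1}}<\emd[\Omega_P]/d$; for $I_d$ the internal EMD is $\emd[\Omega_P]-E_{c_{d-1}}\leq \emd[\Omega_P]/d$ since $E_{c_{d-1}}\geq (d-1)\emd[\Omega_P]/d$.

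Since the head values of the disjoint intervals sum to $\headvalue[\Omega_P]$, pigeonhole yields an index $k^*$ with $\sum_{j\in I_{k^*}}|x_{i_j,j}|^p\geq \headvalue[\Omega_P]/d$. Taking $[L,R]=I_{k^*}$ and defining $P'$ by $i'_j=i_j$ for $j\in[L,R]$, $i'_j=i_L$ for $j<L$, and $i'_j=i_R$ for $j>R$, the observation above immediately gives $\emd[\Omega_{P'}]\leq \emd[\Omega_P]/d$ and $\headvalue[\Omega_{P'}]\geq \headvalue[\Omega_P]/d \geq \headvalue[\Omega_P]/(2d)$. This $P'$ is a valid source-to-sink path because the EMD flow network has an edge between every pair of nodes in adjacent columns.

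A single left-to-right sweep computes all $E_j$ and locates the $c_k$ in $O(w)$ time, the head value of each interval and the maximizing $k^*$ can be tracked in the same sweep, and constructing $P'$ is an additional $O(w)$; since $w\leq n$ the total runtime is $O(n)$. The only subtle point is matching the padding rows to the endpoints of $I_{k^*}$ so that the two boundary transitions incur no EMD; once that is set up, the bounds reduce to a standard cumulative-threshold partition argument.
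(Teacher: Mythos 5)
Your proof is correct, and it takes a genuinely different route from the paper's. The paper decomposes the path by \emph{rows}: it cuts the grid along a horizontal line $L_t$, splits $P$ into a lower clipped path $\check{P}_t$ and an upper clipped path $\hat{P}_t$ whose EMDs (resp.\ head values) sum to at most (resp.\ at least) those of $P$, chooses the highest cut with $\emd[\Omega_{\check{P}_t}]\leq \emd[\Omega_P]/d$, and then runs a three-case analysis with an induction on $d$ (the third case recurses on $\hat{P}_{t+1}$ with parameter $d-1$); the constant-row fallback in the second case is what forces the extra factor of $2$ in the bound $\headvalue[\Omega_{P'}]\geq \headvalue[\Omega_P]/2d$. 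You instead decompose by \emph{columns}, cutting the path's natural left-to-right order at the points where the cumulative EMD crosses multiples of $\emd[\Omega_P]/d$; each of the $d$ resulting segments has internal EMD at most $\emd[\Omega_P]/d$, pigeonhole gives one with head value at least $\headvalue[\Omega_P]/d$, and padding with constant rows matched to the segment's endpoints extends it to a source--sink path at zero extra EMD cost (legal because the network is complete between adjacent columns). Your argument is non-inductive, handles the $\emd[\Omega_P]=0$ and $d=1$ edge cases automatically, and in fact proves the stronger bound $\headvalue[\Omega_{P'}]\geq \headvalue[\Omega_P]/d$, which would propagate to a slightly better constant in Lemma~\ref{lm:emdcon}. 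One caveat shared by both arguments (and not resolved by either): when the lemma is invoked in Corollary~\ref{cor:pathdecom} on the $s$ disjoint paths of $\Omega_r$, the independently reconstructed paths need to remain node-disjoint for $\Omega_r'$ to have exactly $s$ indices per column, and your constant-row padding, like the paper's clipping, could in principle create collisions; this is an issue with the corollary rather than with your proof of the lemma as stated.
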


\begin{proof}
  W.l.o.g., assume that the lowest node on path $P$ is at row 1.
  Consider the row $L_t$ which separates the lowest $t$ rows and the upper $h-t$ rows. Row $L_t$
  decomposes the path $P$ into two paths $\check{P}_t$ and $\hat{P}_t$. Specifically, for any edge
  $(v_{i_1, j}, v_{i_2, j+1})$, we add two edges in  $\check{P}_t$ and $\hat{P}_t$ respectively as
  follows.

  \begin{itemize}
  \item If $i_1 > t $ and $i_2 > t$, we add the edge $(v_{i_1, j}, v_{i_2, j+1})$ in $\hat{P}_t$ and add
    the edge $(v_{t, j}, v_{t, j+1})$ in $\check{P}_t$.  Similarly, if $i_1 \leq  t $ and $i_2 \leq
    t$, we add the edge $(v_{t+1, j}, v_{t+1, j+1})$ in
    $\hat{P}_t$ and add the edge $(v_{i_1, j}, v_{i_2, j+1}) $ in $\check{P}_t$.
  \item If $i_1 > t $ and $i_2 \leq  t$, we add the edge $(v_{i_1, j}, v_{t+1, j+1})$ in $\hat{P}_t$
    and add the edge $(v_{t, j}, v_{i_2, j+1}) $ in $\check{P}_t$. Similarly, if $i_1 \leq t $ and $i_2
    >  t$, we add the edge $(v_{t+1, j}, v_{i_2, j+1})$ in $\hat{P}_t$   and add the edge $(v_{i_1,
      j}, v_{t, j+1}) $ in $\check{P}_t$.
  \end{itemize}

See Figure~\ref{fig:path} as an example. Suppose for an edge  $(v_{i_1, j}, v_{i_2, j+1})$ in $P$,
we add an edge $(v_{\hat{i}_1, j},v_{\hat{i}_2, j+1}) $ in path $\hat{P}_t$ and an edge
$(v_{\check{i}_1, j},v_{\check{i}_2, j+1}) $. It is not difficult to check that $|i_2 - i_1| \geq
|\hat{i}_1 - \hat{i} _2 | + |\check{i}_1 - \check{i}_2 |$. Moreover, $\Omega_{\hat{P}_t} \cup
\Omega_{\check{P}_t} \supset \Omega_{P}$.  Thus,
  \begin{equation}
    \label{eq:emdpath}
    \headvalue[\Omega_{\check{P}_t}] + \headvalue[\Omega_{\hat{P}_t}] \geq \headvalue[\Omega_P], \;
    \emd[\Omega_{\check{P}_t}]+\emd[\Omega_{\hat{P}_t}]  \leq \emd[\Omega_{P}].
  \end{equation}

 Also observe that as $t$  increases, $\emd[\Omega_{\check{P}_t}]$ is non-decreasing.

  \begin{figure}[t]
    \centering
    \begin{tikzpicture}[scale = 0.7]
      \GraphInit[vstyle = Hasse]
      \tikzstyle{EdgeStype/.style}=[post]
      \foreach \i in {0,1,2,3, 4,5,6}
      {
        \foreach \j in {0,1,2,3,4,5}
        {
          \Vertex[x=2*\i, y=\j]{\i\j}
        }
      }

      \draw (-0.5,1.5) rectangle (12.5, 2.5);
      \node (l) at (-1.4,2) {$L_{3}$};
      \node (p1) at (-2,1) {Path $\check{P}_{3}$};
      \node (p2) at (-2,3) {Path $\hat{P}_{3}$};
      \Edge[style = {->},label = $\check{P}_3$ ,color = red](01)(12)
      \Edge[style = {->},label = $\check{P}_3$,color = red](12)(20)
      \Edge[style = {->},label = $\check{P}_3$,color = red](20)(32)
      \Edge[style = {->},label = $\check{P}_3$,color = red](32)(42)
      \Edge[style = {->},label = $\check{P}_3$,color = red](42)(52)
      \Edge[style = {->},label = $\check{P}_3$,color = red](52)(60)

      \Edge[style = {->},label = $\hat{P}_3$,color = blue](03)(15)
      \Edge[style = {->},label = $\hat{P}_3$,color = blue](15)(23)
      \Edge[style = {->},label = $\hat{P}_3$,color = blue](23)(33)
      \Edge[style = {->},label = $\hat{P}_3$,color = blue](33)(43)
      \Edge[style = {->},label = $\hat{P}_3$,color = blue](43)(54)
      \Edge[style = {->},label = $\hat{P}_3$,color = blue](54)(63)

      \AddVertexColor{gray}{01,15,20,32,43,54,60}
      \Edge[style = {->}](01)(15)
      \Edge[style = {->}](15)(20)
      \Edge[style = {->}](20)(32)
      \Edge[style = {->}](32)(43)
      \Edge[style = {->}](43)(54)
      \Edge[style = {->}](54)(60)
    \end{tikzpicture}
    \caption{Path decomposition. The original path (the gray one) can be divided into two parts (the
      blue and red ones respectively) }
    \label{fig:path}
  \end{figure}
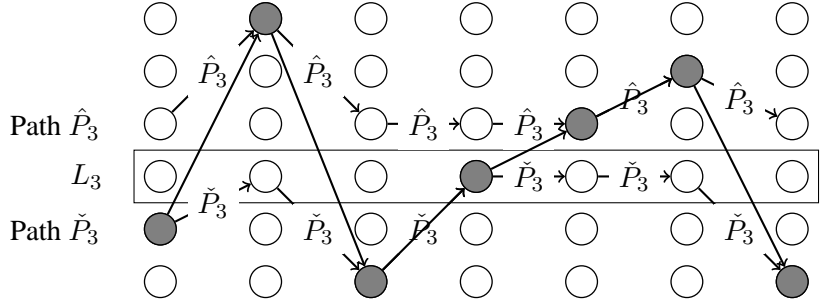

  If $ \emd[\Omega_{P}]$, then the path $P$ itself satisfies the lemma. Thus, we assume that $
  \emd[\Omega_{P}]>0$. We then
  prove the lemma by induction on $d$. If $d=1$, the path $P$ itself satisfies the lemma. Suppose the
  lemma is true for any positive integer no more than $d-1$. Now we consider the integer $d$.

  We first find the highest row $L_t$ such that $  \emd[\Omega_{\check{P}_t}] \leq  \emd[\Omega_{P}]/d$, and
  $ \emd[\Omega_{\check{P}_{t+1}}]>  \emd[\Omega_{P}]/d $. Note that
  such an index $t$ must exist, since
  $$  \emd[\Omega_{\check{P}_0}]  =0 \leq
  \emd[\Omega_{P}]/d <    \emd[\Omega_{P}] = \emd[\Omega_{\check{P}_h}]. $$
  Note  that  row $L_{t}$ is
  a path with $  \emd[\Omega_{l_t}] = 0$ where  $\Omega_{l_t}$ is the support of $L_t$. We
  distinguish three cases.
  \begin{enumerate}
  \item If $  \headvalue[\Omega_{\check{P}_t}] \geq  \headvalue[\Omega_{P}] /2d$, then the
    path $P_t$ satisfies the lemma.
  \item If $ \headvalue[\Omega_{\check{P}_t}] < \headvalue[\Omega_{P}] /2d $  and
    $\headvalue[{\Omega_{L_{t+1}}}] \geq \headvalue[\Omega_{P}] /2d$, then the path $L_{t+1}$
    satisfies the lemma.
  \item If $\headvalue[\Omega_{\check{P}_t}] < \headvalue[\Omega_{P}] /2d $  and
    $\headvalue[{\Omega_{L_{t+1}}}] < \headvalue[\Omega_{P}] /2d $, then we have that $
    \headvalue[{\Omega_{\check{P}_{t+1}}}] <  \headvalue[\Omega_{P}] /d$ and
    $ \emd[{\Omega_{\check{P}_{t+1}}}] >  \emd[\Omega_{P}]/d$. Thus, according to
    Inequalities~\ref{eq:emdpath},  we have that
  $$\headvalue[\Omega_{\hat{P}_{t+1}}] \geq  \headvalue[\Omega_{P}] -
    \headvalue[{\Omega_{\check{P}_{t+1}}}]  \geq (1 -1/d) \headvalue[\Omega_{P}] ,$$
    and
    $$\emd[\Omega_{\hat{P}_{t+1}}] \leq   \emd[\Omega_{P}] -
    \emd[{\Omega_{\check{P}_{t+1}}}] \leq (1-1/d) \emd[\Omega_{P}].$$
    By  induction, we can find a path $P'$ from path $\hat{P}_{t+1}$, such that
    $$
    \headvalue[\Omega_{P'}]\geq \headvalue[\Omega_{\hat{P}_{t+1}}] / 2(d-1) >
    \headvalue[\Omega_P]/2d, \; \emd[\Omega_{P'}] \leq \emd[\Omega_{\hat{P}_{t+1}}]/(d-1)<
    \emd[\Omega_{P}]/d.
    $$
  \end{enumerate}

  By the above discussion, we prove the lemma.
\end{proof}

Note that $\Omega_r$ consists of $s$ disjoint paths. According to Lemma~\ref{lm:pathdecom}, we can
construct a single-criterion solution $\Omega_r'$ as follows.
\begin{corollary}
  \label{cor:pathdecom}
  Let $ d = \floor{ \emd[\Omega_r] / B}$.  We can construct a support $\Omega_r'\in \calM_{k,B}$ such that $
  \headvalue[\Omega_r'] \geq \headvalue[\Omega_r] / 2(d+1)$ in $O(ns)$ time.
\end{corollary}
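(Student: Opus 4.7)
The plan is to apply Lemma~\ref{lm:pathdecom} path-by-path to the integer min-cost flow that realizes $\Omega_r$. Since the capacity on every non-source/sink node of $G_{X,k,r}$ is $1$ and the source supply equals $s$, the integer flow decomposes into exactly $s$ node-disjoint source-to-sink paths $P_1,\ldots,P_s$. Moreover, because the flow is a min-cost flow, the cost term $\lambda\EMD[\Omega_r]$ equals $\lambda$ times the sum of the path edge-lengths, so
\[
\emd[\Omega_r] \;=\; \sum_{i=1}^{s} \emd[\Omega_{P_i}] \quad\text{and}\quad \headvalue[\Omega_r] \;=\; \sum_{i=1}^{s} \headvalue[\Omega_{P_i}].
\]

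Next I would invoke Lemma~\ref{lm:pathdecom} with the positive integer $d+1$ separately on each $P_i$, producing a path $P_i'$ with support $\Omega_{P_i'}$ satisfying $\emd[\Omega_{P_i'}] \le \emd[\Omega_{P_i}]/(d+1)$ and $\headvalue[\Omega_{P_i'}] \ge \headvalue[\Omega_{P_i}]/2(d+1)$. Let $\Omega_r' = \bigcup_{i=1}^{s}\Omega_{P_i'}$. Summing the EMD inequalities and using the definition $d = \lfloor\emd[\Omega_r]/B\rfloor$, which implies $\emd[\Omega_r] < (d+1)B$, I get
\[
\emd[\Omega_r'] \;\le\; \sum_{i=1}^{s}\emd[\Omega_{P_i'}] \;\le\; \frac{\emd[\Omega_r]}{d+1} \;\le\; B.
\]
Summing the head-value inequalities gives $\headvalue[\Omega_r'] \ge \headvalue[\Omega_r]/2(d+1)$, which is the desired approximation. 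Since each $P_i'$ touches exactly one node per column, the union $\Omega_r'$ has column-sparsity at most $s = k/w$ in every column, so $\Omega_r' \in \calM_{k,B}$ (padding with arbitrary available nodes in any under-filled column if a column happens to lose entries due to path coincidence does not increase the support-EMD or decrease the head value).

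The running time is immediate: decomposing the unit-capacity integer flow into its $s$ paths takes $O(n)$ time (trace each path from source to sink, removing edges as they are used), and each of the $s$ invocations of Lemma~\ref{lm:pathdecom} runs in $O(n)$ time, for a total of $O(ns)$. The main technical point to be careful about is ensuring that applying the lemma independently to each path indeed preserves additivity of $\headvalue$ and upper-bounds the support-EMD; the former follows if the produced paths remain node-disjoint, and when they do not, the argument must note that overlaps only reduce $|\Omega_r'|$ below $k$ (which can be corrected by padding without affecting the two bounds), since $\emd[\Omega_r']$ is bounded by the sum $\sum_i \emd[\Omega_{P_i'}]$ via the path matching, and each shared node still contributes its full weight to $\headvalue[\Omega_r']$ at least once. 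This overlap bookkeeping is the only non-routine piece of the proof.
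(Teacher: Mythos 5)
Your approach is exactly the one the paper intends: the paper gives no proof of this corollary beyond the sentence preceding it, which says that $\Omega_r$ consists of $s$ disjoint paths and that Lemma~\ref{lm:pathdecom} is applied to each of them. Your invocation of the lemma with parameter $d+1$, the summation of the two per-path inequalities, the observation that $\emd[\Omega_r]<(d+1)B$ gives $\emd[\Omega_r]/(d+1)\le B$, and the $O(ns)$ accounting are all precisely the intended argument.

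However, the point you single out as ``the only non-routine piece'' --- what happens when the replacement paths $P_1',\ldots,P_s'$ are no longer node-disjoint --- is a genuine possibility (two zig-zag paths can both be flattened onto the same row $L_t$), and your patch for it does not work. For the head value the relevant inequality goes the \emph{wrong} way: $\headvalue[\Omega_r']=\headvalue\bigl[\bigcup_i \Omega_{P_i'}\bigr]\le \sum_i \headvalue[\Omega_{P_i'}]$, since a shared node is counted once in the union but once per path in the sum, so ``each shared node contributes its weight at least once'' does not recover $\headvalue[\Omega_r']\ge \headvalue[\Omega_r]/2(d+1)$; in the worst case the union can lose an additional factor of up to $s$. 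Likewise, padding an under-filled column with ``arbitrary available nodes'' can \emph{increase} the support-EMD, because $\EMD[\Omega]$ is a matching cost between consecutive column supports and every added node must be matched in the adjacent columns; one would need to add (near-)horizontal paths of unused nodes and argue that they exist within the budget $B$. To be fair, the paper itself is completely silent on both issues, so your writeup is no less rigorous than the original --- but the overlap bookkeeping you correctly flagged is not actually closed by the argument you give. Everything else (the decomposition of the unit-capacity integer flow into $s$ node-disjoint paths, the identity $\emd[\Omega_r]=\sum_i\emd[\Omega_{P_i}]$ for a min-cost flow, and the running time) matches the paper.
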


 We next compare $ \headvalue[\Omega_r'] $
with $\headvalue[\Omega_l]$. If $ \headvalue[\Omega_r'] > \headvalue[\Omega_l] $, then we output
$\Omega_r'$ as our solution. Otherwise, we output $\Omega_l$
as our solution. By the following lemma, we show that our solution is a constant approximation.

\begin{lemma}
  \label{lm:emdcon}
  Suppose $\OPT=\max_{\Omega \in \M_{k,B}} \headvalue[\Omega]$.  Then we have that
  $$
  \max\{ \headvalue[\Omega_r'], \headvalue[\Omega_l]\} \geq (\frac{1}{4}-\delta) \OPT.
  $$
\end{lemma}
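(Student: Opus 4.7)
The plan is to combine Lagrangian-style optimality of the flow-based supports $\Omega_l,\Omega_r$ with Corollary~\ref{cor:pathdecom}, and then split into two cases according to whether $\Omega_l$ already meets the target ratio $(1/4-\delta)$.

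First I would exploit that $\Omega_l$ (resp.\ $\Omega_r$), being the support of the integer min-cost max flow on $G_{X,k,l}$ (resp.\ $G_{X,k,r}$), maximizes $\headvalue[\Omega]-\lambda\emd[\Omega]$ over all column-sparse supports at $\lambda=l$ (resp.\ $\lambda=r$). Comparing against an optimum $\Omega^*\in\M_{k,B}$ with $\headvalue[\Omega^*]=\OPT$ and $\emd[\Omega^*]\leq B$ yields
\begin{equation*}
\headvalue[\Omega_l]\geq \OPT-l\bigl(B-\emd[\Omega_l]\bigr)\geq \OPT-lB,
\end{equation*}
and, since $\emd[\Omega_r]\geq B$,
\begin{equation*}
\headvalue[\Omega_r]\geq \OPT+r\bigl(\emd[\Omega_r]-B\bigr).
\end{equation*}

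I would then case on $\headvalue[\Omega_l]$. If $\headvalue[\Omega_l]\geq(1/4-\delta)\OPT$, the lemma is immediate. Otherwise the first inequality forces $lB>(3/4+\delta)\OPT$, and here the perturbation bound $l-r\leq \delta x_{\min}/(wh^2)$ of Theorem~\ref{thm:flownet} enters: since $B\leq swh\leq wh^2$ (because $s=k/w\leq h$) and $\OPT\geq x_{\min}$ in the nontrivial case $\OPT>0$, we have $(l-r)B\leq \delta x_{\min}\leq \delta\OPT$, so
\begin{equation*}
rB\geq lB-(l-r)B>(3/4+\delta)\OPT-\delta\OPT=(3/4)\OPT.
\end{equation*}

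Finally I would combine this with Corollary~\ref{cor:pathdecom}. Writing $E=\emd[\Omega_r]$ and using $d+1\leq E/B+1=(E+B)/B$,
\begin{equation*}
\headvalue[\Omega_r']\geq \frac{\headvalue[\Omega_r]}{2(d+1)}\geq \frac{B\bigl(\OPT+r(E-B)\bigr)}{2(E+B)}>\frac{\OPT\bigl((3/4)E+(1/4)B\bigr)}{2(E+B)}.
\end{equation*}
A one-line calculation---the inequality $(3/4)E+(1/4)B\geq (1/2)(E+B)$ rearranges to $(1/4)(E-B)\geq 0$---shows the right-hand side is at least $\OPT/4\geq(1/4-\delta)\OPT$, completing Case~2.

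The main obstacle will be the bookkeeping of the $(l-r)$-perturbation: the key observation that makes it absorbable into the $\delta$-slack is the elementary estimate $B\leq wh^2$, so that the Theorem~\ref{thm:flownet} guarantee translates to a multiplicative $O(\delta)$ loss on $lB$. Everything else is a clean Lagrangian-duality bound composed with the quantitative output of Corollary~\ref{cor:pathdecom}; the case split at threshold $(1/4-\delta)\OPT$ is the only nonobvious choice.
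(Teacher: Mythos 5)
Your proposal is correct, and while it uses the same three ingredients as the paper's proof --- Lagrangian optimality of the min-cost-flow supports (giving $\headvalue[\Omega_\lambda]-\lambda\,\emd[\Omega_\lambda]\geq \OPT-\lambda B$ for $\lambda\in\{l,r\}$), the perturbation bound $l-r\leq \delta x_{\min}/(wh^2)$ absorbed into the $\delta$-slack via $B\leq \emd[\Omega_r]\leq (w-1)sh\leq wh^2$ and $\OPT\geq x_{\min}$, and Corollary~\ref{cor:pathdecom} --- it organizes the case analysis differently. The paper first establishes $\headvalue[\Omega_r]\geq\OPT$, sets $d=\floor{\emd[\Omega_r]/B}$, and splits three ways ($d=1$; $d\geq 2$ with $\headvalue[\Omega_r]\geq 3d\,\OPT/4$; $d\geq 2$ with $\headvalue[\Omega_r]<3d\,\OPT/4$), falling back to $\Omega_l$ only in the last case. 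You case on $\Omega_l$ first, and in the fallback use the sharper Lagrangian bound $\headvalue[\Omega_r]\geq\OPT+r(\emd[\Omega_r]-B)$ together with $d+1\leq(\emd[\Omega_r]+B)/B$, so that the single inequality $\tfrac34 E+\tfrac14 B\geq\tfrac12(E+B)$ (equivalent to $E\geq B$, which Theorem~\ref{thm:flownet} guarantees) handles all $d\geq 1$ uniformly. This buys a cleaner argument: it removes the paper's separate $d=1$ case (which, as written, misquotes the factor from Corollary~\ref{cor:pathdecom} --- it should be $1/4$ rather than $1/2$, though the conclusion survives) and the ad hoc threshold $3d\,\OPT/4$. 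Two cosmetic points: the strict inequality in your final chain should be a $\geq$ when $\emd[\Omega_r]=B$ (the conclusion $\headvalue[\Omega_r']\geq\OPT/4$ is unaffected), and the degenerate case $\OPT=0$ should be dispatched explicitly, as you note, so that $\OPT\geq x_{\min}$ is available.
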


\begin{proof}

  Recall that  $\mathsf{MinCostFlow}$ solves the min-cost max-flow $\Omega_{\lambda}$ of the graph
  $G_{X,k,\lambda}$, i.e.,
  $$
  \headvalue[\Omega_{\lambda}] - \lambda \cdot \emd[\Omega_{\lambda}] = \max\nolimits_{\Omega \in
    \M_{k,B}} \{ \headvalue[\Omega]- \lambda \cdot   \emd[\Omega] \}
  $$
   The value of objective is no less than $0$ for any $\lambda$ because  there exists some support
   $\Omega$ such that $ \emd[\Omega] = 0$ and $ \headvalue[\Omega] \geq 0 $ for any $\Omega$.

   We get $ \Omega_l$ and $\Omega_r$ from $\mathsf{MinCostFlow}$ algorithm for $\lambda$ equaling to
   $l$ and $r$ respectively.    Thus, we have $ \headvalue[\Omega_r] - r \cdot \emd[\Omega_r] \geq 0
   $.  Moreover, $ \headvalue[\Omega_r] \geq \OPT$. Suppose $\headvalue[\Omega^{*}] = \OPT$.
   If $ \headvalue[\Omega_r] < \OPT $, changing $\Omega_r$  to $\Omega^*$
   would increase the objective $ \headvalue[\Omega_r] -r \cdot \emd[\Omega_r] $ since $ \emd[\Omega_r]
   \geq B$, which yields a contradiction.

   Assume  that $ \emd[\Omega_r] \in [dB, (d+1)B ) $ for   some positive integer $d\geq 1$. We
   distinguish three cases.

  \begin{enumerate}

  \item If $d=1$, by Corollary~\ref{cor:pathdecom}, $\Omega_r'$ satisfies  $  \headvalue[\Omega_r'] \geq
    \headvalue[\Omega_r] / 2 \geq \OPT/2$ and $\emd[\Omega_r'] \leq \emd[\Omega_r] / 2 \leq B$.

  \item If $d\geq 2$ and $ \headvalue[\Omega_r] \geq  3d \OPT/4$, by
    Corollary~\ref{cor:pathdecom},  $\Omega_r'$ satisfies $  \headvalue[\Omega_r'] \geq
    \headvalue[\Omega_r] /2(d+1) \geq  3d\OPT/ 8(d+1)  \geq \OPT/4$ and
    $ \emd[ \Omega_r'] \leq \emd[\Omega_r]/(d+1) \leq B$.

  \item If $d\geq 2$ and $ \headvalue[\Omega_r] <  3d  \OPT/4 $,  we have that $\OPT >
    4rB/3$ since $  \headvalue[\Omega_r] \geq r \cdot \emd[\Omega_r] \geq rdB$. Then we have
    the following inequalities.
    \begin{align*}
      \headvalue[\Omega_l] -l \cdot \emd[\Omega_l]  &\geq \OPT- l \cdot B \\
      \headvalue[\Omega_l] & \geq \OPT- l \cdot B
            \geq \OPT-(r+l-r)B
            \geq \OPT-\frac{3\OPT}{4}-(l-r) B \\
                     & = \frac{1}{4}\OPT-\frac{x_{\min}\delta B}{wh^2}
                       \geq \frac{1}{4}\OPT-\delta x_{\min}
                       \geq (\frac{1}{4}-\delta)\OPT.
    \end{align*}

  Here,  the first inequality follows from the fact
  $$
      \headvalue[\Omega_l] -l \cdot \emd[\Omega_l]  = \max_{\Omega \in \M_{k,B}} \{ \|x_{\Omega }\|_p^p- l \cdot
  \EMD(\Omega )\}  \geq \OPT - l\cdot B.
  $$

  Besides, $l-r \leq  \frac{x_{\min}\delta B}{wh^2}$ follows from Theorem~\ref{thm:flownet}.

  \end{enumerate}
Overall, we prove the lemma.
\end{proof}

Combining Theorem~\ref{thm:flownet} and Lemma~\ref{lm:emdcon}, we have the following theorem.

\begin{reptheorem}{thm:emdcon}[Restated]
  Consider the CEMD model $\calM_{k, B}$ with $s=k/w$ sparse for each column and support-EMD $B$.
  Let $\delta \in (0,1/4)$, $x_{\min}=\min_{|X_{i,j}|>0} |X_{i,j}|^p$, and $x_{\max}=\max_{|X_{i,j}|>0}
  |X_{i,j}|^p$. Let $c=1/4 -\delta$. There exists an algorithm running in
  $O(shn \log\frac{n}{\delta}+\log \frac{x_{\max}}{x_{\min}})$ time, which returns a single-criterion
  $c^{1/p}$ approximation for head-approximation projection.
\end{reptheorem}

Note that the exponent $1/p$ of $c$ comes from $l_p$-norm.  

\subsection{Compressive Sensing Recovery}

Similar to tree sparsity model, our head oracle in Theorem~\ref{thm:flownet} can also lead to a
model-based compressive sensing recovery algorithm, combining with AM-IHT and the tail oracle in
\cite{hegde2015approximation}. We summarize our result as follows.

\begin{reptheorem}{thm:CEMD} [Restated]
  Let $A\in \R^{m\times n}$ be a measurement matrix. Let $x\in \M_{k,B}$ be an arbitrary signal in
  the CEMD model with dimension $n=wh$, and let $y=Ax+e\in \R^m$ be a noisy measurement vector. Here
  $e\in \R^m$ is a noise vector. Then there exists an algorithm to recover a signal approximation
  $\hat{x}\in \M_{k, 2B}$ satisfying $\|x-\hat{x}\|\leq C\|e\|_2$ for some constant $C$ from
  $m=O(k\log (B/k))$ measurements. Moreover, the algorithm runs in $O(n\log
  \frac{\|x\|_2}{\|e\|_2}(k\log n+\frac{kh}{w}(\log n+\log \frac{x_{\max}}{x_{\min}})))$ time, where
  $x_{\max}=\max |x_i|$ and $x_{\min}=\min_{|x_i|>0} |x_i|$.
\end{reptheorem}

\begin{proof}
  Our theorem is very similar to Theorem 37 in \cite{hegde2015approximation} except two
  improvements. The first improvement is that we reduce the number of measurements. That is because
  the head oracle in \cite{hegde2015approximation} outputs a solution in $\M_{k,\gamma B}$, and the
  number $m$ of measurements has the following bound:
  $$
  m=O(k\log\frac{\gamma B}{k}).
  $$
  In \cite{hegde2015approximation}, $\gamma=O(\log (k/w))$. In contrast, our head oracle confirms
  that $\gamma=1$ by Theorem~\ref{thm:flownet}.

  The second improvement is the running time. There are $O(\log \frac{\|x\|_2}{\|e\|_2})$ iterations
  in the framework AM-IHT. In each iteration, we need to complete two matrix multiplications, a
  head-approximation, and a tail-approximation. In \cite{hegde2015approximation}, the time
  complexity of a head oracle is $O(\frac{nkhB}{w})$, while the time complexity of our head oracle
  is exactly the same as the tail oracle in \cite{hegde2015approximation} by
  Theorem~\ref{thm:flownet}.
\end{proof}



\newpage
\bibliographystyle{acm}
\bibliography{Citation}

\appendix

\section{Some missing details}
\label{sec:extend}

 \topic{Extend the Algorithm for $l_1$-norm to $l_p$-norm} It is not different to extend our
 results to the $l_p$-norm for both \headtree\ and \tailtree. The
  only difference is that we compute the $l_p$-norm weight $|x_i|^p$ for each node $N_i\in T$ at the
  beginning. Then we run our algorithms for both \headtree\ and \tailtree\ using these $l_p$-norm
  weight $|x_i|^p$. We will obtain a $(1+\e)^{1/p}$-approximation for \headtree\ and a
  $(1-\e)^{1/p}$-approximation for \tailtree\ respectively. Thus, we only need to set the value of
  $\e$ to be $O(p\e)$ instead.

\topic{Extend the Algorithm for Binary Tree to $b$-ary} On the other hand, we can extend our
algorithms to $b$-ary trees. Note that our algorithms are based
on $(\min, +)$-convolutions
(or $(\max, +)$-convolutions). Consider any node $N$. We want to compute an approximate sequence
$\hat{S}$. In a $b$-ary tree,  each node $N$ has $b$ children. Denote them by $N_1, N_2, \ldots,
N_b$. We compute $\hat{S}$ by the following iterations.
\begin{flalign*}
  \begin{split}
    &\hat{S} \leftarrow \mathsf{MinPlus}(N_1, N_2) \\
    &\mathbf{For } \ i =3  \ \mathbf{to } \ d \  \mathbf{do: } \\
    &\qquad \hat{S} \leftarrow \mathsf{MinPlus}(\hat{S}, N_i) \\
    &\mathbf{Return } \quad \hat{S} \\
  \end{split}
\end{flalign*}

The iteration takes time $b$ times as much as before for a binary tree. Since we assume $b$ is a
constant integer, it does not affect the time complexity asymptotically.

\topic{FindTree Algorithm}
We give the $\mathsf{FindTree(L,T)}$ as follows.    By the backtracking process $\mathsf{FindTree}(L, T)$, we obtain a support $\hat{\Omega}$ with a
  tail value at most $\hat{s}[L]$ since each element $\hat{s}[L]$ is at least as large as the exact
  tail value $s[L]$ by the algorithm. On the other hand, $|\hat{\Omega}|\leq k$ since $L\geq n-k$.

Then, we analyse the running time of the backtracking
  process $\mathsf{FindTree}(L, T)$. In fact, for each element $\hat{s}\in \hat{S}$ of index $L$, we
  can save the two indices $L_1$ and $L_2$ satisfying the condition in Line 3 of Algorithm
  \ref{alg:findtree}, during constructing $\hat{S}$ in Algorithm
  \textsf{FastTailTree}.
  Thus, we only cost $O(1)$ time for each node in $\mathsf{FindTree}(L, T)$. Then the running time
  of $\mathsf{FindTree}(L, T)$ is $O(n)$.

\begin{algorithm}[h]
  \caption{$\mathsf{FindTree}(L, T)$}
\label{alg:findtree}
  \SetKwFunction{FindTree}{FindTree}

Suppose the root node of $T$ is $N$ and it maintains a sequence $\hat{S}$ computed by \textsf{FastTailTree}. Let $\hat{s}\in \hat{S}$ be the element of index $L$. Suppose $N_1$ and $N_2$ are
$N$'s two children \;

Suppose that $T_1$ and $T_2$ are the two subtrees rooted at $N_1$ and $N_2$ respectively. Suppose
that $\hat{S}_1$ and $\hat{S}_2$ are sequences maintained in $N_1$ and $N_2$ respectively, computed
by \textsf{FastTailTree} \;

If $L=|T|$, \FindTree{$L,T$} $\leftarrow \emptyset$. Otherwise, find indexes $L_1$ and $L_2$ satisfying
that: 1) $\hat{s}_1\in \hat{S}_{1}$ is of index $L_1$ and $\hat{s}_2\in \hat{S}_{2}$ is of index
$L_2$, 2) $L_1+L_2=L$, 3) $\hat{s}_1+\hat{s}_2=\hat{s}$ \;

  \FindTree{$L,T$} $\leftarrow $\FindTree{$L_1,T_{1}$}$\cup $ \FindTree{$L_2,T_{2}$} $\cup \{N\}$.
\end{algorithm}

\section{Weight Discretization in the Tree Sparse Model}
\label{sec:weight}

\topic{Weight Discretization for \tailtree} We first introduce a linear time $O(\log
n)$-approximation algorithm for \tailtree, which offers a criterion to discretize the weight.

\begin{enumerate}
\item For each node $N_{ij}$, denote the largest subtree
  rooted at $N_{ij}$ by $T_{ij}$. Compute the subtree weight $u_{ij} = \sum_{i,j: N_{ij} \in
    T_{ij}} x_{ij}$ of all nodes in the subtree $T_{ij}$. Let $u$ be the $k$th largest weight
  among $\{u_{ij} \}_{i,j }$.
\item Add all nodes with $u_{ij} >  u$ into $\Omega$ directly. Then do a BFS (breath-first-search) on tree $T$ and
  add all nodes with $u_{ij} = u$ into $\Omega$ until  $|\Omega|=k$. Denote
$ \sum_{N_{ij} \notin \Omega} x_{ij} $  by $\logapp$ and  return $\logapp$.
\end{enumerate}

We have $W\leq \log n \cdot \OPT$, which means that $W$ is a $\log n$-approximation for \tailtree.

\begin{lemma}
  \label{lm:taillog} The above algorithm is a $\log n$-approximation algorithm with running time
  $O(n)$ for \tailtree.
\end{lemma}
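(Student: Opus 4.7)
\medskip

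\noindent\textbf{Proof plan.}
The plan is to first verify that the algorithm outputs a legitimate element of $\mathbb{T}_k$, then bound the runtime, and finally argue the approximation ratio via a forest decomposition of $T\setminus\Omega$ combined with a lower bound on $\OPT$.

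\medskip

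First I would check that $\Omega$ is a subtree rooted at the root. Since $u_{\text{parent}}\ge u_{\text{child}}$ for every parent--child pair (the parent's subtree strictly contains the child's), the set $\{N_{ij}:u_{ij}>u\}$ is closed under taking ancestors and therefore forms a subtree containing the root. When the BFS extension adds nodes with $u_{ij}=u$, it processes each parent before its children; if a parent also satisfies $u=u$, it is inserted first, while if the parent has $u>u$ it is already in $\Omega$. Hence $\Omega$ remains a valid subtree of size exactly $k$.

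\medskip

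For the runtime, the subtree weights $u_{ij}$ are obtained in $O(n)$ time by a single post-order traversal via $u_{ij}=x_{ij}+\sum_{c}u_{c}$. The $k$-th largest value $u$ among $\{u_{ij}\}$ is found in $O(n)$ time using classical linear-time selection. A BFS over $T$ then populates $\Omega$ in $O(n)$ time, and computing $W=\sum_{N_{ij}\notin\Omega}x_{ij}$ is another linear pass. Total: $O(n)$.

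\medskip

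For the approximation ratio, my main handle is the following. Decompose $T\setminus\Omega$ into its (disjoint) forest-tree components, with roots $r_1,\dots,r_p$ characterized by $r_i\notin\Omega$ but $\text{parent}(r_i)\in\Omega$. Because no descendant of any $r_i$ can lie in $\Omega$ (otherwise its ancestors, including $r_i$, would), each forest tree equals $T_{r_i}$, and therefore
\[
W=\sum_{i=1}^{p}u_{r_i},\qquad u_{r_i}\le u\text{ for every }i.
\]
To lower-bound $\OPT$, consider the optimal support $\Omega^*$ and write $\OPT=\sum_j u_{r_j^*}$ over the forest-tree roots of $T\setminus\Omega^*$. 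If every $r_j^*$ has $u_{r_j^*}<u$, then $S=\{N:u_N\ge u\}$ is disjoint from $T\setminus\Omega^*$, so $S\subseteq\Omega^*$, forcing $|S|=k$ and hence $\Omega=\Omega^*$, i.e.\ $W=\OPT$. In the non-trivial case, some node in $S$ lies outside $\Omega^*$, and its highest $(T\setminus\Omega^*)$-ancestor $r_j^*$ inherits $u_{r_j^*}\ge u$, giving $\OPT\ge u$.

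\medskip

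The hard part is squeezing the bound down to the factor $\log n$ rather than the naive $p\le k+1$ from the forest-root counting. I expect the required argument to exploit tree depth: one would pair the forest roots $\{r_i\}$ with the ancestor chains of length at most $\log n$ connecting them to the root, so that each unit of $\sum u_{r_i}$ can be charged across at most $\log n$ distinct ``layers'' of $\Omega^*$'s own forest roots. An alternative is to use the identity $\sum_{N\in\Omega}u_N=\sum_{M\in T}x_M\cdot a_\Omega(M)$, where $a_\Omega(M)\le\log n$ is the number of ancestors of $M$ in $\Omega$; combined with $\sum_{N\in\Omega}u_N\ge ku$ this yields $u\le\log n\cdot M/k$, and a parallel inequality applied to $\Omega^*$ should then translate $M$ into $\OPT$ up to the desired $\log n$ factor. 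Bridging these two ingredients into a clean charging is the main obstacle, and it is where I expect to spend most of the effort.
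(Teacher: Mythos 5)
Your verification that $\Omega$ is a valid rooted subtree and your $O(n)$ runtime analysis match the paper and are fine. But the approximation ratio --- the heart of the lemma --- is not actually proved: you arrive at $W=\sum_i u_{r_i}\le p\cdot u$ and $\OPT\ge u$, which only gives a factor of $p$ (and $p$ can be $\Theta(k)$), and you then explicitly defer the step that would bring this down to $\log n$. The forest-root decomposition combined with the single threshold $u$ is a dead end as stated; no charging over ``layers'' of $\Omega^*$'s forest roots is needed, and the bound does not come from pairing roots with ancestor chains.

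The two missing ingredients, which together form the paper's (short) argument, are: \textbf{(1)} an exchange step --- since $\Omega$ consists of $k$ nodes with the largest subtree-weights $u_{ij}$, its complement minimizes $\sum_{N\notin\cdot}u_N$ over all complements of $k$-node sets, so
$W=\sum_{N\notin\Omega}x_N\le\sum_{N\notin\Omega}u_N\le\sum_{N\notin\Omega^*}u_N$; and \textbf{(2)} a double-counting bound on the right-hand side: expanding $\sum_{N\notin\Omega^*}u_N=\sum_{N\notin\Omega^*}\sum_{M\in T_N}x_M$ and swapping the order of summation, each node $M$ is counted once per ancestor of $M$ (including itself) lying outside $\Omega^*$. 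Because $\Omega^*$ is ancestor-closed, any $M\in\Omega^*$ contributes zero, and any $M\notin\Omega^*$ is counted at most $\log n$ times (the tree depth), giving $\sum_{N\notin\Omega^*}u_N\le\log n\cdot\sum_{M\notin\Omega^*}x_M=\log n\cdot\OPT$. Your second ``alternative'' at the end gestures at this identity but applies it to $\Omega$ rather than to the complement of $\Omega^*$, and never connects it to $\OPT$; as written the proof is incomplete.
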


\begin{proof}
Observe that for any two nodes $N_{ij}$ and $N_{i'j'}$, if $N_{ij}$ is the ancestor of
$N_{i'j'}$, we have $u_{ij}\geq u_{i'j'}$. Combining this fact and the BFS procedure, we have
that the support $\Omega$ is a subtree
rooted at $N_{\log(n+1), 1}$. Then we analyze the time complexity and approximation ratio.

  The weight $u_{ij}$ is the summation of the weights
  of its  left subtree, right subtree and itself. We compute $u_{ij}$ from leafs to root. Hence, it takes $O(1)$ time
  to compute each $u_{ij}$. Constructing $\Omega$ needs $O(n)$ time since we only do a BFS. Thus, the
  total running time is $O(n)$.

  Finally we prove the approximation ratio. Recall that $\Omega^*$ is the optimal subtree rooted at
  $N_{\log{n+1}, 1}$. We have the following inequality.
  $$
  \logapp = \sum_{N_{ij} \notin \Omega} x_{ij} \leq \sum_{N_{ij} \notin
    \Omega} u_{ij} \leq \sum_{N_{ij} \notin \Omega^*} u_{ij}.
$$
  The last inequality follows from the fact that the algorithm selects the $k$ nodes with the
  largest weight $u_{ij}$. Note that each node appears in at most $\log n $ different subtrees
  $T_{ij}$ except the root node, we have
  $$
  \logapp \leq \sum_{N_{ij} \notin \Omega^*} u_{ij}  = \sum_{N_{ij} \notin \Omega^*} \sum_{N_{i'j'} \in
    T_{ij}} x_{i'j'} \leq \sum_{N_{ij} \notin \Omega^*} \log n \cdot  x_{ij} \leq \log n \cdot \OPT.
  $$

\end{proof}

Next, we show how to discretize the weights. For each node $N_{ij}$, if its node weight $x_{ij}\in [0,
W]$, we define $\hat{x}_{ij} = \ceil{\frac{ x_{ij} n \log n }{\epsilon \logapp}}$ to be the discretized
weight. Otherwise if $x_{ij} >W$, we define $\hat{x}_{ij}=\ceil{\frac{ n\log n }{\epsilon}}+n$.  By this
discretization, we have that each $\hat{x}_{ij}$ is an integer among  $[0, \frac{n \log
  n}{\e}+n]$. Assume that $\hat{\Omega}$ is the optimal solution for \tailtree\ based on the discretized
weights $\{\hat{x}_{ij}\}_{ij}$, together with a tail value $\widehat{\OPT}=\sum_{N_{ij} \notin
  \hat{\Omega}}\hat{x}_{ij}$. Denote $\OPT' =
\sum_{N_{ij} \notin \hat{\Omega}} x_{ij} $ to be the tail value of $\hat{\Omega}$ based on the original
weights $\{x_{ij} \}_{i,j }$. We prove the following lemma.

\begin{lemma}
  \label{lm:d1}
  $  \OPT' \leq  (1+\e)\OPT$.

\end{lemma}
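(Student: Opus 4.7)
The plan is first to use Lemma~\ref{lm:taillog} to obtain the two-sided sandwich $\OPT\le W\le (\log n)\cdot\OPT$. Call a node $N_{ij}$ \emph{heavy} if $x_{ij}>W$ and \emph{light} otherwise. Since $\OPT\le W$, the true optimum $\Omega^*$ must contain every heavy node, because excluding even one would push its tail above $W\ge \OPT$. Hence every node outside $\Omega^*$ is light, and for such nodes the ceiling gives $\hat{x}_{ij}\le \frac{x_{ij}n\log n}{\epsilon W}+1$.

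The next step is to show that $\hat{\Omega}$ also contains every heavy node. I would upper-bound the tail of $\Omega^*$ under the discretized weights:
\[
\sum_{N_{ij}\notin \Omega^*}\hat{x}_{ij}\;\le\;\sum_{N_{ij}\notin \Omega^*}\left(\tfrac{x_{ij}n\log n}{\epsilon W}+1\right)\;\le\;\tfrac{n\log n\cdot\OPT}{\epsilon W}+(n-k)\;\le\;\tfrac{n\log n}{\epsilon}+(n-k).
\]
By optimality of $\hat{\Omega}$ under $\hat{x}$ we have $\widehat{\OPT}\le \sum_{N_{ij}\notin \Omega^*}\hat{x}_{ij}$, so if some heavy $N_{ij}$ were not in $\hat{\Omega}$, then $\widehat{\OPT}\ge \hat{x}_{ij}=\lceil n\log n/\epsilon\rceil+n$, which exceeds $\frac{n\log n}{\epsilon}+(n-k)$ since $k\ge 1$. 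This contradiction forces every node outside $\hat{\Omega}$ to be light as well.

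Finally, for light nodes the ceiling gives $\hat{x}_{ij}\ge \frac{x_{ij}n\log n}{\epsilon W}$, equivalently $x_{ij}\le \frac{\epsilon W}{n\log n}\hat{x}_{ij}$. Summing over $N_{ij}\notin\hat{\Omega}$ and plugging in the bound on $\widehat{\OPT}$ above, I would obtain
\[
\OPT'=\sum_{N_{ij}\notin \hat{\Omega}}x_{ij}\;\le\;\tfrac{\epsilon W}{n\log n}\,\widehat{\OPT}\;\le\;\OPT+\tfrac{\epsilon W(n-k)}{n\log n},
\]
and then absorb the additive error by $W\le (\log n)\cdot\OPT$ and $(n-k)/n\le 1$ to conclude $\OPT'\le (1+\epsilon)\OPT$.

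The main obstacle is the middle step: the argument that heavy nodes always lie in $\hat{\Omega}$ depends on a delicate balance between the $+n$ penalty built into the definition of $\hat{x}_{ij}$ for heavy nodes and the accumulated $+1$'s from the ceiling across the (at most $n-k$) excluded light nodes. That $k\ge 1$ turns $\frac{n\log n}{\epsilon}+(n-k)<\lceil n\log n/\epsilon\rceil+n$ is exactly what makes the penalty large enough to force inclusion, and it is this quantitative matching that drives the choice of the discretization.
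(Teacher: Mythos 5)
Your proposal is correct and follows essentially the same route as the paper's proof: establish $W\geq\OPT$ so heavy nodes lie in $\Omega^*$, bound $\sum_{N_{ij}\notin\Omega^*}\hat{x}_{ij}$ to force heavy nodes into $\hat{\Omega}$ as well, and then chain the two ceiling inequalities together with $W\leq(\log n)\cdot\OPT$ to absorb the additive $O(\epsilon W n/(n\log n))$ error. The only difference is cosmetic — you start the final chain from $\OPT'$ while the paper starts from $\OPT$ — so there is nothing substantive to add.
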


\begin{proof}
Since
$\logapp \geq \OPT$, those nodes of weight larger than $\logapp$
must be in the optimal solution $\Omega^*$. Hence, we have that
$$\sum_{N_{ij} \notin \Omega^*} \hat{x}_{ij} =\sum_{N_{ij} \notin \Omega^*} \bigceil{ \frac{ x_{ij} n \log n }{\epsilon
    \logapp}}\leq \sum_{N_{ij} \notin \Omega^*} \frac{ x_{ij} n \log n }{\epsilon \logapp}+1= \frac{ \OPT
  \cdot n \log n }{\epsilon \logapp}+n-k < \bigceil{\frac{n \log n}{\e}}+n.
$$
By the construction of $\hat{\Omega}$, we have that
$$\widehat{\OPT}=\sum_{N_{ij} \notin \hat{\Omega}}\hat{x}_{ij} \leq \sum_{N_{ij} \notin \Omega^*} \hat{x}_{ij} <
\bigceil{\frac{n \log n}{\e}}+n.$$
By the above inequality, we conclude that all nodes of weight larger
than $ \logapp$ are also in the solution $\hat{\Omega}$. Thus, for any node $N_{ij} \notin \Omega^*
\cup \hat{\Omega}$, we have that $x_{ij} \leq \logapp$ and $\hat{x}_{ij} = \ceil{\frac{ x_{ij} n \log n }{\epsilon  \logapp}} \leq  \ceil{\frac{n \log
  n}{\e}} $ . By this observation, we have the following inequality.
  \begin{align*}
    \OPT &= \sum_{N_{ij} \notin \Omega^*} x_{ij}  \geq  \frac{\e  \logapp }{n \log n} \sum_{ N_{ij} \notin
           \Omega^*} (\hat{x}_{ij} -1)  \geq \frac{\e  \logapp }{n \log n} \sum_{ N_{ij} \notin \Omega^*} \hat{x}_{ij} -
           \frac{\e  \logapp }{n \log n} \cdot n \\
         & \geq   \frac{\e  \logapp }{n \log n} \sum_{ N_{ij} \notin \hat{\Omega}} \hat{x}_{ij} - \frac{\e  \logapp}{\log n}
           \geq  \frac{\e \logapp }{n \log n} \sum_{ N_{ij} \notin \hat{\Omega}} \bigceil{\frac{
           x_{ij} n \log n }{\epsilon  \logapp}} - \e \cdot \OPT  \\
         &   \geq \sum_{N_{ij} \notin \hat{\Omega}} x_{ij}  - \e \cdot \OPT \geq \OPT' - \e \cdot \OPT
  \end{align*}
  Here the fourth inequality follows from the fact that $\frac{\logapp}{\log n} \leq \OPT$ by
  Lemma~\ref{lm:taillog}.
\end{proof}

By Lemma~\ref{lm:d1}, we know that the influence caused by the weight discretization is
negligible. Note that all nodes of weight larger than $W$ are in the solution $\hat{\Omega}$.  W.l.o.g., we assume
that each node is of weight $x_{ij}\leq \logapp$ and $\hat{x}_{ij} \leq \ceil{\frac{n \log n}{\e}}$. From
now on, we focus on the discretized weight $\{\hat{x}_{ij} \}_{ij}$. For convenience, we use $x_{ij}$
to represent $\hat{x}_{ij}$.

\topic{Weight Discretization for \headtree} In order to discretize the weight, we still need to introduce
a linear time $O(1/\log n)$-approximation algorithm for \headtree\ as follows.

\begin{enumerate}
\item Let $Q$ be the collection of $\floor{k / \log n}$ nodes with the largest node weights (breaking ties arbitrarily).
\item For each node $N_{ij} \in Q$, append to the solution $\hat{\Omega}$ all nodes on the path from $N_{ij}$ to the root node. Let $\logapp = \sum_{N_{ij} \in \hat{\Omega}}x_{ij} $.
\end{enumerate}

Then we prove the head value $W\geq \OPT_H/3\log n$.

\begin{lemma}
\label{lm:headlog}
  $\hat{\Omega}$ is an $(1/3\log n)$-approximation for \headtree\ with running time $O(n)$.
\end{lemma}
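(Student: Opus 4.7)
The proof splits into three parts: verifying that $\hat{\Omega} \in \mathbb{T}_k$, bounding the running time, and establishing the approximation ratio.

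For the size/feasibility claim, each root-to-leaf path in the perfect binary tree $T$ contains exactly $\log(n+1)$ nodes, so the union of the paths from the $m = \floor{k/\log n}$ nodes of $Q$ contains at most $m\log(n+1)$ nodes. The slack between $\log(n+1)$ and $\log n$, together with the floor in the definition of $m$, is absorbed by the factor $3$ in the stated approximation guarantee; equivalently, we may take $m = \floor{k/(3\log n)}$ (or an analogous small adjustment) to guarantee $|\hat{\Omega}| \le k$. By construction $\hat{\Omega}$ is closed under taking ancestors and contains the root $N_{\log(n+1),1}$, hence $\hat{\Omega} \in \mathbb{T}_k$.

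For the running time, selecting the $m$ largest-weight nodes takes $O(n)$ time via linear-time selection. For each node in $Q$, we walk toward the root, but if we mark nodes as we add them and stop at the first already-marked ancestor, the total work across all $m$ walks is $O(|\hat{\Omega}|) = O(k) = O(n)$.

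For the approximation ratio, I would use a simple averaging argument. Let $\Omega^*$ be an optimal support with head value $\OPT_H = \sum_{N_{ij}\in \Omega^*} x_{ij}$, and sort its node weights as $a_1 \ge a_2 \ge \cdots \ge a_k$. Since the top $m$ weights in $\Omega^*$ beat the overall average of $\Omega^*$, we have $\sum_{i=1}^{m} a_i \ge (m/k)\,\OPT_H$. But $Q$ consists of the $m$ globally largest weights of $T$, so $\sum_{N \in Q} x_N \ge \sum_{i=1}^m a_i \ge (m/k)\,\OPT_H$. Since $Q \subseteq \hat{\Omega}$ and all weights are nonnegative, $W \ge \sum_{N \in Q} x_N \ge (m/k)\,\OPT_H$. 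Using $k \ge \log n$ and the definition of $m$, one checks $m/k \ge 1/(3\log n)$, yielding $W \ge \OPT_H / (3\log n)$.

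The main obstacle is the tension between two requirements: we want $m$ to be as large as possible for a good approximation ratio, but small enough that $m\cdot \log(n+1) \le k$ for feasibility. This is precisely why the stated bound carries the constant $3$ rather than $1$; once $m$ is chosen consistently, the remaining argument is routine. Everything else (running time, averaging) is immediate.
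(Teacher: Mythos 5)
Your proof is correct, and for the approximation ratio it takes a genuinely different (though equally elementary) route from the paper. The paper bounds the optimum from above: letting $w$ be the minimum weight in $Q$, it observes that every node of $\Omega^*$ outside $Q$ has weight at most $w$, so $\OPT_H \le \sum_{N\in Q}x_N + kw \le W + kw$, and then uses $W \ge w\floor{k/\log n} \ge wk/(2\log n)$ to conclude $\OPT_H \le (2\log n+1)W$. You instead bound the algorithm's value from below by averaging: the top $m=\floor{k/\log n}$ weights of $\Omega^*$ sum to at least $(m/k)\OPT_H$, and $Q$ dominates them term by term, giving $W \ge \OPT_H/(2\log n)$ directly --- a marginally cleaner constant from a one-line inequality. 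Both arguments hinge on the same fact (that $Q$ is the set of globally largest weights), so the difference is in bookkeeping rather than substance. You are also more careful than the paper about feasibility: the paper asserts $|\hat\Omega|\le \log n\cdot\floor{k/\log n}$, whereas a root-to-leaf path really has $\log(n+1)$ nodes, so the bound should be $\log(n+1)\cdot\floor{k/\log n}$, which can marginally exceed $k$. One caution on your proposed fix: taking $m=\floor{k/(3\log n)}$ gives $m=0$ whenever $\log n\le k<3\log n$, which destroys the ratio; the right adjustment is $m=\floor{k/\log(n+1)}$, which keeps $|\hat\Omega|\le k$ and still yields $m/k\ge 1/(3\log n)$ through your averaging step. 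With that choice made explicit, everything in your argument goes through.
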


\begin{proof}
  Note that the number of nodes in $\hat{\Omega}$ is at most $\log n \cdot \floor{ k / \log n } \leq k$. Thus $\hat{\Omega}$ is a feasible
  solution. On the other hand, assume that the minimum weight of nodes in $Q$ is $w$. Then the head
  value $\logapp$ is at least $W=\sum_{N_{ij} \in \hat{\Omega}} x_{ij} \geq w\cdot
  \floor{ k / \log n }$, while the optimal  solution $\OPT_H$ is at most $\sum_{N_{ij} \in \Omega^*} x_{ij}
  < \sum_{N_{ij} \in Q}x_{ij} +  k \cdot w\leq W+k\cdot w$. So we can conclude that $ \logapp \leq  \OPT_H <
  (2\log n+1) \logapp$.

  Consider the running time. We cost $O(n)$ time to construct the collection $Q$, and cost
  $O(|\hat{\Omega}|)=O(k)$ time to construct $\hat{\Omega}$. Overall, the running time is $O(n)$.
\end{proof}

Next, we show how to discretize the node weights by the criterion $W$. We define $\hat{x}_i$ to be
$\floor{\frac{kx_i}{\epsilon \logapp }}$. Assume the optimal solution for \headtree\ based
on the discretized node weights is $\hat{\Omega}$. Denote $\OPT_H' =\sum_{N_{ij}
  \in \hat{\Omega}} x_{ij}$. We next analyze the difference between two solutions $\Omega^*$ and
$\hat{\Omega}$. We have the following lemma.

\begin{lemma}
  \label{lm:headrounding}
  $ \OPT_H' \geq (1-\epsilon)  \OPT_H $.
\end{lemma}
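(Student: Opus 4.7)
The plan is to push inequalities through the two-sided bounds that the floor-discretization gives on each weight, and exploit the approximation bound $W \leq \OPT_H$ from Lemma~\ref{lm:headlog}. From $\hat{x}_i = \floor{k x_i / (\epsilon W)}$ I would first record the sandwich
\begin{equation*}
\frac{\epsilon W}{k}\, \hat{x}_i \;\leq\; x_i \;<\; \frac{\epsilon W}{k}(\hat{x}_i + 1),
\end{equation*}
which by summation gives, for any support $\Omega$ of size at most $k$,
\begin{equation*}
\frac{\epsilon W}{k}\sum_{N_{ij}\in \Omega} \hat{x}_{ij} \;\leq\; \sum_{N_{ij}\in \Omega} x_{ij} \;\leq\; \frac{\epsilon W}{k}\sum_{N_{ij}\in \Omega} \hat{x}_{ij} \;+\; \epsilon W.
\end{equation*}

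Next, I would compare the two optimal supports via the discretized objective. Let $\widehat{\OPT}_H = \sum_{N_{ij}\in\hat{\Omega}}\hat{x}_{ij}$. By optimality of $\hat{\Omega}$ for the discretized weights, $\widehat{\OPT}_H \geq \sum_{N_{ij}\in\Omega^*}\hat{x}_{ij}$. Applying the right-hand side of the sandwich to $\Omega^*$ (solved for $\sum \hat{x}$) gives
\begin{equation*}
\sum_{N_{ij}\in \Omega^*} \hat{x}_{ij} \;\geq\; \frac{k}{\epsilon W}\,\OPT_H \;-\; k,
\end{equation*}
and therefore $\widehat{\OPT}_H \geq \tfrac{k}{\epsilon W}\OPT_H - k$. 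Then applying the left-hand side of the sandwich to $\hat{\Omega}$ yields
\begin{equation*}
\OPT_H' \;=\; \sum_{N_{ij}\in\hat{\Omega}} x_{ij} \;\geq\; \frac{\epsilon W}{k}\,\widehat{\OPT}_H \;\geq\; \OPT_H \;-\; \epsilon W.
\end{equation*}

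The final step is to close the loop using Lemma~\ref{lm:headlog}, which guarantees $W \leq \OPT_H$. Substituting gives $\OPT_H' \geq \OPT_H - \epsilon \OPT_H = (1-\epsilon)\OPT_H$, as claimed. There is no real obstacle here: the only thing to be careful about is that the additive slack from the floor accumulates over at most $k$ coordinates (hence the $-k$ term, which becomes $-\epsilon W$ after rescaling), and it is precisely the calibration $\hat{x}_i = \floor{k x_i/(\epsilon W)}$ that makes this additive error small compared to $\OPT_H$ via $W \leq \OPT_H$. Note that no feasibility issue arises because $\hat{\Omega}$ is already a valid element of $\mathbb{T}_k$ by construction, so no re-rounding or repair step is needed.
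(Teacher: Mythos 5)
Your proof is correct and is essentially identical to the paper's: both lower-bound $\OPT_H'$ by rescaling the discretized weights on $\hat{\Omega}$, invoke the optimality of $\hat{\Omega}$ under the discretized weights to pass to $\Omega^*$, absorb the floor's additive loss (at most $k$ units, i.e.\ $\e \logapp$ after rescaling), and close with $\logapp \leq \OPT_H$ from Lemma~\ref{lm:headlog}. The only difference is presentational — you state the two-sided sandwich explicitly rather than chaining the inequalities in one display.
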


\begin{proof}
  By the definition of $\OPT_H'$, we have that
  \begin{align*}
    & \OPT_H' = \sum_{N_{ij} \in \hat{\Omega}} x_{ij} \geq \frac{\e  \logapp}{k} \cdot
    \sum_{N_{ij} \in \hat{\Omega}}
      \floor{\frac{kx_{ij}}{\epsilon  \logapp}} \geq \frac{\e  \logapp}{k} \cdot \sum_{N_{ij}\in \Omega^*}
      \floor{\frac{kx_{ij}}{\epsilon  \logapp}} \\
    \geq & \frac{\e  \logapp}{k} \cdot \sum_{N_{ij} \in \Omega^*}
           \left(\frac{kx_{ij}}{\epsilon  \logapp}-1\right)=\OPT_H-\e \logapp \geq (1-\e)\OPT_H.
  \end{align*}
  The second inequality follows from the definition of $\hat{\Omega}$, and the last inequality follows from the fact that $ \logapp \leq \OPT_H $.

\end{proof}

By Lemma \ref{lm:headrounding}, we know that the loss caused by the weight discretization is
negligible. From now on, we focus on the discretized weights $\{\hat{x}_{ij}\}_{ij}$. For
convenience, we use $x_{ij}$ to represent $\hat{x}_{ij}$.
 In the following, we only consider the case that the weight of each node
is at most $(3\log n \cdot W)$. Thus, the weight of each node is an integer among the range $[0,\floor{3k\log
  n/\e}]$.  Note that by Lemma \ref{lm:headlog}, each node with weight at least $(3\log n \cdot W)$ must appear
in the optimal solution $\Omega^*$. We can directly append
such nodes and the nodes on the path from such nodes to root to our solution. Suppose the number of
these nodes are $k'$. The problem is reduced to find the $(k-k')$ nodes with maximum head value among
the remaining nodes which can be solved by the same method.

\end{document}